\documentclass[lettersize,journal]{IEEEtran}

\usepackage{tikz}
\usetikzlibrary{positioning}
\usepackage{amsmath}

\usepackage{mathtools}         

\usepackage{amssymb,amsfonts}  
\usepackage{amsthm}
\allowdisplaybreaks
\usepackage{graphicx}
\graphicspath{{figures/}}

\usepackage{booktabs}
\usepackage{multirow}
\usepackage{makecell}
\usepackage{array}
\usepackage{siunitx}
\usepackage{algorithm}
\usepackage{algpseudocode}     

\usepackage{stmaryrd}
\SetSymbolFont{stmry}{bold}{U}{stmry}{m}{n}
\usepackage{xcolor}
\usepackage{balance}
\usepackage{capt-of}

\usepackage{cite}
\usepackage{url}

\theoremstyle{plain}
\newtheorem{theorem}{Theorem}
\newtheorem{proposition}{Proposition}

\theoremstyle{definition}

\newcommand{\system}{EncFormer}

\usepackage[colorlinks=true,linkcolor=blue,citecolor=blue,urlcolor=blue]{hyperref}

\title{\system: Secure and Efficient Transformer Inference over Encrypted Data}
\author{Yufan~Zhu$^{*}$, Chao~Jin$^{\dagger}$, Khin~Mi~Mi~Aung$^{\dagger}$, and Xiaokui~Xiao$^{*}$%
\thanks{$^{*}$National University of Singapore, Singapore.}%
\thanks{$^{\dagger}$Agency for Science, Technology and Research (A*STAR), Singapore.}}
\begin{document}
\maketitle
\begin{abstract}
Transformer inference in machine-learning-as-a-service (MLaaS) raises privacy concerns for sensitive user inputs. Prior secure solutions that combine fully homomorphic encryption (FHE) and secure multiparty computation (MPC) are bottlenecked by inefficient FHE kernels, communication-heavy MPC protocols, and expensive FHE--MPC conversions. We present \system{}, a two-party private Transformer inference framework that introduces Stage Compatible Patterns so that FHE kernels compose efficiently, reducing repacking and conversions. \system{} also provides a cost analysis model built around a minimal-conversion baseline, enabling principled selection of FHE--MPC boundaries. To further reduce communication, \system{} proposes a secure complex CKKS--MPC conversion protocol and designs communication-efficient MPC protocols for nonlinearities. With GPU optimizations, evaluations on GPT- and BERT-style models show that \system{} achieves 1.4$\times$--30.4$\times$ lower online MPC communication and 1.3$\times$--9.8$\times$ lower end-to-end latency against prior hybrid FHE--MPC systems, and 1.9$\times$--3.5$\times$ lower end-to-end latency on BERT-base than FHE-only pipelines under a matched backend, while maintaining near-plaintext accuracy on selected GLUE tasks.

\end{abstract}
\begin{IEEEkeywords}
private inference, homomorphic encryption, secure multiparty computation, transformer inference
\end{IEEEkeywords}

\section{Introduction}

Transformer models such as BERT~\cite{devlin2019bert} and GPT~\cite{radford2019language} are increasingly deployed through machine-learning-as-a-service platforms, where user inputs are processed by an untrusted model provider. Sensitive domains such as clinical NLP~\cite{huang2019clinicalbert}, legal text~\cite{chalkidis2020legal}, biomedical mining~\cite{lee2019biobert}, and financial compliance~\cite{yang2023finchainbert} amplify the risk, as both inputs and predictions may be confidential, and attacks such as membership inference~\cite{shokri2017membership} can extract private information from model access alone. Private inference based on fully homomorphic encryption (FHE) and secure multiparty computation (MPC) is a natural fit, but current Transformer pipelines remain expensive because their performance depends not only on the cryptographic primitive but also on how ciphertext layouts, conversion boundaries, and nonlinear protocols interact across the pipeline~\cite{pang2024bolt,kei2025shaft,xu2025blb,moon2025thor,park2025powerformer}.

Prior work on private Transformer inference follows three broad directions. FHE-only systems eliminate online interaction but pay heavily for deep ciphertext computation and, in many cases, bootstrapping~\cite{chen2022thex,moon2025thor,park2025powerformer}. MPC-only systems preserve accurate nonlinear computation but are dominated by repeated interaction on wide activations~\cite{luo2024secformer,kei2025shaft}. Hybrid FHE--MPC systems balance these trade-offs by evaluating wide linear algebra under FHE and delegating nonlinear blocks to MPC, but they inherit a second design problem: every benefit from moving work across the boundary depends on whether the resulting packing can be consumed by the next stage and whether the added conversion cost is justified~\cite{hao2022iron,pang2024bolt,lu2025bumblebee,xu2025blb}.

\subsection{Challenges in Hybrid Private Transformer Inference}

For hybrid pipelines, the main bottlenecks are no longer isolated to a single primitive. First, FHE kernel efficiency depends on packing compatibility across adjacent stages, because a locally efficient kernel can still lose its gain if its output must be repacked before the next FHE operation. Second, FHE--MPC conversion is expensive enough that boundary placement cannot be chosen by intuition alone: using a richer boundary representation may reduce MPC rounds, but only if the extra ciphertext payload and CKKS work are repaid by downstream savings. Third, secure conversion protocols in prior hybrid systems under-utilize CKKS complex slots by exporting only the real part, which increases boundary traffic even when the surrounding pipeline already uses complex packing.

These observations suggest that private Transformer inference should be framed as a secure systems co-design problem rather than as a collection of independent optimizations. The system must expose a consistent cross-stage packing discipline, a secure and communication-efficient conversion protocol, and a boundary decision rule that explains when a non-minimal conversion layout is worthwhile.

\subsection{Our Contribution}

We present \system{}, a two-party hybrid FHE--MPC framework for private Transformer inference in the semi-honest model. The client holds the private input and CKKS secret key, the server holds the model, and the full protocol composes FHE kernels, secure conversion boundaries, and MPC nonlinear blocks. Our main thesis is that hybrid inference becomes more predictable when these components are co-designed around stage compatibility and boundary cost.

We make three integrated contributions.
\begin{itemize}
\item \textbf{Packing-co-designed FHE kernels.}
We introduce Stage Compatible Patterns (SCP), which restrict the pipeline to three canonical packing formats and treat packing as a cross-stage contract. Guided by SCP, \system{} designs a shared plaintext--ciphertext projection kernel and folded-diagonal ciphertext--ciphertext attention kernels whose outputs are directly consumable by downstream stages, avoiding ciphertext remaps between adjacent FHE kernels.
\item \textbf{Boundary co-design and secure conversion.}
We introduce a secure complex CKKS--MPC conversion protocol that uses both the real and imaginary channels to carry two real vectors per ciphertext. While Powerformer uses complex numbers to fuse generic blockwise matrix permutations in its attention kernels, \system{} instead exploits the diagonal and token-shift structure of attention to produce specialized packed layouts (folded-diagonal, head-major) that chain across stages via SCP. \system{} also uses complex conversion at FHE--MPC boundaries to halve the conversion payload. We pair this with a calibrated decision rule for boundary placement under the validated PhantomFHE setting.
\item \textbf{End-to-end secure inference pipeline.}
We implement \system{} on GPUs for GPT2-base, BERT-base, and BERT-large, combining FHE linear kernels with communication-efficient MPC protocols for softmax, layer normalization, and GELU. Evaluations show that \system{} reduces online communication by 30.4$\times$ vs.\ BOLT, 2.6$\times$ vs.\ BumbleBee, and 1.4$\times$ vs.\ BLB, while improving end-to-end latency by 9.8$\times$, 2.1$\times$, and 1.3$\times$ on average, respectively. \system{} also outperforms FHE-only pipelines on BERT-base by 3.5$\times$ over THOR and 1.9$\times$ over Powerformer under a matched backend. On BERT-base, encrypted inference maintains near-plaintext accuracy on selected GLUE tasks (SST-2, MRPC, RTE). Section~\ref{sec:prelim_model} gives the pipeline security proof; kernel derivations, backend-specific comparisons, and supporting analytical details appear in the appendix at the end of this paper.
\end{itemize}

\subsection{System Overview}

Figure~\ref{fig:block-structure} summarizes one \system{} encoder block. The server first applies a shared plaintext--ciphertext projection kernel to produce $Q$, $K$, and $V$ in layouts chosen for their immediate consumers: $Q$ and $K$ are emitted in the score-friendly segment order, while $V$ is emitted in head-major order. The score kernel exports minimal folded-diagonal ciphertexts across the CKKS--MPC boundary, where the MPC softmax block operates on secret shares and returns weights in the same boundary format. The value kernel consumes these weights directly, and the output projection absorbs the remaining layout adaptation through plaintext weight pre-permutation rather than ciphertext repacking. The feed-forward projections follow the same principle, while MPC is reserved for layer normalization and GELU. This organization is the structural basis for SCP, complex conversion, and the boundary cost analysis developed in the following sections.

\begin{figure}[t]
  \centering
  \includegraphics[width=\linewidth]{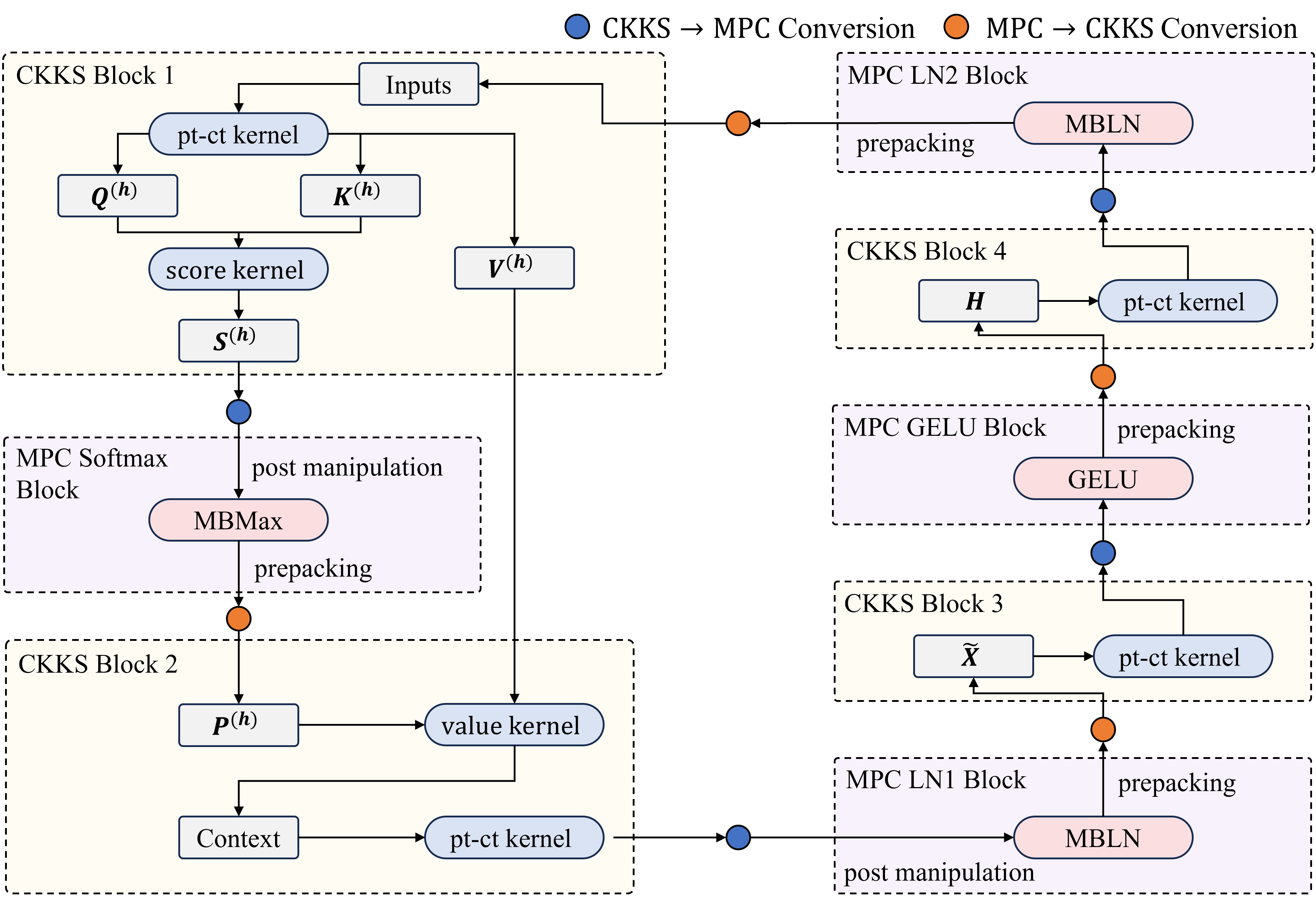}
  \caption{\system{} encoder block showing FHE kernels, MPC nonlinear blocks, and CKKS--MPC conversion boundaries.}
  \label{fig:block-structure}
\end{figure}

\section{Background and Threat Model}
\label{sec:preliminaries}

\subsection{Cryptographic Preliminaries}
\label{sec:prelim_crypto}

\noindent\textbf{Fully Homomorphic Encryption.}
A public-key FHE scheme provides algorithms
$\mathsf{KeyGen}\!\to\!(\mathsf{pk},\mathsf{sk},\mathsf{evk})$,
$\mathsf{Enc}_{\mathsf{pk}}(\cdot)$,
$\mathsf{Dec}_{\mathsf{sk}}(\cdot)$, and
$\mathsf{Eval}_{\mathsf{evk}}(f,\cdot)$ such that, for an arithmetic circuit~$f$,
\[
\mathsf{Dec}_{\mathsf{sk}}\!\bigl(\mathsf{Eval}_{\mathsf{evk}}(f,\mathsf{Enc}_{\mathsf{pk}}(x))\bigr)
=
\begin{cases}
f(x) \bmod t,\\
f(x)+\varepsilon.
\end{cases}
\]
The first line corresponds to exact schemes such as BFV and BGV. The second corresponds to approximate schemes such as CKKS, where $\varepsilon$ captures approximation from encoding, decoding, and noise growth~\cite{fan2012somewhat,brakerski2012leveled,cheon2017ckks}. We use a leveled CKKS instantiation without bootstrapping, implemented on the GPU-accelerated libraries PhantomFHE~\cite{yang2024phantom} and Liberate.FHE~\cite{LiberateFHE}. We refer to~\cite{SEAL41,zheng2023primer} for broader FHE background. The algorithm \(\mathsf{KeyGen}\) samples a public key for encryption, a secret key for decryption, and evaluation keys derived from the secret key. In RLWE-based FHE, \(\mathsf{evk}\) typically includes relinearization keys to manage ciphertext growth after multiplication and Galois keys that enable SIMD slot permutations such as rotations and conjugation. The algorithm \(\mathsf{Enc}_{\mathsf{pk}}\) encodes and encrypts a plaintext into a ciphertext, \(\mathsf{Eval}_{\mathsf{evk}}\) applies additions, multiplications, and slot permutations directly on ciphertexts using only public evaluation material, and \(\mathsf{Dec}_{\mathsf{sk}}\) recovers the plaintext result~\cite{gentry2009a,brakerski2012leveled,halevi2014algorithms}.

\noindent\textbf{Secure Multiparty Computation.}
Secure multiparty computation allows parties to jointly evaluate a function over private inputs while revealing only the prescribed outputs. We use two-party arithmetic secret sharing over $\mathbb{Z}_{2^\ell}$ and represent real values in fixed point with scale $s=2^F$, where $F$ is the fractional bit-width and $\ell$ is the ring bit-width. Linear maps are computed locally on shares. To multiply shared values, we use Beaver triples $(a, b, c)$ with $c=ab$~\cite{beaver1992efficient}. Given shared $x$ and shared $y$, the parties open $\alpha=x-a$ and $\beta=y-b$, then obtain a sharing of the product by setting $z=xy$ and computing
\[
z^{\mathrm{sh}} = c^{\mathrm{sh}} + \alpha\,b^{\mathrm{sh}} + \beta\,a^{\mathrm{sh}} + \alpha\beta,
\]
where the public term $\alpha\beta$ is incorporated into one party's share. For fixed-point arithmetic, multiplying two $s$-scaled values produces a value at scale $s^2$, so we apply an interactive secure truncation protocol that right shifts by $F$ bits to return to scale $s$. This yields an online phase with one interaction round per multiplication depth and communication proportional to circuit size, as in SPDZ-style protocols~\cite{Damgard2012SPDZ}. In the semi-honest model, security is captured by the real/ideal paradigm~\cite{yao1986how,goldreich1987how}.

In \system{}, we use a hybrid FHE--MPC setting. CKKS provides SIMD-style approximate arithmetic on vectors packed into ciphertext slots, supporting ciphertext addition, ciphertext/plaintext multiplication, ciphertext/ciphertext multiplication, rotations, and conjugation~\cite{cheon2017ckks}. We adopt CKKS because Transformer inference is dominated by real-valued linear algebra and already operates in finite precision, so it tolerates small approximation error. This choice is consistent with prior hybrid FHE--MPC Transformer systems~\cite{hao2022iron,pang2024bolt,xu2025blb,zhang2025nexus}. CKKS's scale-and-rescale workflow matches the way Transformer layers accumulate depth through chained linear maps, and its SIMD packing over complex slots enables batching and reduces ciphertext count for matrix workloads. We exploit complex packing inside our FHE kernels with SCP and at FHE--MPC conversion boundaries. For nonlinear layers, we use the MPC building blocks above. The resulting latency is sensitive to the number of rounds and the amount of data crossing each MPC boundary, which is why boundary design matters as much as primitive selection in a hybrid pipeline.

\subsection{CKKS Parameters}
\label{sec:prelim_ckks}

We instantiate FHE with the RNS--CKKS scheme of Cheon et al.~\cite{cheon2017ckks}, which supports approximate arithmetic on vectors packed into ciphertext slots. CKKS operates over the cyclotomic ring $R_Q=\mathbb{Z}_Q[X]/(X^N+1)$ for power-of-two $N$, and packs $n=N/2$ complex slots via an approximate encoding map. Figure~\ref{fig:rns-ckks} illustrates the residue number system (RNS) representation and how computation proceeds across the modulus chain. In its RNS form, the ciphertext modulus is represented as a product of machine-word primes $Q=\prod_{i=0}^{L} q_i$, and polynomial coefficients are stored and processed as residues modulo each $q_i$, enabling efficient NTT-based arithmetic and scale management across the chain.

\begin{figure}[t]
  \centering
  \includegraphics[width=0.8\linewidth]{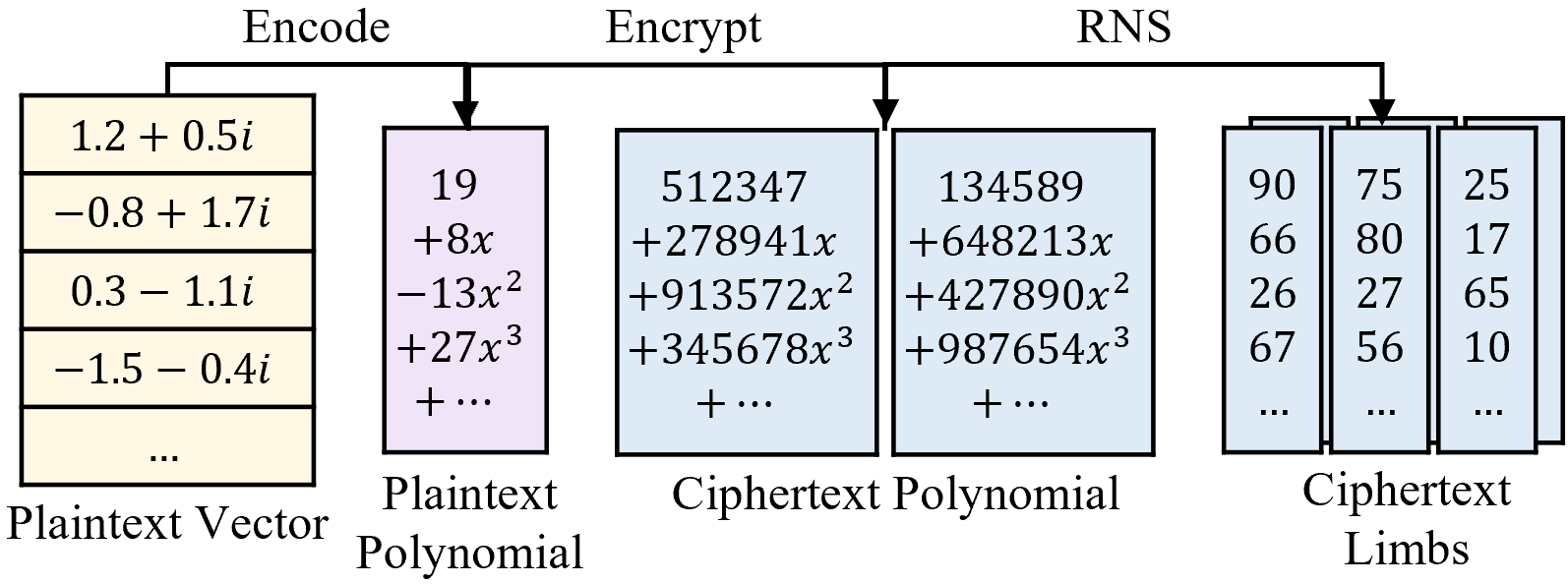}
  \caption{Illustration of RNS--CKKS, showing the residue representation and the modulus-chain progression under rescaling.}
  \label{fig:rns-ckks}
\end{figure}

Practical deployments fix a modulus chain \(Q=\prod_{i=0}^{L} q_i\) and a target scale \(\Delta\) sized for multiplicative depth \(D\le L\). Let \(Q_i=\prod_{j=0}^{i} q_j\) denote the remaining ciphertext modulus at level \(i\). A ciphertext--ciphertext multiplication inflates the scale by roughly a factor of \(\Delta\) and is followed by \(\mathsf{rescale}\), which maps the ciphertext modulus from \(Q_i\) to \(Q_{i-1}\) by dividing by \(q_i\) and restores the scale to approximately \(\Delta\). After \(u\) multiplication--rescale steps, the remaining modulus is \(Q_{L-u}\). Ciphertext size and remaining multiplicative depth both depend on the active level of this chain, so boundary-conversion cost depends not only on tensor shape but also on the CKKS parameters at the conversion point. Our cost model therefore treats ciphertext size as a function of the ring degree and the remaining RNS limbs.

\subsection{Threat Model}
\label{sec:prelim_model}

We consider the standard two-party private inference setting with a client $P_0$ and a server $P_1$. The client provides the private input and learns the inference output; the server provides the model. We assume semi-honest adversaries: both parties follow the protocol but may try to infer additional information from the transcript. As in prior hybrid systems, the model architecture is known to both parties~\cite{huang2022cheetah,juvekar2018gazelle,lu2025bumblebee,pang2024bolt,xu2025blb}.

\noindent\textbf{Key Distribution.}
Following prior hybrid systems~\cite{juvekar2018gazelle,pang2024bolt,xu2025blb,lu2025bumblebee}, the client generates the CKKS key material $(\mathsf{pk},\mathsf{sk},\mathsf{evk})$ and sends $(\mathsf{pk},\mathsf{evk})$ to the server, while retaining the secret key locally. This is the standard key-distribution pattern in two-party hybrid inference~\cite{juvekar2018gazelle,pang2024bolt,xu2025blb}: the client holds the secret key so that decryption occurs only on the client side, and the server never observes plaintext activations. During CKKS-to-MPC conversion, the client decrypts only one-time-padded ciphertexts, so the plaintext observed by either party is statistically or computationally protected by the underlying conversion protocol.

The inference protocol is organized as a sequence of components $\Pi_{\mathsf{Infer}}=\Pi_1\circ\cdots\circ\Pi_K$, where each $\Pi_j$ is a CKKS evaluation block, an MPC block over $\mathbb{Z}_{2^\ell}$, or an FHE--MPC conversion boundary. Every intermediate value is represented either as a CKKS ciphertext under the secret key or as additive secret shares held by the two parties.
\begin{theorem}[Pipeline Security]
  \label{thm:pipeline-security}
  If CKKS is semantically secure and the MPC protocols and complex conversion protocols are secure in the semi-honest threat model, then $\Pi_{\mathsf{Infer}}$ securely realizes the ideal inference functionality in the semi-honest model.
\end{theorem}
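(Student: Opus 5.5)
The plan is a standard hybrid argument in the real/ideal paradigm, using sequential composition in the semi-honest setting. First I fix the ideal functionality $\mathcal{F}_{\mathsf{Infer}}$: it takes the client input $x$ and the server model $M$, and returns $M(x)$ (up to the prescribed fixed-point/CKKS approximation) to $P_0$, and nothing beyond the public architecture to $P_1$. For each component $\Pi_j$ I then define an ideal sub-functionality $\mathcal{F}_j$ over the invariant maintained between components: every intermediate value is either a CKKS ciphertext under $\mathsf{sk}$ held by the server or a fresh uniform additive sharing over $\mathbb{Z}_{2^\ell}$. The sub-functionalities are of three kinds.
\begin{itemize}
\item CKKS evaluation blocks are local server computations and need no functionality.
\item MPC blocks realize $\mathcal{F}_{\mathrm{nonlin}}$, which maps shares of an input to fresh shares of softmax, LayerNorm, or GELU of that input.
\item Conversion boundaries realize $\mathcal{F}_{\mathrm{H2S}}$ or $\mathcal{F}_{\mathrm{S2H}}$. $\mathcal{F}_{\mathrm{H2S}}$ maps a ciphertext encrypting the real and imaginary vectors $(u,v)$ to fresh shares of both $u$ and $v$. $\mathcal{F}_{\mathrm{S2H}}$ maps shares to a fresh encryption under $\mathsf{pk}$ given to the server.
\end{itemize}
By the hypotheses of Theorem~\ref{thm:pipeline-security}, each nontrivial $\Pi_j$ securely realizes its $\mathcal{F}_j$. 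The modular sequential composition theorem for semi-honest protocols then reduces the claim to showing that the $(\mathcal{F}_1,\dots,\mathcal{F}_K)$-hybrid protocol securely realizes $\mathcal{F}_{\mathsf{Infer}}$.

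I would next build the two simulators in that hybrid world.

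\textbf{Corrupted client.} The simulator $\mathcal{S}_0$ receives $x$ and $M(x)$. It generates keys honestly. Every share the client receives from an $\mathcal{F}_j$ is a uniformly random element of $\mathbb{Z}_{2^\ell}$, so $\mathcal{S}_0$ samples it at random. The final reconstruction step is programmed so that the client's output share combines with the simulated server share to give $M(x)$. The client never sees a ciphertext it did not itself produce, except inside the conversion functionalities, which have already been abstracted away. The simulated view is therefore identically distributed to the real one, up to the statistical masking slack of the conversions.

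\textbf{Corrupted server.} The simulator $\mathcal{S}_1$ receives $M$ only. It replaces the client's input ciphertexts, and every ciphertext output of $\mathcal{F}_{\mathrm{S2H}}$, with encryptions of $0$ under $\mathsf{pk}$, and it samples the server's shares uniformly. The server's local CKKS evaluations are deterministic functions of these ciphertexts and $(\mathsf{pk},\mathsf{evk})$, so the simulator recomputes them itself.

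Indistinguishability of the server's view follows from a sequence of hybrids, one per ciphertext, each replacing one encryption of a real value with an encryption of $0$. Each step reduces to CKKS semantic security. This requires that the server's view contain $\mathsf{evk}$ but no function of $\mathsf{sk}$ beyond that, so semantic security must be taken in the presence of the relinearization, Galois, and conjugation keys. I will state this as the assumed form of the circular-security/IND-CPA assumption for CKKS with $\mathsf{evk}$, as is standard in the cited hybrid systems.

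The step I expect to be the main obstacle is the CKKS--MPC boundary, specifically the approximate nature of CKKS. Decryption of a masked ciphertext can leak information through the noise term $\varepsilon$, in the spirit of IND-CPA$^{\mathrm{D}}$ attacks, and complex packing carries two vectors per ciphertext, so both channels must be masked. I would not reprove this here. Instead I would make explicit that the hypothesis ``complex conversion protocols are secure'' means the protocol realizes $\mathcal{F}_{\mathrm{H2S}}$ including noise flooding, or re-randomization with statistical distance $2^{-\kappa}$, applied independently to the real and imaginary masks. The total distinguishing advantage is then bounded by $K$ times the per-component advantage plus $\mathrm{poly}(\kappa)$ CKKS advantages, which is negligible.

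Correctness of the composed output, namely that $\mathcal{F}_{\mathsf{Infer}}$ is defined with the same approximation, follows by composing the functionalities in order.
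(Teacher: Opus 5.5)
Your proposal is correct and takes essentially the paper's route: the paper also gets a simulator for each component from that component's assumed security (semantic security for CKKS, semi-honest security for the MPC blocks and conversions), composes these simulators in order, and concludes with a hybrid argument over the $K$ components. Yours is a more careful version of that sketch. It works in the $(\mathcal{F}_1,\dots,\mathcal{F}_K)$-hybrid model via sequential composition, applies semantic security ciphertext-by-ciphertext to what the server actually receives, and makes explicit assumptions the paper leaves implicit (semantic security in the presence of $\mathsf{evk}$, noise flooding in the conversions). One small fix is needed: MBLN is purely local here, so you should treat it as glue computation like the CKKS blocks, not as a functionality that returns fresh shares.
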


\begin{proof}
  Fix a party $P_b$ for $b\in\{0,1\}$.
  Let $\mathsf{view}_b$ denote its real-execution view and $\mathsf{out}$ the prescribed client output.
  We construct a simulator $\mathcal{S}_b$ such that
  $\bigl\{\mathcal{S}_b(\mathsf{in}_b,\mathsf{out}),\mathsf{out}\bigr\}\equiv_c
    \bigl\{\mathsf{view}_b,\mathsf{out}\bigr\}$.
  Each component $\Pi_j$ is secure by assumption, so it admits a simulator $\mathcal{S}^{(j)}_b$ producing a transcript indistinguishable from the real one: for CKKS blocks this follows from semantic security, for MPC blocks from semi-honest MPC security, and for conversion boundaries from Section~\ref{sec:complex-conversion}. The global simulator $\mathcal{S}_b$ composes these component simulators in order. A standard hybrid argument over $\mathsf{H}_0,\ldots,\mathsf{H}_K$ (where $\mathsf{H}_j$ replaces the first $j$ transcripts with simulated ones) yields $\mathsf{H}_0\equiv_c \mathsf{H}_K$, showing that the composed protocol leaks nothing beyond the prescribed output.
\end{proof}

\subsection{Transformer Model}
\label{transformer_models}

The Transformer is a sequence model built around self-attention, replacing recurrence and convolution with token--token interactions that can be computed in parallel~\cite{Vaswani2017Attention}. For a sequence of length \(m\) with hidden size \(d_{\mathrm{model}}\) and activations \(A\in\mathbb{R}^{m\times d_{\mathrm{model}}}\), the attention sublayer forms \(Q=AW_Q\), \(K=AW_K\), and \(V=AW_V\) with \(W_Q,W_K,W_V\in\mathbb{R}^{d_{\mathrm{model}}\times d_{\mathrm{model}}}\). With \(H\) heads of width \(d_h=d_{\mathrm{model}}/H\), write \(Q^{(h)},K^{(h)},V^{(h)}\in\mathbb{R}^{m\times d_h}\) for the \(h\)th head slices. A compact head computation is
\[
\mathrm{Att}^{(h)} \;=\;
\mathrm{softmax}\!\left(\frac{Q^{(h)}(K^{(h)})^\top}{\sqrt{d_h}}\right)V^{(h)}.
\]
The multi-head output concatenates all heads and applies an output projection \(W_O\in\mathbb{R}^{d_{\mathrm{model}}\times d_{\mathrm{model}}}\),
\[
\mathrm{MHA}(A) \;=\; \mathrm{Concat}\!\bigl(\mathrm{Att}^{(0)},\dots,\mathrm{Att}^{(H-1)}\bigr)W_O.
\]
This is followed by a position-wise feed-forward sublayer \(\mathrm{FFN}(x)=\phi(xW_1+b_1)W_2+b_2\) using \(\phi=\mathrm{GELU}\), with residual connections and layer normalization throughout~\cite{Vaswani2017Attention}. Figure~\ref{fig:transformer-arch} shows the reference architecture used in this work.

BERT is an encoder-only Transformer that stacks \(N_L\) identical layers to build bidirectional context~\cite{devlin2019bert}. We evaluate BERT-base with \(N_L=12\), \(d_{\mathrm{model}}=768\), intermediate size \(3072\), and \(H=12\), and BERT-large with \(N_L=24\), \(d_{\mathrm{model}}=1024\), intermediate size \(4096\), and \(H=16\), where both have \(d_h=64\). We also evaluate GPT2-base, a decoder-only Transformer with causal self-attention~\cite{radford2019language}, which matches BERT-base in \(N_L\), \(d_{\mathrm{model}}\), \(H\), and intermediate size. These dimensions determine the packing sizes, boundary tensor shapes, and kernel schedules used in the later sections.

\begin{figure}[t]
  \centering
  \includegraphics[width=\linewidth]{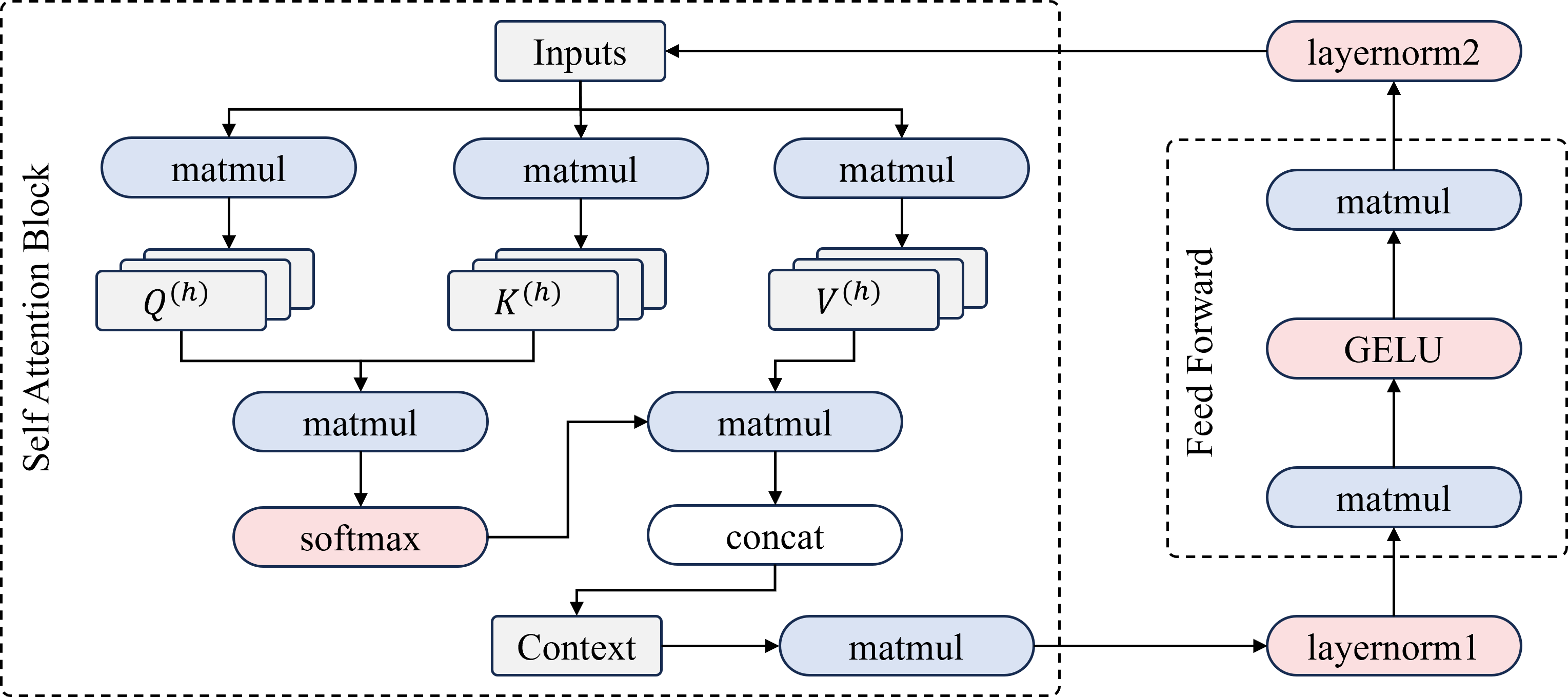}
  \caption{Reference Transformer architecture used in this work, highlighting attention, feed-forward layers, and the output head.}
  \label{fig:transformer-arch}
\end{figure}

\section{Packing-Co-Designed FHE Kernels}
\label{sec:ks-he}

This section presents the FHE side of \system{}. We first define the packing notation and CKKS primitives used by the kernels, then introduce Stage Compatible Patterns (SCP) as the cross-stage layout discipline, and finally describe the shared projection kernel and the ciphertext--ciphertext attention kernels.

A key computational characteristic of the Transformer architecture is the attention block in Section~\ref{transformer_models}, whose runtime is dominated by large linear projections and matrix multiplications. Following prior hybrid Transformer systems~\cite{hao2022iron,pang2024bolt,zhang2025nexus,lu2025bumblebee,xu2025blb}, \system{} evaluates these linear components under FHE.

\subsection{Packing Model and Stage Compatibility}
\label{sec:ks-he:packing}

Let $n$ denote the number of slots in one CKKS ciphertext.
We partition the $n$ slots into $N_{\mathrm{seg}} = n/m$ contiguous segments, each containing $m$ slots, so that $m N_{\mathrm{seg}} = n$.
We write bold lowercase letters for slot vectors in $\mathbb{C}^n$.
For a vector $\mathbf{v}\in\mathbb{C}^n$, we denote its CKKS encryption by $\langle \mathbf{v}\rangle$.
For $\mathbf{v},\mathbf{w}\in\mathbb{C}^n$, we use $\mathbf{v}+\mathbf{w}$ and $\mathbf{v}\odot\mathbf{w}$ for elementwise addition and multiplication, and $\overline{\mathbf{v}}$ for complex conjugation.
Let $\rho(\mathbf{v};r)$ denote the cyclic left shift of $\mathbf{v}$ by $r$ positions, where $r\in\mathbb{Z}$ and indices are taken modulo $n$.
\[
\rho(\mathbf{v};r) \;=\; (v_r, v_{r+1}, \dots, v_{n-1}, v_0, \dots, v_{r-1}).
\]

\begin{table}[t]
\centering
\footnotesize
\setlength{\tabcolsep}{5pt}
\renewcommand{\arraystretch}{1.15}
\caption{Homomorphic operations on CKKS ciphertexts.}
\label{tab:he_ops}
\begin{tabular}{@{}l|ccccc@{}}
\toprule
 primitive & \textsf{add} & \textsf{ptmul} & \textsf{ctmul} & \textsf{rot} & \textsf{conj} \\
\midrule
Inputs
& $\bigl(\langle\mathbf{v}\rangle,\,u\bigr)$
& $\bigl(\langle\mathbf{v}\rangle,\,\mathbf{w}\bigr)$
& $\bigl(\langle\mathbf{v}\rangle,\,\langle\mathbf{w}\rangle\bigr)$
& $\bigl(\langle\mathbf{v}\rangle;\,r\bigr)$
& $\bigl(\langle\mathbf{v}\rangle\bigr)$ \\
Output
& $\langle \mathbf{v}+u\rangle$
& $\langle \mathbf{v}\odot \mathbf{w}\rangle$
& $\langle \mathbf{v}\odot \mathbf{w}\rangle$
& $\langle \rho(\mathbf{v};r)\rangle$
& $\langle \overline{\mathbf{v}}\rangle$ \\
\bottomrule
\end{tabular}
\end{table}

To express segment structure, let $\mathbf{x}\in\mathbb{C}^n$ be a slot vector and index slots by a within-segment coordinate $r\in\{0,\dots,m-1\}$ and a segment coordinate $s\in\{0,\dots,N_{\mathrm{seg}}-1\}$.
Define the reshape operator $\operatorname{mat}:\mathbb{C}^n\to\mathbb{C}^{m\times N_{\mathrm{seg}}}$ by
\[
\bigl(\operatorname{mat}(\mathbf{x})\bigr)_{r,s} \;=\; \mathbf{x}_{s\,m + r},
\]
and let $\operatorname{vec}:\mathbb{C}^{m\times N_{\mathrm{seg}}}\to\mathbb{C}^n$ be its inverse that stacks columns back into a length $n$ vector.
Thus $\operatorname{vec}(\operatorname{mat}(\mathbf{x}))=\mathbf{x}$ and $\operatorname{mat}(\operatorname{vec}(\mathbf{X}))=\mathbf{X}$ for any $\mathbf{X}\in\mathbb{C}^{m\times N_{\mathrm{seg}}}$.

\noindent\textbf{Ciphertext transformation operations.}
Following the transformation notation of Jiang et al.~\cite{jiang2018secure}, for $\mathbf{X}\in\mathbb{C}^{m\times N_{\mathrm{seg}}}$ we define a segment shift $\phi^\Delta$ and an intra-segment shift $\psi^t$ by
\[
\begin{aligned}
\phi^\Delta(\mathbf{X})_{r,s} &= \mathbf{X}_{r,\,(s+\Delta)\bmod N_{\mathrm{seg}}},\\
\psi^t(\mathbf{X})_{r,s} &= \mathbf{X}_{(r+t)\bmod m,\,s}.
\end{aligned}
\]
These shifts induce permutations on slot vectors,
\[
\begin{aligned}
\Phi^\Delta(\mathbf{x}) &= \operatorname{vec}\!\bigl(\phi^\Delta(\operatorname{mat}(\mathbf{x}))\bigr),\\
\Psi^t(\mathbf{x}) &= \operatorname{vec}\!\bigl(\psi^t(\operatorname{mat}(\mathbf{x}))\bigr).
\end{aligned}
\]
On ciphertexts, we apply the corresponding slot permutations and write
$\Phi^\Delta(\langle \mathbf{x}\rangle)=\langle \Phi^\Delta(\mathbf{x})\rangle$ and $\Psi^t(\langle \mathbf{x}\rangle)=\langle \Psi^t(\mathbf{x})\rangle$.
Intuitively, $\Phi^\Delta$ shifts by whole segments and can be realized as $\textsf{rot}(\langle \mathbf{x}\rangle;\Delta m)$, while $\Psi^t$ shifts within each length $m$ segment.
Concrete realizations can be found in \hyperref[app:he-shifts]{Appendix~A.1}.
When only the first $C<N_{\mathrm{seg}}$ segments are active and wrap-around is required within the active region, we use the restricted segment rotation $\Phi_C^\Delta$, also detailed in \hyperref[app:he-shifts]{Appendix~A.1}. When an object $X$ is represented by multiple ciphertext blocks, we write
$\langle X\rangle=\{\langle \mathbf{x}^{(\ell)}\rangle\}_{\ell}$.
We overload $\Phi^\Delta(\cdot)$ and $\Psi^t(\cdot)$ to act elementwise on lists.

\noindent\textbf{Minimal and expanded packings.}
Let $x$ be a real-valued vector with $N(x)$ logical scalar entries, where $N(x)$ excludes repetitions introduced for broadcasting, alignment, or interleaving. Under complex packing, each slot stores two independent real scalars, one in the real part and one in the imaginary part.
The natural lower bound on ciphertext count is
\[
K_{\min}(x) \;=\; \left\lceil \frac{N(x)}{2n} \right\rceil .
\]
A packing of $x$ into $K$ ciphertexts is information-preserving if it stores all $N(x)$ entries without loss.
A packing is \emph{minimal} if it is information-preserving and attains $K = K_{\min}(x)$, and \emph{expanded} if it is information-preserving but uses $K > K_{\min}(x)$.
Figure~\ref{fig:packing_compare} illustrates the distinction on an example.

\begin{figure}[t]
  \centering
  \vspace{-2pt}
  \includegraphics[width=\linewidth]{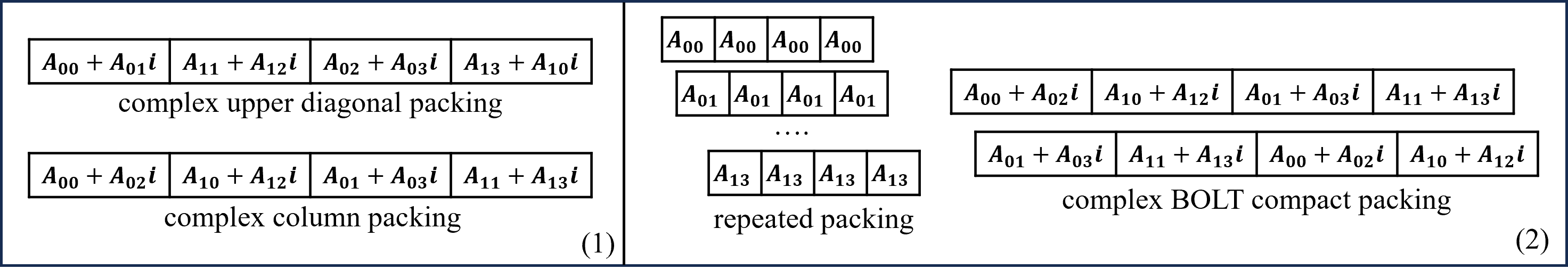}
  \caption{Packing comparison for a $2\times4$ matrix with $n=4$ slots per ciphertext. Minimal packing uses $K_{\min}=1$ ciphertext, whereas expanded packing uses more than $K_{\min}$ ciphertexts.}
  \label{fig:packing_compare}
\end{figure}

\subsubsection{Stage Compatible Patterns}
\label{sec:ks-he:contracts}
A hybrid FHE--MPC pipeline has two sources of avoidable overhead. First, adjacent FHE kernels may use incompatible packings, forcing an expensive ciphertext remap between them. Second, FHE--MPC boundaries may use more ciphertexts than necessary, inflating conversion communication. Stage Compatible Patterns address both problems through three rules applied to a canonical packing family $\mathcal{F}=\{\text{segment-column},\allowbreak\ \text{folded-diagonal},\allowbreak\ \text{head-major}\}$.

Each FHE kernel declares which format it accepts and which it produces. The three SCP rules are:
\begin{enumerate}
  \item \textbf{FHE$\to$FHE.} The producer outputs in the format the consumer expects, so no repacking is needed.
  \item \textbf{MPC$\to$FHE.} Data returning from MPC enters the next FHE kernel in minimal packing.
  \item \textbf{FHE$\to$MPC.} Data leaving an FHE kernel for MPC is exported in minimal packing.
\end{enumerate}
Rule~1 eliminates FHE-side repacking. Rules~2 and~3 minimize the number of ciphertexts at each conversion boundary, directly reducing communication. The quantitative impact of Rules~2 and~3 on boundary cost is analyzed in Section~\ref{sec:complex-conversion}.

\begin{proposition}[Stage Compatible Patterns]
\label{prop:scp-zero-repacking}
Under the kernel interfaces of \system{}, SCP achieves (i)~zero FHE remap stages on all edges, and (ii)~minimal ciphertext count $K_{\min}$ at every FHE--MPC boundary. Violating Rule~1 on an FHE--FHE edge incurs $\Omega(m\log m)$ extra rotations per ciphertext~\cite{halevi2014algorithms}. Violating Rules~2 or~3 increases boundary conversion count beyond $K_{\min}$. The proof is in \hyperref[app:scp-proof]{Appendix~F}.
\end{proposition}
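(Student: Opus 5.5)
The plan is to prove the proposition by inspecting the finite list of edges in one encoder block, using the interfaces fixed in the system overview. Each edge is an FHE$\to$FHE, FHE$\to$MPC, or MPC$\to$FHE edge. Since every block is identical, and the residual and layer-normalization outputs re-enter the next block through MPC$\to$FHE, it suffices to treat a single block and then induct over the $N_L$ layers.

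\textbf{Step 1: enumerate the FHE$\to$FHE edges and check format agreement.} I would tabulate the declared (input, output) format of each kernel in $\mathcal{F}$:
\begin{itemize}
\item the shared projection kernel takes segment-column input and produces segment-column $Q,K$ in score order, together with head-major $V$;
\item the score kernel consumes $Q,K$ and emits folded-diagonal scores;
\item the value kernel consumes folded-diagonal weights and head-major $V$;
\item the output projection consumes the value-kernel output, with the remaining permutation absorbed into the plaintext pre-permuted $W_O$;
\item the FFN projections follow the same pattern.
\end{itemize}
On each edge I would verify that the producer's output format literally equals the consumer's accepted format. The key observation is that any needed permutation acts on plaintext weights, and permuting a plaintext matrix costs no homomorphic rotation. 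This gives (i): zero ciphertext remap stages.

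\textbf{Step 2: boundary minimality.} For each FHE$\to$MPC export (scores to softmax, and the projection outputs to LayerNorm and GELU), I would count the logical entries $N(x)$. Examples are $H m^2$ for the scores (or $Hm(m+1)/2$ in the causal case) and $m\,d_{\mathrm{model}}$ for activations. I would then show that the exported layout fills both the real and imaginary channels of every slot, except possibly in the last ciphertext. It follows that $K=\lceil N(x)/2n\rceil=K_{\min}(x)$.

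For the folded-diagonal format, this amounts to showing that the folding is a bijection onto $2n$-entry blocks. Concretely, the $m$ diagonals of an $m\times m$ head are distributed over segments with no duplication or padding, given that $m$ divides $n$. The MPC$\to$FHE direction is symmetric: the client re-encrypts shares in the same minimal layout, which is precisely the consumer's declared input.

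\textbf{Step 3: the lower bounds for violations.} For Rules 2 and 3 the claim is immediate: $K_{\min}$ is a counting lower bound, so any non-minimal (expanded) packing has $K>K_{\min}$ by definition, and the number of conversions equals $K$. For Rule 1, I would argue that a mismatched edge requires a general slot permutation between two of the formats, for example a transpose-like map between segment-column and folded-diagonal layouts. By the Halevi--Shoup analysis \cite{halevi2014algorithms}, realizing such a permutation within segments of length $m$ needs $\Omega(\log m)$ rotation levels, each applied across the $m$ positions of masked components, for $\Omega(m\log m)$ rotations per ciphertext. This Rule 1 bound is the step I expect to be the main obstacle. Rotations can be shared or batched, so a rigorous lower bound is not really available from that source. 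I would first state it as the cost of the standard Beneš-network realization, and then argue that the induced permutations between distinct members of $\mathcal{F}$ are not products of a constant number of shifts $\Phi^\Delta,\Psi^t$. Any such argument must also handle the degenerate case where a violation happens to be cheap, and I would dispose of that by restricting to permutations that genuinely mix intra-segment positions.
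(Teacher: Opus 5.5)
Your outline follows the paper's Appendix~F argument almost step for step. Claim (i) comes from checking the three FHE$\to$FHE edges and pushing every channel reordering into plaintext weight pre-permutation. The Rules~2/3 clause comes essentially from the definition of $K_{\min}$; your Step~2 is more substantive here than the paper, which takes minimality from the rules themselves. The Rule~1 clause rests on an appeal to Halevi--Shoup.

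\textbf{The gap: the Rule~1 bound.} Your proposed repair does not close it.
A masked rotation moves all $m$ positions of every segment at once. A Bene\v{s}/Halevi--Shoup network therefore costs $O(1)$ rotations per level, hence $O(\log n)$ rotations per ciphertext, and only $O(\log m)$ for permutations within segments.
Your count ``$\Omega(\log m)$ levels, each applied across the $m$ positions'' charges a factor $m$ that SIMD makes free. Stating the bound ``as the cost of the standard Bene\v{s} realization'' treats an \emph{upper} bound of logarithmic order as a lower bound of order $m\log m$, which it actually contradicts.
The idea in your last sentence can be made rigorous, but only logarithmically. Write each ciphertext computed from $\langle\mathbf{x}\rangle$ by rotations, plaintext multiplications and additions as $\sum_{r\in D}\textsf{rot}(\langle\mathbf{x}\rangle;r)\odot\mathbf{w}_r$. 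A rotation translates $D$, an addition takes a union, and masking never enlarges $D$. So the union of all displacement sets at most doubles per rotation, and a remap whose permutation needs $d$ distinct displacements costs at least $\log_2 d$ rotations.
That matches the $\log_2 m=7$ rotations per block (168 per layer) the paper actually uses. The paper's own proof reaches $\Omega(m\log m)$ only by multiplying $\log_2 m$ by the $\Theta(m)$ affected slots per block, which is not a rotation count. So the Rule~1 clause is provable only in logarithmic form, and $B\log_2 m$ per edge is the cost of the paper's per-block repack, not a lower bound.

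\textbf{Three smaller corrections to Step~2.}
First, the kernels' native outputs are not minimal, so ``the exported layout fills both channels'' must be checked after a boundary step you omit.
\begin{itemize}
\item The projections emit real-only ciphertexts and need the rotation-free complexify wrapper.
\item The score kernel emits $m/2$ ciphertexts $\langle\mathbf{S}_t\rangle$ with only the first $Hm$ slots informative ($64$ ciphertexts against $K_{\min}(S)=6$ for BERT-base). $K_{\min}$ is reached only after the cut-and-concatenate compaction of Appendix~A.3, which relocates slots homomorphically.
\end{itemize}
Hence (i) must be read, as the paper's proof reads it, as a statement about FHE$\to$FHE edges only.

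Second, do not switch to $N(S)=Hm(m+1)/2$ in the causal case. Folded-diagonal packing stores all $m$ cyclic diagonals in full, so against that count it is not minimal in general. Claim (ii) holds with $N(S)=Hm^2$, as the paper fixes it.

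Third, the MPC$\to$FHE edge is not ``symmetric''. The value kernel consumes a head-major order (segment $\tilde h(m/2)+t$), not the $t$-major export stream $\mathrm{cut}(\langle\mathbf{S}_0\rangle)\Vert\cdots$. This costs nothing only because both parties can permute their share vectors locally before $\Pi_{\mathrm{M2C}}^{\mathbb{C}}$, and that is the fact Rule~2 should rest on.
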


\subsection{Plaintext--Ciphertext Projection Kernel}
\label{sec:ks-he:ctpt}

For plaintext--ciphertext multiplication, we use a shared kernel for linear projections $Y=XW$, which instantiates a CKKS plaintext--ciphertext matrix multiplication under segment-column packing using standard baby-step-giant-step (BSGS) optimization~\cite{xu2024privcirnet,pang2024bolt} and complex-packing optimizations~\cite{park2025powerformer}.

Let $X\in\mathbb{R}^{m\times d_{\mathrm{in}}}$ and $W\in\mathbb{R}^{d_{\mathrm{in}}\times d_{\mathrm{out}}}$.
A CKKS ciphertext has $n$ slots, viewed as $N_{\mathrm{seg}}=n/m$ contiguous segments of length $m$ via
$\bigl(\operatorname{mat}(\mathbf{x})\bigr)_{r,c}=\mathbf{x}[r+cm]$.
We activate the first $C=c_{\mathrm{used}}\le N_{\mathrm{seg}}$ segments so one ciphertext packs $C$ columns.
We group input channels into $G=\lceil d_{\mathrm{in}}/C\rceil$ ciphertexts and output channels into
$B_{\mathrm{out}}=\lceil d_{\mathrm{out}}/C\rceil$ ciphertexts.

For each $g=0,\dots,G-1$, we encrypt $\langle \mathbf{x}^{(g)}\rangle$ such that segment $c\in\{0,\dots,C-1\}$ stores the length-$m$ column $X[:,\,gC+c]$.
For each $b=0,\dots,B_{\mathrm{out}}-1$, the output ciphertext $\langle \mathbf{y}^{(b)}\rangle$ stores columns
$Y[:,\,bC],\dots,Y[:,\,bC+C-1]$ in the same layout.
Projection weights are encoded as CKKS plaintext vectors as detailed in \hyperref[app:he-kernels:proj]{Appendix~A.2}.
We pre-permute plaintext weight columns so each projection emits the order consumed by its downstream kernel directly: $Q$ and $K$ use the score-friendly segment order, while $V$ uses the value-friendly head-major order.

\noindent\textbf{Complex pairs compression.}
We pack two real input groups into one complex ciphertext by setting $U=\lceil G/2\rceil$ and, for each $u\in{0,\dots,U-1}$, defining the complexified input $\langle \tilde{\mathbf{x}}^{(u)}\rangle$ as
\[
\langle \tilde{\mathbf{x}}^{(u)} \rangle
=
\left\langle \mathbf{x}^{(2u)} + i\,\mathbf{x}^{(2u+1)} \right\rangle.
\]
We call this packing operation complexification and call the reverse operation decomplexification.

\noindent\textbf{BSGS optimization.}
Choose $N_1 \mid C$ and set $N_2 = C/N_1$.
Let $\Phi_C$ denote a segment cyclic shift within the first $C$ active segments.
For each $u$ and $q\in\{0,\dots,N_1-1\}$ define $\langle \tilde{\mathbf{x}}^{(u)}_q\rangle=\Phi_C^q(\langle \tilde{\mathbf{x}}^{(u)}\rangle)$.
For each output block $b$ and giant index $p\in\{0,\dots,N_2-1\}$ we accumulate a ciphertext
\[
\langle \tilde{\mathbf{c}}^{(b)}_p\rangle
=
\sum_{q=0}^{N_1-1}\sum_{u=0}^{U-1}
\left(\langle \tilde{\mathbf{x}}^{(u)}_q\rangle \odot \tilde{\mathbf{w}}^{(b)}_{u,p,q}\right),
\]

We then decomplexify each giant-step accumulator by extracting its real part, and combine the $N_2$ partial sums with segment shifts to obtain $\langle \mathbf{y}^{(b)}\rangle$:

\[
\begin{aligned}
\langle \mathbf{c}^{(b)}_p\rangle
&=\tfrac12\Bigl(\langle \tilde{\mathbf{c}}^{(b)}_p\rangle+\textsf{conj}(\langle \tilde{\mathbf{c}}^{(b)}_p\rangle)\Bigr),\\
\langle \mathbf{y}^{(b)}\rangle
&=\sum_{p=0}^{N_2-1}\Phi_C^{pN_1}\!\left(\langle \mathbf{c}^{(b)}_p\rangle\right).
\end{aligned}
\]

\hyperref[app:he-kernels:proj]{Appendix~A.2} also describes an optional shortcut that fuses two projections that share the same input.
The projection kernel satisfies SCP Rule~1 because it outputs in the segment-column packing consumed by the score kernel, and Rule~2 because MPC-produced inputs are imported into CKKS in minimal packing and consumed directly as $\{\langle \mathbf{x}^{(g)}\rangle\}$.

\subsection{Ciphertext--Ciphertext Attention Kernels}
\label{sec:ks-he:ctct}

Transformer attention is head-separable and exhibits a token-shift reuse pattern that we exploit to reduce key-switching operations relative to generic ciphertext--ciphertext matrix multiplication. For each head $h$, we compute the score matrix $S^{(h)}=Q^{(h)}(K^{(h)})^\top\in\mathbb{R}^{m\times m}$ with a score kernel and the attention output $O^{(h)}=P^{(h)}V^{(h)}\in\mathbb{R}^{m\times d_h}$ with a value kernel, where $P^{(h)}$ is the post-softmax attention matrix.

Both kernels reuse the same small sets of segment shifts and intra-segment shifts on each ciphertext block, so we precompute rotation banks to avoid redundant key-switching operations.
For block $\ell$ with input $\langle \mathbf{x}^{(\ell)}\rangle$, let $\mathcal{D},\mathcal{T}\subset\mathbb{Z}$ be the required segment and intra-segment offsets.
We precompute
\[
\mathsf{Bank}_{\Phi}\!\left(\langle \mathbf{x}^{(\ell)}\rangle,\mathcal{D}\right)
=
\left\{\Phi^{\delta}\!\left(\langle \mathbf{x}^{(\ell)}\rangle\right)\right\}_{\delta\in\mathcal{D}},
\]
\[
\mathsf{Bank}_{\Psi}\!\left(\langle \mathbf{x}^{(\ell)}\rangle,\mathcal{T}\right)
=
\left\{\Psi^{t}\!\left(\langle \mathbf{x}^{(\ell)}\rangle\right)\right\}_{t\in\mathcal{T}}.
\]

When only the first $C<N_{\mathrm{seg}}$ segments are active, we use $\Phi_C^\delta$ so that wrap-around stays within the active region.
The inner loops then use bank lookups with $\textsf{ptmul}$, $\textsf{add}$, and $\textsf{ctmul}$, and all rotations occur only during bank construction.

\subsubsection{Score Kernel}
Inputs are $\langle Q\rangle$ and $\langle K\rangle$ in segment-column packing, where
$\langle Q\rangle=\{\langle \mathbf{q}^{(\ell)}\rangle\}_{\ell}$ and
$\langle K\rangle=\{\langle \mathbf{k}^{(\ell)}\rangle\}_{\ell}$ encrypt $Q$ and $K$.
The output is a \emph{folded-diagonal} ciphertext list $\{\langle \mathbf{S}_t\rangle\}_{t=0}^{m/2-1}$. Folded-diagonal packing compresses two attention diagonals into one ciphertext: for each head $h$, define the cyclic diagonal vector at offset $\Delta$ as
\[
s^{h}_{\Delta}[j] = S^{h}_{j,\, (j+\Delta) \bmod m}, \quad j = 0,\dots, m-1.
\]
The offset pair $(t, t + m/2)$ is stored by placing their diagonals into the real and imaginary parts of a single vector:
\[
v^{h}_{t}[j] = s^{h}_{t}[j] + i\, s^{h}_{t + m/2}[j], \quad t = 0,\dots, m/2 - 1.
\]
We assign one length-$m$ segment per head inside each ciphertext, with heads stacked across segments, and place different values of $t$ across ciphertext blocks. Figure~\ref{fig:ccmm-s-stage} illustrates this layout in the kernel procedures.

Choose $\beta\mid m$ and set $g=m/\beta$ with $g$ even. We compute each folded-diagonal pair by reusing a fixed set of intra-segment shifts on $Q$ and $K$.
For each $t\in\{0,\dots,m/2-1\}$, write $t=j(t)\beta+s(t)$ with
$j(t)=\lfloor t/\beta\rfloor$ and $s(t)=t\bmod\beta$.
Pairing $(t,t+m/2)$ matches folded-diagonal packing, so each $t$ is produced by one complex multiply
between a $Q$ shift and a complexified pair of $K$ shifts.

For each ciphertext block $\ell$, we precompute a $Q$ bank over the offsets
\[
\mathcal{T}_Q=\{0,-1,\dots,-(\beta-1)\},
\]
where the negative shift aligns the $Q$ row index with the $K$ diagonal offset $s(t)$ in the inner product.
We also precompute two $K$ banks over the offsets
\[
\begin{aligned}
\mathcal{T}_K &= \{j\beta\mid j=0,\dots,g/2-1\},\\
\mathcal{T}_{K,\mathrm{half}} &= \{m/2+j\beta\mid j=0,\dots,g/2-1\}.
\end{aligned}
\]
This yields
\[
\begin{aligned}
\{\langle \mathbf{q}^{(\ell)}_{s}\rangle\}_{s}
&:=
\mathsf{Bank}_{\Psi}\bigl(\langle \mathbf{q}^{(\ell)}\rangle,\mathcal{T}_Q\bigr),\\
\{\langle \mathbf{k}^{(\ell)}_{\tau}\rangle\}_{\tau}
&:=
\mathsf{Bank}_{\Psi}\bigl(\langle \mathbf{k}^{(\ell)}\rangle,\mathcal{T}_K\cup\mathcal{T}_{K,\mathrm{half}}\bigr),
\end{aligned}
\]
Using these banks, we form one folded-diagonal pair per $t$,
\[
\langle \mathbf{u}_t\rangle
=
\sum_{\ell}
\Bigl(
\langle \mathbf{q}^{(\ell)}_{-s(t)}\rangle
\odot
\bigl(
\langle \mathbf{k}^{(\ell)}_{j(t)\beta}\rangle
+
i\,\langle \mathbf{k}^{(\ell)}_{m/2+j(t)\beta}\rangle
\bigr)
\Bigr).
\]
Finally, we aggregate the $C$ active segments into the first $H$ head segments and align the intra-segment offset.
Let $\{\mathbf{m}_c\}_{c=0}^{C-1}$ be disjoint binary segment masks, where $\mathbf{m}_c$ keeps exactly the $c$-th active segment.
This produces $\langle \mathbf{S}_t\rangle$ in folded-diagonal packing,
\[
\begin{aligned}
\langle \mathbf{S}_t\rangle
&=
\Psi^{s(t)}
\Biggl(
\sum_{c=0}^{C-1}
\Phi^{(c\bmod H)-c}\bigl(\langle \mathbf{u}_t\rangle \odot \mathbf{m}_c\bigr)
\Biggr).
\end{aligned}
\]
When $C\not\equiv 0\pmod H$, different ciphertext blocks start at different head-phase offsets, requiring a block-dependent phase correction before combining. Mask construction and the full phase-correction procedure are given in \hyperref[app:he-kernels:qk]{Appendix~A.3}; the equations above describe the core accumulation without this correction for clarity.

\subsubsection{Value Kernel}
Let $H_{\mathrm{blk}}$ be the number of heads packed in one value ciphertext block and let
$B_V=\lceil H/H_{\mathrm{blk}}\rceil$.
The CKKS inputs are the folded-diagonal softmax ciphertext list
\[
\langle P_{\mathrm{fd}}\rangle=\{\langle \mathbf{p}^{(\ell)}_{\mathrm{fd}}\rangle\}_{\ell=0}^{B_V-1}
\]
and the value ciphertext list
\[
\langle V\rangle=\{\langle \mathbf{v}^{(\ell)}\rangle\}_{\ell=0}^{B_V-1}.
\]
Each value ciphertext $\langle \mathbf{v}^{(\ell)}\rangle$ uses \emph{head-major} packing: segment
$k(\tilde h,u)=\tilde h\,d_h+u$ stores $V^{(\ell,\tilde h)}[:,u]\in\mathbb{R}^m$ for
$\tilde h\in\{0,\dots,H_{\mathrm{blk}}-1\}$ and $u\in\{0,\dots,d_h-1\}$.
Thus, the $d_h$ channels of each local head are contiguous, and different heads occupy disjoint blocks of $d_h$ segments.
The output is the head-major ciphertext list
\[
\langle O_{\mathrm{hm}}\rangle=\{\langle \mathbf{o}^{(\ell)}_{\mathrm{hm}}\rangle\}_{\ell=0}^{B_V-1}
\]
which encodes $O=PV\in\mathbb{R}^{m\times d}$.
The subsequent output projection absorbs the head-major-to-segment-column conversion by pre-permuting its plaintext weight columns, as detailed in \hyperref[app:he-kernels:proj]{Appendix~A.2}, so no explicit FHE remap is needed.

For each block $\ell$, we reuse two banks.
First, we complexify the value block and build a $\Psi$ bank over token offsets:
\[
\begin{aligned}
\langle \mathbf{u}^{(\ell)}\rangle
&=
\langle \mathbf{v}^{(\ell)}\rangle
- i\,\Psi^{m/2}\!\bigl(\langle \mathbf{v}^{(\ell)}\rangle\bigr),\\
\{\langle \mathbf{u}^{(\ell)}_t\rangle\}_{t=0}^{m/2-1}
&=
\mathsf{Bank}_{\Psi}\bigl(\langle \mathbf{u}^{(\ell)}\rangle,\{0,\dots,m/2-1\}\bigr).
\end{aligned}
\]
Second, let $\mathbf{n}_u$ be the binary segment mask that keeps exactly channel offset $u$ across the $H_{\mathrm{blk}}$ local heads in one block (\hyperref[app:he-kernels:qk]{Appendix~A.3}).
For each $t\in\{0,\dots,m/2-1\}$, we align and broadcast the folded-diagonal weights by
\[
\langle \mathbf{b}^{(\ell)}_t\rangle
=
\sum_{u=0}^{d_h-1}
\left(
\Phi^{t-u}\!\bigl(\langle \mathbf{p}^{(\ell)}_{\mathrm{fd}}\rangle\bigr)\odot \mathbf{n}_u
\right).
\]
The block output is then
\[
\langle \mathbf{o}^{(\ell)}_{\mathrm{hm}}\rangle
=
\sum_{t=0}^{m/2-1}
\bigl(\langle \mathbf{u}^{(\ell)}_t\rangle \odot \langle \mathbf{b}^{(\ell)}_t\rangle\bigr),
\qquad \ell=0,\dots,B_V-1.
\]
Each $\langle \mathbf{o}^{(\ell)}_{\mathrm{hm}}\rangle$ remains in head-major layout, and the output projection consumes this list directly via plaintext weight pre-permutation.

\noindent\textbf{Putting it together.}
The score kernel consumes the segment-column packed $Q$ and $K$ from the projection kernel (SCP Rule~1) and exports the scores into $K_{\min}(S)=\lceil Hm^2/(2n)\rceil$ ciphertexts in minimal folded-diagonal packing at the FHE--MPC boundary (Rule~3). The internal per-$t$ representation $\{\langle \mathbf{S}_t\rangle\}$ is repacked into a sequential minimal stream before conversion; see \hyperref[app:he-kernels:qk]{Appendix~A.3}.
The MPC softmax block receives this stream, converts it to additive shares via the CKKS-to-MPC protocol, evaluates the MBMax softmax surrogate securely, and converts the result back to CKKS in minimal folded-diagonal packing ($K_{\min}(P)=\lceil Hm^2/(2n)\rceil$ ciphertexts, Rule~2).
The value kernel then combines the folded-diagonal softmax list $\langle P_{\mathrm{fd}}\rangle$ with the head-major value list $\{\langle \mathbf{v}^{(\ell)}\rangle\}_{\ell}$ without FHE-side repacking, and outputs $\langle O_{\mathrm{hm}}\rangle$, which the output projection consumes directly through pre-permuted weight columns (Rule~1).
This keeps the MPC--FHE boundary at minimal ciphertext count while moving the broadcast into the FHE kernel where it composes cleanly with the value layout.
Figure~\ref{fig:ccmm-s-stage} illustrates both attention kernels on a toy instance.
The top row shows the score kernel producing folded-diagonal score ciphertexts $\{\langle \mathbf{S}_t\rangle\}$ that are exported to MPC for softmax, and the bottom row shows the value kernel consuming the folded-diagonal softmax ciphertext list $\langle P_{\mathrm{fd}}\rangle$ together with the head-major value ciphertext list $\{\langle \mathbf{v}^{(\ell)}\rangle\}_{\ell}$ to output the head-major ciphertext list $\langle O_{\mathrm{hm}}\rangle$ encoding $O=PV$.

\begin{figure*}
  \centering
  \includegraphics[width=1\textwidth]{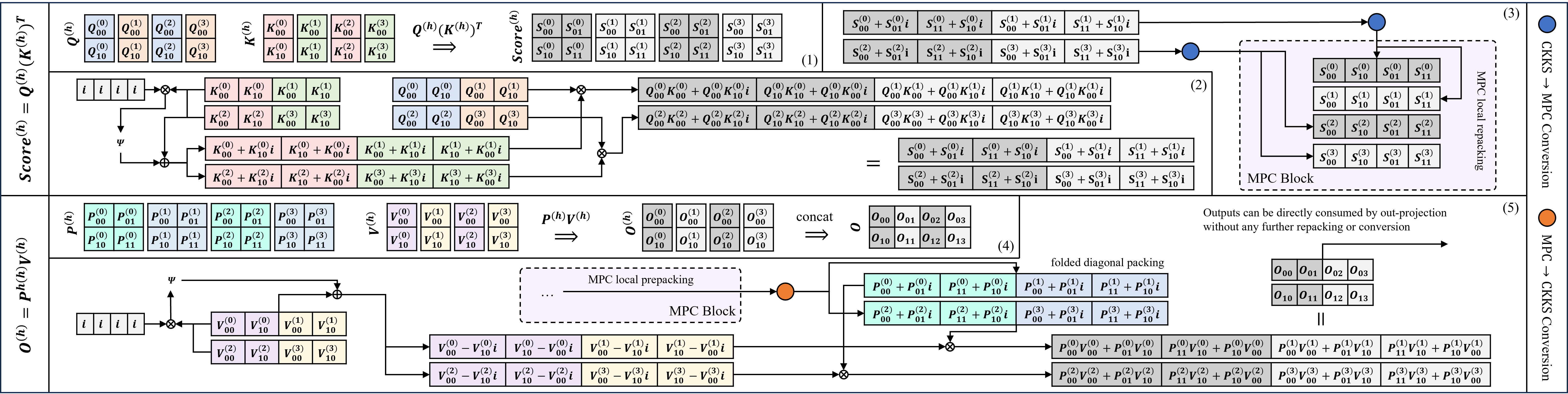}
  \caption{
    Toy example of the attention kernels with $H=4$, $m=2$, $d_h=1$, and $n=4$ slots.
    Each ciphertext contains $N_{\mathrm{seg}}=2$ segments.
    The top half shows the score kernel from segment-packed $Q$ and $K$ to folded-diagonal $S$, followed by MPC softmax that returns encrypted folded-diagonal $P$ in minimal packing.
    The bottom half shows the value kernel from folded-diagonal $P$ and head-major $V$ to the head-major output $O=PV$.
    (1) and (4) show the matrix view.
    (2), (3), and (5) illustrate the procedure at the slot level.
    }
  \label{fig:ccmm-s-stage}
\end{figure*}

Table~\ref{tab:matmulcc_bertbase} reports the key-switching counts of our attention kernels for BERT-base and compares them against prior work. The \system{} counts are measured from the implementation; the BOLT, Powerformer, and BLB counts are analytically derived from their published algorithms under the same tensor shapes ($m=128$, $d=768$, $H=12$, $n=16384$). Rather than summing rotations and ciphertext multiplications directly, the table also reports a weighted primitive-cost proxy in seconds using measured PhantomFHE A100 primitive latencies ($t_{\mathrm{rot}}=2.32$\,ms, $t_{\mathrm{ctmul}}=2.65$\,ms; Section~\ref{sec:backend_comparison}). \hyperref[app:he-kernels:qk]{Appendix~A.3} provides the asymptotic derivations.

\begin{table}[t]
\centering
\footnotesize
\setlength{\tabcolsep}{3pt}
\renewcommand{\arraystretch}{1.15}
\caption{Key-switching counts and weighted cost proxy for BERT-base attention ($n{=}16384$). Proxy: $\#\mathrm{rot}{\cdot}2.32{\times}10^{-3}{+}\#\mathrm{ctmul}{\cdot}2.65{\times}10^{-3}$\,s (PhantomFHE A100, Table~\ref{tab:he_primitive_bench}).}
\label{tab:matmulcc_bertbase}
\resizebox{\columnwidth}{!}{%
\begin{tabular}{@{}l|ccc|ccc@{}}
\toprule
& \multicolumn{3}{c|}{$Q^{(h)}(K^{(h)})^{\mathsf T}$}
& \multicolumn{3}{c}{$P^{(h)}V^{(h)}$} \\
& \#rot & \#mul & proxy\,(s) & \#rot & \#mul & proxy\,(s) \\
\midrule
BOLT~\cite{pang2024bolt}              & 13824 & 1536 & 36.1 & 21420 & 768  & 51.7 \\
Powerformer~\cite{park2025powerformer} & 4392  & 768  & 12.2 & 4882  & 1536 & 15.4 \\
BLB~\cite{xu2025blb}                  & 512   & 768  & 3.2  & 824   & 1536 & 6.0  \\
\system{}                              & 630   & 448  & 2.6  & 1524  & 384  & 4.6  \\
\bottomrule
\end{tabular}}
\end{table}

\section{Communication-Efficient MPC Blocks}
\label{sec:mpc}

\subsection{MPC Fixed-point Primitives}
\label{sec:mpc-primitives}

All MPC computation uses fixed-point arithmetic over $\mathbb{Z}_{2^\ell}$ with additive shares $[[\cdot]]_{2^\ell}$.
Real values are encoded with public fractional precision $F$ as $\lfloor r\cdot 2^F\rceil \in \mathbb{Z}_{2^\ell}$,
and all public constants and polynomial coefficients use the same scale.
Table~\ref{tab:mpc_ops} summarizes the MPC primitives used throughout.
We denote secure fixed-point multiplication by $\Pi_{\times}$, with truncation back to scale $2^F$.
\begin{table}[t]
\centering
\footnotesize
\setlength{\tabcolsep}{5pt}
\renewcommand{\arraystretch}{1.15}
\caption{MPC fixed-point primitives over $\mathbb{Z}_{2^\ell}$ with scale $2^F$.}
\label{tab:mpc_ops}
\begin{tabular}{@{}l|cc@{}}
\toprule
Primitive & Inputs & Output \\
\midrule
$\Pi_{\times}$ &
$\big([[a]]_{2^\ell},[[b]]_{2^\ell}\big)$ &
$\bigl[\!\bigl[\lfloor (a\cdot b)/2^F\rceil\bigr]\!\bigr]_{2^\ell}$ \\

$\Pi_{\mathrm{cmp}}$ &
$\big([[x]]_{2^\ell},\,\tau\big)$ &
$[[b]]_{2^\ell}$ where $b=\mathbf{1}[x<\tau]$ \\

$\Pi_{\mathrm{mux}}$ &
$\big([[u]]_{2^\ell},[[b]]_{2^\ell}\big)$ &
$[[v]]_{2^\ell}$ where $v=b\cdot u$ \\

$\Pi_{\mathrm{rowsum}}$ &
$[[\mathbf{X}]]_{2^\ell}\in\mathbb{Z}_{2^\ell}^{T\times d}$ &
$[[\mathbf{s}]]_{2^\ell}\in\mathbb{Z}_{2^\ell}^{T\times 1}$,
\;\; $s_t=\sum_{k=0}^{d-1} X_{t,k}$ \\

$\Pi_{\mathrm{bcast}}$ &
$\big([[\boldsymbol{\mu}]]_{2^\ell}\in\mathbb{Z}_{2^\ell}^{T\times 1},\,d\big)$ &
$[[\mathbf{M}]]_{2^\ell}\in\mathbb{Z}_{2^\ell}^{T\times d}$,
\;\; $M_{t,k}=\mu_t$ \\
\bottomrule
\end{tabular}
\end{table}

\subsection{Protocols for Non-linear Layers}

The nonlinear approximation functions for Softmax and LayerNorm follow Powerformer~\cite{park2025powerformer}, and the GELU piecewise approximation follows BOLT~\cite{pang2024bolt}. Our contribution is the MPC protocol realization of these approximations. By expressing the batch nonlinearities as short chains of MPC primitives, \system{} uses 7 interactive MPC rounds per layer: 3 for MBMax (Softmax), 4 for GELU, and 0 for MBLN, since it reduces to local operations on additive shares. \hyperref[app:mpc_round_comparison]{Appendix~E} compares these counts against prior systems.

\subsubsection{MPC MBMax Protocol}
We replace Softmax with the Batch Power-Max approximation from Powerformer using a power map and a fixed public normalizer.
A single public constant $R_d$ is distilled offline and reused during inference together with the public shift $c$ and exponent $p=5$.
The approximation is
\begin{equation}
\label{eq:mbmax_form}
\mathbf{Y} \;=\; \frac{(\mathbf{X}+c)^{p}}{R_d},
\qquad p=5,
\end{equation}
applied elementwise.
Algorithm~\ref{alg:secure-mbmax} realizes MBMax as a secure MPC protocol using a short multiplication chain and a final multiplication by the public fixed point constant $1/R_d$.

\begin{algorithm}[t]
\caption{Secure MBMax $\Pi_{\mathrm{MBMax}}([[\mathbf{X}]])$}
\label{alg:secure-mbmax}
\begin{algorithmic}[1]\raggedright
\Require $P_0,P_1$ hold $[[\mathbf{X}]]_{2^\ell}$ for $\mathbf{X}\in\mathbb{Z}_{2^\ell}^{m\times m}$ at scale $2^F$, public scalars $c$ and $R_d$
\Ensure $P_0,P_1$ obtain $[[\mathbf{Y}]]_{2^\ell}$ where $\mathbf{Y}=(\mathbf{X}+c)^{5}/R_d$ applied elementwise
\State $[[\bar{\mathbf{X}}]]_{2^\ell} \leftarrow [[\mathbf{X}]]_{2^\ell} + c$
\State $[[\bar{\mathbf{X}}^{2}]]_{2^\ell} \leftarrow \Pi_{\times}\!\left([[\bar{\mathbf{X}}]]_{2^\ell},[[\bar{\mathbf{X}}]]_{2^\ell}\right)$
\State $[[\bar{\mathbf{X}}^{4}]]_{2^\ell} \leftarrow \Pi_{\times}\!\left([[\bar{\mathbf{X}}^{2}]]_{2^\ell},[[\bar{\mathbf{X}}^{2}]]_{2^\ell}\right)$
\State $[[\bar{\mathbf{X}}^{5}]]_{2^\ell} \leftarrow \Pi_{\times}\!\left([[\bar{\mathbf{X}}^{4}]]_{2^\ell},[[\bar{\mathbf{X}}]]_{2^\ell}\right)$
\State $[[\mathbf{Y}]]_{2^\ell} \leftarrow (1/R_d)\cdot [[\bar{\mathbf{X}}^{5}]]_{2^\ell}$
\State \textbf{Output} $[[\mathbf{Y}]]_{2^\ell}$
\end{algorithmic}
\end{algorithm}

\subsubsection{MPC MBLN Protocol}
We adopt the Batch LayerNorm approximation from Powerformer, which replaces variance-dependent normalization with a fixed public scale.
For a row $x\in\mathbb{R}^d$,
\begin{equation}
\label{eq:mbln_form}
y \;=\; \gamma \odot \frac{x-\mu}{l\cdot R_d} + \beta,
\qquad
\mu=\mathrm{RowMean}(x).
\end{equation}
We fold the constant scale into $\tilde{\gamma}=\gamma/(l\cdot R_d)$ so that MBLN reduces to computing the row mean, centering the row, and applying a public elementwise affine map, which Algorithm~\ref{alg:secure-mbln} realizes as a secure MPC protocol.

\begin{algorithm}[t]
\caption{Secure MBLN $\Pi_{\mathrm{MBLN}}([[\mathbf{X}]])$}
\label{alg:secure-mbln}
\begin{algorithmic}[1]\raggedright
\Require $P_0,P_1$ hold $[[\mathbf{X}]]_{2^\ell}$ for $\mathbf{X}\in\mathbb{Z}_{2^\ell}^{T\times d}$ at scale $2^F$,
public $\tilde{\gamma},\beta\in\mathbb{Z}_{2^\ell}^{d}$ with $\tilde{\gamma}=\gamma/(l\cdot R_d)$
\Ensure $P_0,P_1$ obtain $[[\mathbf{Y}]]_{2^\ell}$ where each row satisfies $y=\tilde{\gamma}\odot(x-\mu)+\beta$
\State $[[\mathbf{s}]]_{2^\ell} \leftarrow \Pi_{\mathrm{rowsum}}([[\mathbf{X}]]) \in \mathbb{Z}_{2^\ell}^{T\times 1}$
\State $[[\boldsymbol{\mu}]]_{2^\ell} \leftarrow (1/d)\cdot [[\mathbf{s}]]_{2^\ell} \in \mathbb{Z}_{2^\ell}^{T\times 1}$
\State $[[\mathbf{Z}]]_{2^\ell} \leftarrow [[\mathbf{X}]]_{2^\ell} -
\Pi_{\mathrm{bcast}}([[\boldsymbol{\mu}]]_{2^\ell},d)\in\mathbb{Z}_{2^\ell}^{T\times d}$
\State $[[\mathbf{Y}]]_{2^\ell} \leftarrow \tilde{\gamma} \odot [[\mathbf{Z}]]_{2^\ell} + \beta$
\State \textbf{Output} $[[\mathbf{Y}]]_{2^\ell}$
\end{algorithmic}
\end{algorithm}

\subsubsection{GELU}
We evaluate GELU using the piecewise approximation and secure selection protocol from BOLT~\cite{pang2024bolt}, following the standard hybrid split used by BLB~\cite{xu2025blb}.
Concretely, CKKS forms the two candidate ciphertexts $F_0(\mathbf{x})$ and $F_1(\mathbf{x})$, and MPC selects among $\{0,F_0(\mathbf{x}),F_1(\mathbf{x}),\mathbf{x}\}$ using $\Pi_{\mathrm{cmp}}$ and $\Pi_{\mathrm{mux}}$.
This requires converting both $F_0(\mathbf{x})$ and $F_1(\mathbf{x})$, which increases the inbound conversion count for the GELU block; full protocol details are given in \hyperref[app:gelu_protocols]{Appendix~B}.

\section{Boundary Co-Design and Secure Conversion}
\label{sec:complex-conversion}

In \system{}, boundary design is part of the system architecture rather than a post-processing detail. We therefore pair a cost model for boundary placement with a secure complex CKKS--MPC conversion protocol that reduces payload while preserving compatibility with real-valued MPC blocks.

\subsection{Cost Model for Boundary Design}
\label{subsec:conv-overhead}

Conversion overhead grows with the number of CKKS--MPC conversion pairs and with the total ciphertext payload converted at the FHE--MPC boundaries.
Both factors depend on the boundary packing and on CKKS parameters that determine the ciphertext size. 

Fix a hybrid inference decomposition with $R$ MPC blocks indexed by $r=1,\dots, R$.
Block $r$ consumes a complex-valued boundary tensor $x^{\rightarrow}_r$ and produces a complex-valued boundary tensor $x^{\leftarrow}_r$.
A hybrid design chooses a packing for every $x^{\rightarrow}_r$ and $x^{\leftarrow}_r$ and specifies where conversions occur.

\subsubsection{Baseline Conversions and Decision Rule}
We define the minimal-conversion baseline design $B$ by two properties.
First, $B$ performs exactly one entry conversion CKKS$\to$MPC and one exit conversion MPC$\to$CKKS for each MPC block, and performs no other conversions.
Therefore, $B$ uses exactly $R$ conversion pairs, which is the minimum possible for a fixed set of $R$ MPC blocks.
Second, $B$ uses minimal packing, defined in Section~\ref{sec:ks-he:packing}, at every boundary tensor so that each $x^{\rightarrow}_r$ and $x^{\leftarrow}_r$ is represented using exactly $K_{\min}(x)$ ciphertexts.
In a standard Transformer encoder layer, the nonlinearities consist of the attention softmax, two layer normalizations, and the feed-forward GELU activation (Figure~\ref{fig:block-structure}).
Accordingly, we decompose each layer into $R=4$ MPC blocks (Softmax, LN1, GELU, LN2), so the baseline uses exactly four FHE$\leftrightarrow$MPC conversion pairs per layer.

However, deviating from $B$ can still be beneficial because expanded packings or additional conversions can reduce CKKS-side or MPC-side work by enabling layouts and kernels with fewer key-switched operations.
These choices increase conversion overhead by increasing either the number of conversion pairs or the number of ciphertexts converted. Prior works do not provide an analytic criterion for when a non-minimal conversion strategy is worthwhile.
We model this trade-off explicitly and use it to decide when moving away from $B$ is worthwhile.
Figure~\ref{fig:arch_compare} summarizes the design space.

\begin{figure}[t]
  \centering
  \includegraphics[width=\linewidth]{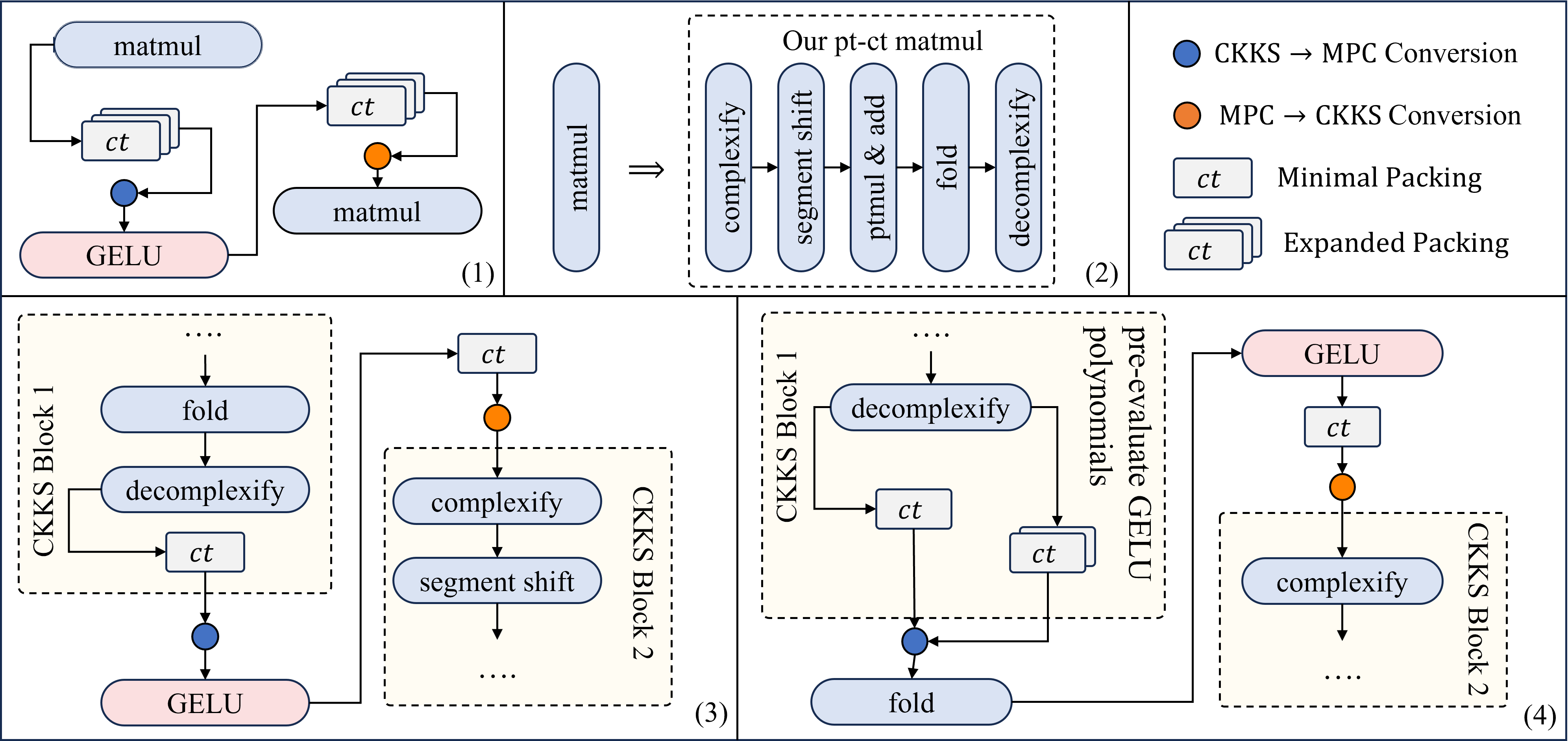}
\caption{
  Architecture comparison for hybrid inference.
  (1) Prior layerwise hybrid pipeline with expanded packing.
  (2) Component view of a plaintext--ciphertext projection kernel.
  (3) Minimal baseline using minimal packing at conversions.
  (4) \system{} pre-evaluates GELU polynomials in CKKS with expanded boundary packing for lower MPC cost.
  }
  \label{fig:arch_compare}
\end{figure}

We write end-to-end latency as
\begin{equation}
T_{\text{total}} \;=\; T_{\text{conv}} \;+\; T_{\text{ckks}} \;+\; T_{\text{mpc}}.
\label{eq:ttotal}
\end{equation}

We treat $T_{\text{conv}}$ as the aggregate boundary-conversion latency induced by the chosen conversion placement and boundary packing.
Thus $T_{\text{conv}}(D)$ captures both the number of conversion pairs in design $D$ and the total ciphertext payload converted at the boundaries.

Let $B$ denote the minimal-conversion baseline and $D$ a candidate design.
Define
\[
\begin{aligned}
\Delta T_{\text{conv}} &= T_{\text{conv}}(D)-T_{\text{conv}}(B),\\
\Delta T_{\text{comp}} &= \bigl(T_{\text{ckks}}(D)+T_{\text{mpc}}(D)\bigr)\\
                      &\quad{}-\bigl(T_{\text{ckks}}(B)+T_{\text{mpc}}(B)\bigr).
\end{aligned}
\]
We prefer $D$ if
\begin{equation}
\Delta T_{\text{conv}} \;+\; \Delta T_{\text{comp}} \;<\; 0 .
\label{eq:rule}
\end{equation}
This condition states that any increase in conversion overhead under $D$ is acceptable only when it is more than offset by a reduction in CKKS-side and MPC-side latency, resulting in a net decrease in end-to-end latency.

\subsubsection{Parameterization}
We parameterize $\Delta T_{\text{conv}}$ and $\Delta T_{\text{comp}}$ in terms of CKKS parameters, network profile, the total CKKS compute time difference $\Delta T_{\text{ckks}} = T_{\text{ckks}}(B)-T_{\text{ckks}}(D)$, and MPC round savings.
Let $K_{\text{extra}}$ denote the additional ciphertexts converted beyond the baseline, $R_{\text{extra}}$ the additional conversion round-trips, $\mathrm{ct\_bytes}(N, L)=2NL\cdot 8$ the serialized ciphertext size for ring degree $N$ with $L$ RNS limbs, and $\mathrm{BW}$, $\mathrm{RTT}$ the network bandwidth and round-trip time.
Then
\begin{equation}
\label{eq:dt_conv_param}
\Delta T_{\text{conv}} = \frac{K_{\text{extra}} \cdot \mathrm{ct\_bytes}(N,L)}{\mathrm{BW}} + R_{\text{extra}} \cdot \mathrm{RTT},
\end{equation}
\begin{equation}
\label{eq:dt_comp_param}
\Delta T_{\text{comp}} \approx -\bigl(
\Delta T_{\text{ckks}}
+ R_{\text{saved}} \cdot ({\mathrm{RTT}} + C_{\text{round}}/\mathrm{BW})
\bigr),
\end{equation}
where $\Delta T_{\text{ckks}}$ is the total CKKS compute time saved by $D$ relative to $B$ (positive when $D$ reduces FHE work), $R_{\text{saved}}$ is the number of MPC rounds saved, and $C_{\text{round}}$ is the MPC communication per round in bytes.
The per-round communication depends on the MPC backend: for Beaver-triple multiplication on $E$ elements with ring bit-width $\ell$, $C_{\text{round}} = E \cdot 4\lceil \ell/8 \rceil$.

Eqs.~\eqref{eq:dt_conv_param}--\eqref{eq:dt_comp_param} can be used to estimate the sign of $\Delta T_{\text{conv}}+\Delta T_{\text{comp}}$ and guide boundary placement. We treat them as a screening tool rather than a precise predictor, since $\Delta T_{\text{ckks}}$ and $C_{\text{round}}$ are measured under specific deployment conditions. Appendices~D and~H provide further discussion and validation details.

\subsection{Complex Conversion Protocol}
\label{sec:complex-conv}

Prior CKKS--MPC conversions in hybrid systems~\cite{boemer2020mp2ml,xu2025blb} use only the real part of each CKKS ciphertext, leaving the imaginary part unused.

At an FHE--MPC boundary, we convert between a CKKS ciphertext $\langle \mathbf{m}\rangle$ at scale $\Delta$ and MPC
additive shares over $\mathbb{Z}_{2^\ell}$.
We represent boundary values in a fixed point so that decryption and decoding yield a slot vector
$\mathbf{x}+i\mathbf{y}\in\mathbb{C}^{n}$ where $\mathbf{x},\mathbf{y}\in\mathbb{Z}^n$ fit in the signed range of
$\mathbb{Z}_{2^\ell}$.
Following BLB~\cite{xu2025blb}, we sample randomness directly in the polynomial ring modulo the coefficient modulus and then apply the CKKS encoding map, rather than sampling real numbers and encoding them. We write $A_{N,q}^2$ for the CKKS ciphertext space with ring degree $N$ and modulus $q$ and we set $q=Q_{\mathrm{conv}}$
at the boundary.

\subsubsection{Complex CKKS-to-MPC Conversion}
Let $Q_{\mathrm{conv}}$ be the ciphertext modulus at the boundary.
Given $\langle \mathbf{m}\rangle$ from $P_1$ that encrypts an encoding of $\mathbf{x}+i\mathbf{y}$, the conversion
outputs ring shares $[[\mathbf{x}]]_{2^\ell}$ and $[[\mathbf{y}]]_{2^\ell}$.
$P_1$ masks the boundary plaintext by sampling a uniform $\hat{\mathbf{r}}\leftarrow R_{Q_{\mathrm{conv}}}$ and sending
$\langle \mathbf{m}\rangle+\hat{\mathbf{r}}$ for decryption, so $P_0$ decrypts a uniformly masked plaintext-ring
element.
Protocol $\Pi_{\mathrm{C2M}}^{\mathbb{C}}$ in Algorithm~\ref{alg:complex-ckks-to-mpc} realizes this conversion.

\begin{algorithm}[t]
\caption{Complex CKKS-to-MPC Conversion $\Pi_{\mathrm{C2M}}^{\mathbb{C}}$}
\label{alg:complex-ckks-to-mpc}
\begin{algorithmic}[1]\raggedright
\Require $P_1$ holds $\langle \mathbf{m}\rangle \in A_{N,Q_{\mathrm{conv}}}^2$ at scale $\Delta$
\Ensure $\forall b\in\{0,1\}$, $P_b$ outputs $[[\mathbf{x}]]_{2^\ell}^b$ and $[[\mathbf{y}]]_{2^\ell}^b$
\State \textbf{$P_1$} sample $\hat{\mathbf{r}}\leftarrow R_{Q_{\mathrm{conv}}}$ uniformly, send
\Statex \hspace{\algorithmicindent}
$\langle \mathbf{d}\rangle \leftarrow \langle \mathbf{m}\rangle+\hat{\mathbf{r}}$ to $P_0$, and set
\Statex \hspace{\algorithmicindent}
$[[\hat{\mathbf{t}}]]^{Q_{\mathrm{conv}}}_1 \leftarrow -\hat{\mathbf{r}}$
\State \textbf{$P_0$} decrypt to get
\Statex \hspace{\algorithmicindent}
$[[\hat{\mathbf{t}}]]^{Q_{\mathrm{conv}}}_0 \leftarrow \mathrm{Dec}(\langle \mathbf{d}\rangle)$
\State \textbf{$P_0,P_1$} invoke
\Statex \hspace{\algorithmicindent}
$[[\hat{\mathbf{t}}]]^{2^\ell} \leftarrow
\Pi_{\mathrm{Field2Ring}}\!\left([[\hat{\mathbf{t}}]]^{Q_{\mathrm{conv}}}\right)$
\State \textbf{$\forall b\in\{0,1\}$} $P_b$ decode
\Statex \hspace{\algorithmicindent}
$\mathbf{z}_b \leftarrow \mathrm{Decode}([[\hat{\mathbf{t}}]]^{2^\ell}_b)$ and output
\Statex \hspace{\algorithmicindent}
$[[\mathbf{x}]]_{2^\ell}^b \leftarrow \Re(\mathbf{z}_b)$ and
$[[\mathbf{y}]]_{2^\ell}^b \leftarrow \Im(\mathbf{z}_b)$
\end{algorithmic}
\end{algorithm}

The two plaintext shares in Steps~1 and~2 reconstruct the unmasked boundary plaintext modulo $Q_{\mathrm{conv}}$ because
$(\hat{\mathbf{p}}+\hat{\mathbf{r}})+(-\hat{\mathbf{r}})\equiv \hat{\mathbf{p}}$ where
$\hat{\mathbf{p}}=\mathrm{Encode}(\mathbf{x}+i\mathbf{y},\Delta)$.
By correctness of $\Pi_{\mathrm{Field2Ring}}$ in \hyperref[app:ring-field]{Appendix~C}, Step~3 preserves the same signed fixed
point plaintext in $\mathbb{Z}_{2^\ell}$ except with negligible statistical error.
CKKS decoding is a linear map from the plaintext ring to the slot space, so decoding each additive plaintext share locally in
Step~4 yields additive shares of the decoded vector.
Therefore, the outputs reconstruct $\mathbf{x}+i\mathbf{y}$ up to the standard CKKS decoding error.
If the per-slot decoding error is below $1/2$ in both real and imaginary parts at the boundary scale, rounding recovers
the intended integers in $\mathbb{Z}_{2^\ell}$.
We empirically verify this condition on PhantomFHE ($n=2^{14}$, scale $\Delta=2^{40}$, 40-bit body primes): the maximum per-slot absolute error after one encrypt--decode round-trip is ${\sim}9\times10^{-6}$, well below $1/2$.

\subsubsection{Complex MPC-to-CKKS Conversion}
Given ring shares $[[\mathbf{x}]]_{2^\ell}$ and $[[\mathbf{y}]]_{2^\ell}$, the conversion outputs a CKKS ciphertext
$\langle \mathbf{m}\rangle$ at scale $\Delta$ that decrypts and decodes to a value close to $\mathbf{x}+i\mathbf{y}$.
We mirror prior real-only conversions by locally encoding the two share vectors as plaintext ring shares and then
switching these plaintext shares from $\mathbb{Z}_{2^\ell}$ to $\mathbb{Z}_{Q_{\mathrm{conv}}}$ using
$\Pi_{\mathrm{Ring2Field}}$.
Protocol $\Pi_{\mathrm{M2C}}^{\mathbb{C}}$ in Algorithm~\ref{alg:complex-mpc-to-ckks} realizes this conversion by
encrypting one plaintext share and adding the other plaintext share locally.

\begin{algorithm}[t]
\caption{Complex MPC-to-CKKS Conversion $\Pi_{\mathrm{M2C}}^{\mathbb{C}}$}
\label{alg:complex-mpc-to-ckks}
\begin{algorithmic}[1]\raggedright
\Require MPC shares $[[\mathbf{x}]]_{2^\ell},[[\mathbf{y}]]_{2^\ell}$ and scale $\Delta$ with boundary modulus $Q_{\mathrm{conv}}$
\Ensure $P_1$ outputs $\langle \mathbf{m}\rangle \in A_{N,Q_{\mathrm{conv}}}^2$ at scale $\Delta$
\State \textbf{$\forall b\in\{0,1\}$} $P_b$ center lift to
\Statex \hspace{\algorithmicindent}
$\tilde{\mathbf{x}}_b,\tilde{\mathbf{y}}_b$ and encode
\Statex \hspace{\algorithmicindent}
$[[\hat{\mathbf{t}}]]^{2^\ell}_b \leftarrow \mathrm{Encode}(\tilde{\mathbf{x}}_b+i\tilde{\mathbf{y}}_b,\Delta)$
\State \textbf{$P_0,P_1$} invoke
\Statex \hspace{\algorithmicindent}
$[[\hat{\mathbf{t}}]]^{Q_{\mathrm{conv}}} \leftarrow
\Pi_{\mathrm{Ring2Field}}\!\left([[\hat{\mathbf{t}}]]^{2^\ell}\right)$
\State \textbf{$P_0$} encrypt and send
\Statex \hspace{\algorithmicindent}
$\langle \mathbf{c}\rangle \leftarrow \mathrm{Enc}([[\hat{\mathbf{t}}]]^{Q_{\mathrm{conv}}}_0)$ to $P_1$
\State \textbf{$P_1$} output
\Statex \hspace{\algorithmicindent}
$\langle \mathbf{m}\rangle \leftarrow \langle \mathbf{c}\rangle + [[\hat{\mathbf{t}}]]^{Q_{\mathrm{conv}}}_1$
\end{algorithmic}
\end{algorithm}

By construction, the two encoded plaintext shares in Step~1 add to an encoding of $\mathbf{x}+i\mathbf{y}$ under the
signed fixed point interpretation in $\mathbb{Z}_{2^\ell}$.
By correctness of $\Pi_{\mathrm{Ring2Field}}$ in \hyperref[app:ring-field]{Appendix~C}, Step~2 converts these plaintext shares to
$R_{Q_{\mathrm{conv}}}$ while preserving their sum modulo $Q_{\mathrm{conv}}$ except with negligible statistical error.
After Step~4, the ciphertext encrypts the sum of the two plaintext shares, so decryption and decoding recover
$\mathbf{x}+i\mathbf{y}$ up to the standard CKKS decoding error.
If the per-slot decoding error is below $1/2$ in both parts at the boundary scale, a subsequent CKKS-to-MPC conversion
recovers the intended integers in $\mathbb{Z}_{2^\ell}$ under rounding.

\subsubsection{Security of Complex Conversion}
We consider semi-honest adversaries.
In $\Pi_{\mathrm{C2M}}^{\mathbb{C}}$, $P_0$ receives a masked ciphertext and decrypts a plaintext that is uniform in
$R_{Q_{\mathrm{conv}}}$ because $\hat{\mathbf{r}}$ is uniform and unknown to $P_0$, so the decryption result is
statistically independent of the encoded message.
In $\Pi_{\mathrm{M2C}}^{\mathbb{C}}$, $P_1$ receives only a CKKS ciphertext encrypting $P_0$'s plaintext share, which
reveals no information by semantic security, and then adds its own plaintext share locally.
By composition, if $\Pi_{\mathrm{Field2Ring}}$ and $\Pi_{\mathrm{Ring2Field}}$ are secure, then the conversion
protocols leak nothing beyond their prescribed outputs and public parameters.

\subsubsection{Communication Overhead and Compatibility}
A CKKS ciphertext has $n$ complex slots, corresponding to $2n$ real values across the real and imaginary parts.
A real-only conversion exports a single part and therefore transmits one ciphertext per real vector per direction.
Our complex conversion exports both parts so one ciphertext carries two independent real vectors, reducing the number of
ciphertexts transmitted per direction from $k$ to $\lceil k/2\rceil$ for $k$ real vectors.
The conversion outputs two real share vectors, so existing MPC protocols for real-valued nonlinearities apply by
processing the two parts independently.
For pipelines that otherwise treat CKKS ciphertexts as real, boundary-level complexify/decomplexify wrappers convert two real
ciphertexts into one complex ciphertext and back. Section~\ref{sec:he_microbench} shows that their overhead is small relative to the communication savings.

Table~\ref{tab:ct_io_swapped} compares analytically derived ciphertext counts per MPC block for the theoretical minimal baseline, prior hybrid systems, and \system{}. Prior papers do not report boundary payload in bytes, so we compare at the level of ciphertext counts under a common tensor shape and slot budget. Overall, \system{} reduces conversion ciphertext count by 25\% relative to BOLT and by 57\% relative to BLB, while still allowing an intentionally expanded GELU boundary when the PhantomFHE decision rule predicts a net latency win.

\begin{table}[t]
\centering
{\footnotesize
\setlength{\tabcolsep}{3pt}
\renewcommand{\arraystretch}{1.03}
\caption{Ciphertext conversion counts per MPC block ($n{=}16384$).}
\label{tab:ct_io_swapped}
\begin{tabular}{@{}l|cc|cc|cc|cc|cc|c@{}}
\toprule
\multicolumn{1}{c|}{} & \multicolumn{10}{c|}{MPC Blocks} & \multicolumn{1}{c}{} \\
\cmidrule(lr){2-11}
\multirow{2}{*}{Method}
& \multicolumn{2}{c|}{Softmax}
& \multicolumn{2}{c|}{\shortstack{Within\\Softmax}}
& \multicolumn{2}{c|}{LN1}
& \multicolumn{2}{c|}{GELU}
& \multicolumn{2}{c|}{LN2}
& \multirow{2}{*}{Total} \\
& in & out & in & out & in & out & in & out & in & out & -- \\
\midrule
Minimal  & 6  & 6  & -- & -- & 3 & 3 & 12 & 12 & 3 & 3 & 48 \\
BOLT     & 12 & 12 & -- & -- & 6 & 6 & 24 & 24 & 6 & 6 & 96 \\
BLB      & 12 & 12 & 12 & 12 & 6 & 6 & 72 & 24 & 6 & 6 & 168 \\
\system  & 6  & 6  & -- & -- & 3 & 3 & 36 & 12 & 3 & 3 & 72 \\
\bottomrule
\end{tabular}
}
\end{table}

\noindent\textbf{Relationship to Complex FHE Packing.}
BOLT and BLB use real-only conversion at FHE--MPC boundaries, transmitting one ciphertext per real vector~\cite{pang2024bolt,xu2025blb}. Powerformer uses complex packing inside its FHE kernels to halve key-switching counts via a generic blockwise CCMM~\cite{park2025powerformer,jiang2018secure}, but its output remains in column packing throughout.

\system{} differs in two ways. First, the attention kernels are not generic CCMM reductions but exploit the diagonal structure of $QK^\top$ and the token-shift structure of $PV$ to produce specialized folded-diagonal and head-major output formats that chain across stages via SCP without repacking. Second, \system{} uses complex packing at FHE--MPC boundaries, packing two real boundary tensors into one complex ciphertext and halving conversion payload. This boundary-level use is absent from Powerformer's FHE-only pipeline.

\noindent\textbf{Modulus Trimming Optimization.}
CKKS-to-MPC conversion payload depends on the serialized ciphertext size, which scales with the remaining modulus chain.
Ciphertexts should arrive at the boundary at the shortest admissible modulus level, since no further CKKS multiplications are needed.
If a ciphertext reaches the boundary at a higher level, $P_1$ can modulus switch it down before running
$\Pi_{\mathrm{C2M}}^{\mathbb{C}}$ to reduce transmitted payload and improve $T_{\mathrm{conv}}$ in
Eq.~\eqref{eq:ttotal}.
This optimization applies only to CKKS-to-MPC conversion since MPC-to-CKKS outputs must retain sufficient modulus for
subsequent CKKS evaluation.

Let $L_{\mathrm{conv}}$ be the target conversion level and set $q = Q_{L_{\mathrm{conv}}}$. We choose $L_{\mathrm{conv}}$ as the smallest level that keeps conversion statistically safe:
\[
\log_2(q) \ge \ell + \sigma + 1
\qquad\text{and}\qquad
q/2 > \Delta \cdot B_{\max}.
\]
$P_1$ repeatedly applies $\mathrm{ModSwitchToNext}$ (which removes the last RNS prime without rescaling) until reaching $L_{\mathrm{conv}}$, then runs $\Pi_{\mathrm{C2M}}^{\mathbb{C}}$ unchanged. This reduces only the transmission size; the MPC ring size and ring-to-field conversion cost are unaffected.

\section{Experimental Setup}
\label{sec:experiments}

We implement the main \system{} pipeline with PhantomFHE for CKKS and EzPC/SCI~\cite{chandran2019ezpc,rathee2020cryptflow2} for MPC. The supporting matched-backend comparison uses Liberate.FHE and CrypTen~\cite{knott2021crypten}. All FHE kernels run on NVIDIA A100 GPUs. MPC values use fixed-point encoding with $F=13$ fractional bits over $\mathbb{Z}_{2^\ell}$ with $\ell=43$. Table~\ref{tab:he_params} lists the CKKS parameters per FHE block, chosen so that the modulus chain is close to minimal at each conversion boundary.

\begin{table}[t]
\centering
\footnotesize
\setlength{\tabcolsep}{3.5pt}
\renewcommand{\arraystretch}{1.05}
\caption{CKKS parameterization per FHE block.}
\label{tab:he_params}
\begin{tabular}{@{}l|cccc@{}}
\toprule
 & Block 1 & Block 2 & Block 3 & Block 4 \\
\midrule
$N$   & 32768 & 32768 & 32768 & 65536 \\
Depth & 10    & 7     & 6     & 4     \\
Scale & $2^{42}$ & $2^{42}$ & $2^{40}$ & $2^{40}$ \\
\bottomrule
\end{tabular}
\end{table}

Following~\cite{pang2024bolt,lu2025bumblebee,xu2025blb}, we evaluate GPT2-base, BERT-base, and BERT-large. For accuracy, we use the GLUE tasks MRPC, RTE, and SST-2~\cite{wang2018glue}. We start from public Hugging Face checkpoints~\cite{huggingface}, fine-tune them following Powerformer-style training, and distill surrogate parameters for encrypted inference; training details are given in \hyperref[app:powerformer_training]{Appendix~J}.

Unless otherwise specified, we report four standard network profiles: LAN at 1\,Gbps with 0.3\,ms RTT, WAN1 at 400\,Mbps with 4\,ms RTT, WAN2 at 100\,Mbps with 4\,ms RTT, and WAN3 at 100\,Mbps with 80\,ms RTT.

\noindent\textbf{Latency Methodology.}
All FHE kernel and MPC local computation times are measured on a single A100 GPU, reported as the mean of three independent trials after one warm-up iteration. LAN and WAN latencies are computed by adding measured local compute time to a network overlay derived from measured byte counts and round counts under each profile, following the same methodology as prior work~\cite{pang2024bolt,xu2025blb,lu2025bumblebee}.

\noindent\textbf{Baseline Sources.}
The baseline metrics for prior systems are taken from the corresponding papers under their stated experimental settings~\cite{pang2024bolt,lu2025bumblebee,xu2025blb,moon2025thor,park2025powerformer}.
When a paper does not directly report a particular aggregate but provides a complete protocol specification, we derive the missing quantity by counting operations under the same tensor shapes. For example, the BOLT and BLB entries in Table~\ref{tab:ct_io_swapped} are obtained by analyzing their reported MPC protocols and boundary ciphertext dimensions.

\section{Evaluation}
\label{sec:performance}

\subsection{Microbenchmarks}
\label{sec:he_microbench}

\subsubsection{FHE Library Comparison}
To contextualize the runtimes reported in Section~\ref{sec:e2e_latency}, we benchmark the two GPU CKKS backends used in this paper, PhantomFHE~\cite{yang2024phantom} and Liberate.FHE~\cite{LiberateFHE}, under a common parameter set.
Table~\ref{tab:he_primitive_bench} reports the latency of core CKKS primitives, where PhantomFHE is on average about an order of magnitude faster than Liberate.FHE, consistent with prior library comparisons~\cite{domingo2025fideslib}.
PhantomFHE does not support bootstrapping, while Liberate.FHE does, which is why FHE-only baselines that require bootstrapping are evaluated under Liberate.FHE in Table~\ref{tab:breakdown_thor_powerformer_ours}.

\begin{table}[t]
\centering
\footnotesize
\setlength{\tabcolsep}{5.2pt}
\renewcommand{\arraystretch}{1.15}
\caption{CKKS primitive latency (ms) on A100 GPU ($N{=}65536$, depth 16).}
\label{tab:he_primitive_bench}
\begin{tabular}{l|rrrrr}
\toprule
Library & add & pt mul & ct mul & rot & conj \\
\midrule
PhantomFHE     & 0.07 & 0.36 & 2.65 & 2.32 & 2.33 \\
Liberate.FHE   & 3.04 & 13.72 & 48.66 & 32.70 & 22.73 \\
\bottomrule
\end{tabular}
\end{table}

\subsubsection{Multiplicative Depth Sensitivity}
Figure~\ref{fig:he_depth_latency} shows that CKKS primitive latency increases with multiplicative depth for both libraries, with higher and more variable latency for Liberate.FHE.
This supports our design choice to keep FHE depth low and refresh via FHE--MPC conversions.

\begin{figure}[t]
  \centering
  \includegraphics[width=1\linewidth]{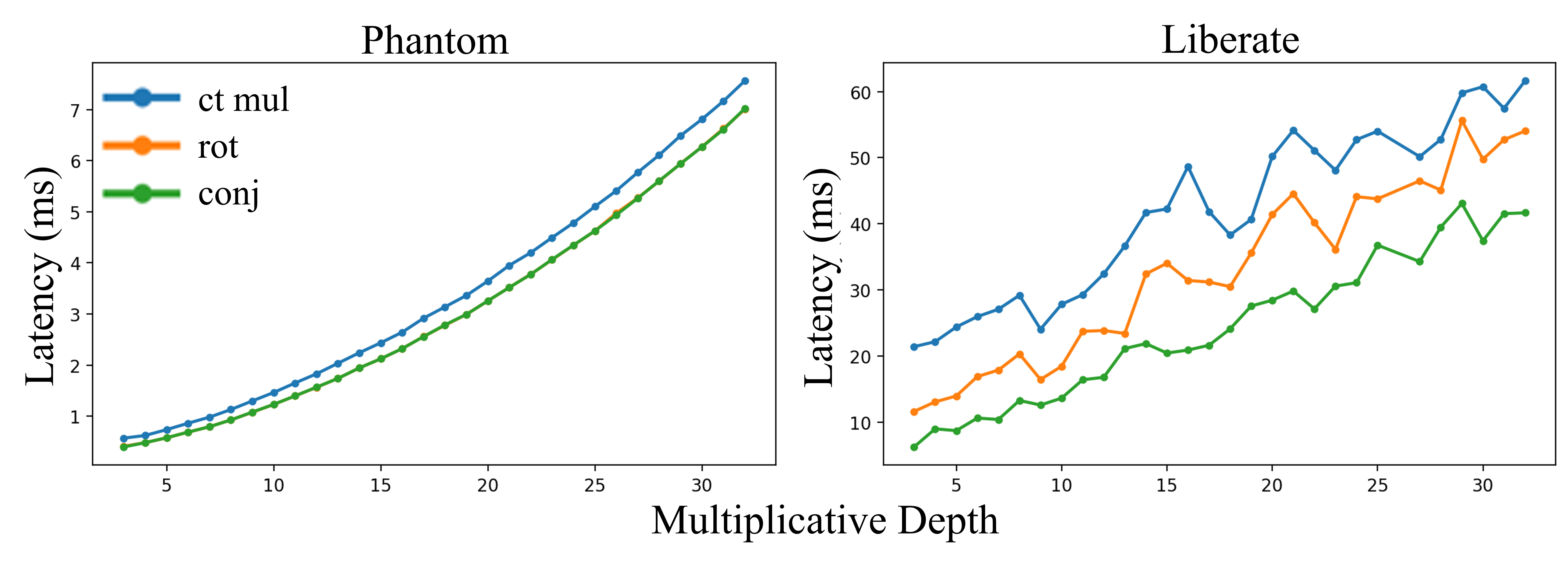}
  \caption{CKKS primitive latency vs.\ multiplicative depth for PhantomFHE and Liberate.FHE.}
  \label{fig:he_depth_latency}
\end{figure}

\subsubsection{Conversion Protocol Comparison}
We benchmark CKKS--MPC conversion for two real-valued ciphertexts under four standard network profiles.
The real baseline protocol~\cite{boemer2020mp2ml,pang2024bolt,xu2025blb} converts the ciphertexts independently and transmits two ciphertexts per direction.
Our complex variant packs the two vectors into the real and imaginary parts of one ciphertext and transmits one ciphertext per direction. We report end-to-end conversion latency, including CKKS-side complexification and decomplexification overhead, and network delay.
The total payload per conversion pair is 20.97\,MB for Real, 10.49\,MB for Complex, and 6.41\,MB for Complex plus trimming.
Table~\ref{tab:conversion_latency} reports latency and speedup across networks.

\begin{table}[t]
\centering
\footnotesize
\setlength{\tabcolsep}{4.2pt}
\renewcommand{\arraystretch}{1.16}
\caption{CKKS--MPC conversion latency under standard network profiles. $N{=}65536$, depth 8.}
\label{tab:conversion_latency}
\begin{tabular}{@{}l|cccc@{}}
\toprule
Network & LAN & WAN1 & WAN2 & WAN3 \\
\midrule
Real Time (ms)           & 168.97 & 435.43 & 1693.72 & 1997.72 \\
Complex Time (ms)        & 85.39  & 218.62 & 847.76  & 999.76 \\
Complex + Trim Time (ms) & 51.88  & 131.69 & 508.57  & 599.62 \\
\midrule
Speedup                  & 1.98$\times$ & 1.99$\times$ & 2.00$\times$ & 2.00$\times$ \\
+ Trim Speedup           & 3.26$\times$ & 3.31$\times$ & 3.33$\times$ & 3.33$\times$ \\
\bottomrule
\end{tabular}
\end{table}

The nearly $2\times$ reduction confirms that the complex protocol halves boundary communication, while trimming provides additional gain when the ciphertext reaches the boundary with excess modulus.

\subsection{End-to-end Evaluation}
\label{sec:e2e_latency}

\subsubsection{Latency and Communication}
Our main evaluation uses the PhantomFHE+EzPC pipeline and reports end-to-end latency together with online MPC communication. Section~\ref{sec:backend_comparison} compares the hybrid and FHE-only paradigms under a matched backend.

\begin{table}[t]
\centering
\scriptsize
\setlength{\tabcolsep}{3pt}
\renewcommand{\arraystretch}{1.12}
\caption{End-to-end latency (min). Prior-system numbers from their papers.}
\label{tab:e2e_latency}
\resizebox{\columnwidth}{!}{%
\begin{tabular}{@{}l|l|cccc|c@{}}
\toprule
Model & Framework & LAN & WAN1 & WAN2 & WAN3 & Comm (GB) \\
\midrule
\multirow{4}{*}{\makecell[l]{GPT2-base\\$m=64$}}
& BOLT       & 7.6 & 12.4 & 31.8 & 51.6 & 34.8 \\
& BumbleBee  & 2.9 &  3.5 &  4.9 & 12.0 &  2.5 \\
& BLB        & 2.0 &  2.5 &  3.9 &  8.1 &  1.5 \\
& \system    & \textbf{1.6} & \textbf{2.0} & \textbf{3.2} & \textbf{6.0} & \textbf{1.0} \\
\midrule
\multirow{4}{*}{\makecell[l]{BERT-base\\$m=128$}}
& BOLT       & 15.5 & 18.5 & 77.9 & 122.7 & 63.6 \\
& BumbleBee  &  4.3 &  6.2 & 11.6 & 21.3  & 5.8 \\
& BLB        &  2.5 &  3.8 & 6.6 & 13.2  & 3.0 \\
& \system    & \textbf{2.1} & \textbf{2.9} &  \textbf{4.6} & \textbf{8.9}  & \textbf{2.2} \\
\midrule
\multirow{4}{*}{\makecell[l]{BERT-large\\$m=128$}}
& BOLT       & 43.8 & 49.3 & 208.2 & 260.6 & 158.9 \\
& BumbleBee  &  9.7 & 15.6 &  30.8 &  49.2 &  15.2 \\
& BLB        &  6.6 &  9.8 &  16.2 &  24.9 &   7.8 \\
& \system    & \textbf{5.3} & \textbf{7.4} & \textbf{12.8} & \textbf{20.6} & \textbf{5.8} \\
\bottomrule
\end{tabular}
}
\end{table}

Following prior work~\cite{pang2024bolt,lu2025bumblebee,xu2025blb}, the Comm column reports only the online communication of the interactive MBMax and GELU protocols, excluding FHE--MPC conversion payload. We report boundary-conversion effects separately through Table~\ref{tab:ct_io_swapped}; the latency columns include all costs, including conversion. The total communication including conversion payload is approximately $1.1\times$ the online-only figure for \system{} under BERT-base; comparable multipliers apply to the baselines, as detailed in \hyperref[app:total_comm]{Appendix~H}. The relative ranking is preserved under total communication.

\system{} achieves the lowest latency and online MPC communication across all three models. Relative to BLB, \system{} is 1.3$\times$ faster on average while sending 1.4$\times$ less online MPC traffic. Relative to BumbleBee and BOLT, the corresponding average latency and communication gains are 2.1$\times$/2.6$\times$ and 9.8$\times$/30.4$\times$, respectively. Section~\ref{sec:backend_comparison} provides a matched-backend comparison that controls for library differences.

For context, plaintext inference on a single A100 GPU takes approximately 4.5\,ms for BERT-base, 9.1\,ms for BERT-large, and 4.2\,ms for GPT2-base. The secure pipeline is roughly $3{\times}10^4$ slower, which is consistent with the overhead reported by prior hybrid systems~\cite{pang2024bolt,lu2025bumblebee,xu2025blb}.

\subsubsection{Accuracy}
Table~\ref{tab:e2e_accuracy_powerformer} compares encrypted inference against a public uncased baseline (standard BERT with exact softmax, LayerNorm, and GELU) and a fine-tuned teacher that uses the same MBMax, MBLN, and approximate GELU surrogates as the encrypted pipeline. The uncased baseline is higher because it uses exact nonlinearities. The small gap of 0.25--0.35\% between the teacher and \system{} shows that the FHE+MPC execution itself introduces minimal quality loss, and that the remaining accuracy drop is due to the surrogate approximations, not the encrypted computation.

\begin{table}[t]
\centering
\scriptsize
\setlength{\tabcolsep}{5pt}
\renewcommand{\arraystretch}{1}
\caption{Accuracy (and F1 for MRPC) on selected GLUE tasks.}
\label{tab:e2e_accuracy_powerformer}
\resizebox{\columnwidth}{!}{%
\begin{tabular}{l|c|cc|ccc}
\toprule
Task & \#Val
& \shortstack{Uncased\\baseline}
& \shortstack{Fine-tuned\\teacher}
& \system{}
& \shortstack{$\Delta$ w.r.t.\\uncased}
& \shortstack{$\Delta$ w.r.t.\\teacher} \\
\midrule
SST-2 & 872 & 92.43\% & 92.13\% & 91.78\% & $-0.65\%$ & $-0.35\%$ \\
MRPC (acc)  & 408 & 87.75\% & 87.01\% & 86.76\% & $-0.99\%$ & $-0.25\%$ \\
MRPC (F1)   & 408 & 91.17\% & 90.82\% & 90.34\% & $-0.83\%$ & $-0.48\%$ \\
RTE   & 277 & 72.56\% & 69.43\% & 70.03\% & $-2.53\%$ & $+0.60\%$ \\
\bottomrule
\end{tabular}
}
\end{table}

\subsection{Hybrid vs.\ FHE-Only Paradigm Comparison}
\label{sec:backend_comparison}

A natural question is whether the hybrid FHE--MPC approach is worthwhile compared to FHE-only pipelines that avoid interaction entirely. To answer this, we implement \system{} with Liberate.FHE and CrypTen so that the comparison uses the same FHE backend and nonlinear approximations as the FHE-only baselines THOR and Powerformer.

Table~\ref{tab:breakdown_thor_powerformer_ours} shows the layer-wise breakdown. FHE-only systems offer a cleaner non-interactive solution with no communication during inference. However, if the deployment permits lightweight client--server interaction, the hybrid approach is substantially more efficient. Under LAN, eliminating bootstrapping yields a 3.5$\times$ speedup over THOR and 1.9$\times$ over Powerformer. The advantage narrows under WAN as MPC round-trip costs grow, so practitioners should choose based on their network conditions.

Even comparing only FHE linear components, \system{} is faster. For BERT-base, the FHE linear time per layer is 172.6\,s for \system{} versus 229.4\,s for THOR and 230.9\,s for Powerformer, thanks to SCP eliminating repacking and shallower depth reducing per-operation cost, as shown in Figure~\ref{fig:he_depth_latency}.

\begin{table}[t]
\centering
\scriptsize
\setlength{\tabcolsep}{4.4pt}
\renewcommand{\arraystretch}{1.12}
\caption{Layer-wise latency (s) on BERT-base (Liberate.FHE + CrypTen).}
\label{tab:breakdown_thor_powerformer_ours}
\resizebox{\columnwidth}{!}{%
\begin{tabular}{@{}l|cccccc@{}}
\toprule
Operation
& THOR
& \shortstack{Power\\former}
& \shortstack{Ours\\(LAN)}
& \shortstack{Ours\\(WAN1)}
& \shortstack{Ours\\(WAN2)}
& \shortstack{Ours\\(WAN3)} \\
\midrule
Attention layer      & 49.77  & 57.65 & 40.54 & 40.54 & 40.54 & 40.54 \\
Attention score      & 32.53  & 28.76 & 31.10 & 31.10 & 31.10 & 31.10 \\
Softmax              & 15.53  &  0.75 &  1.88 &  2.89 &  4.43 &  8.52 \\
Multi-head attention & 48.06  & 41.49 & 38.16 & 38.16 & 38.16 & 38.16 \\
LayerNorm1           &  7.13  &  0.37 &  0.98 &  1.23 &  1.50 &  3.45 \\
FC1                  & 49.80  & 59.21 & 32.50 & 32.50 & 32.50 & 32.50 \\
GELU                 & 29.42  &  8.31 &  3.38 &  6.24 & 14.64 & 29.98 \\
FC2                  & 49.19  & 43.77 & 30.27 & 30.27 & 30.27 & 30.27 \\
LayerNorm2           &  4.10  &  0.30 &  0.92 &  1.22 &  1.57 &  3.70 \\
Bootstrappings       & 337.86 & 103.72 & 0 & 0 & 0 & 0 \\
\midrule
Total                & 623.39 & 344.33 & 179.73 & 184.15 & 194.71 & 218.22 \\
\bottomrule
\end{tabular}
}
\end{table}

\subsection{Ablation Studies}
\label{sec:ablation}

\subsubsection{GELU Conversion Cost Analysis}
We study the inbound CKKS--MPC boundary of the GELU block and compare two pipeline variants that differ only in where the GELU polynomials are evaluated. In the minimal baseline, the pipeline converts only the GELU input tensor $\mathbf{x}$ into MPC using minimal packing, evaluates the polynomial candidates in MPC, and then applies $\mathrm{cmp}$ and $\mathrm{mux}$.
In \system's CKKS pre-evaluation variant, we evaluate the polynomial candidates, $F_0(\mathbf{x})$ and $F_1(\mathbf{x})$, in CKKS and convert $\mathbf{x}$ together with the candidates into MPC, so that MPC performs only $\mathrm{cmp}$ and $\mathrm{mux}$. \hyperref[app:gelu_protocols]{Appendix~B} describes the details of the two versions of GELU protocol adopted by \system{}.

Figure~\ref{fig:gelu_cost} reports the end-to-end latency difference between CKKS pre-evaluation and the baseline for BERT-base and BERT-large, decomposed into the added conversion overhead and the computation change under our cost model.
With PhantomFHE, the MPC cost reduction dominates on WAN1--WAN3, making CKKS pre-evaluation preferable under the decision rule in Eq.~\eqref{eq:rule}, while the net change on LAN is small, we use expanded packing for our main implementation pipeline.
With the alternative Liberate.FHE implementation, however, CKKS pre-evaluation incurs substantially higher CKKS-side cost, so the net improvement is no longer robust across network profiles.
Since the deployment network may be unknown a priori, we therefore adopt the no CKKS pre-evaluation GELU split for the results in Table~\ref{tab:breakdown_thor_powerformer_ours}.

\begin{figure}[t]
  \centering
  \includegraphics[width=\linewidth]{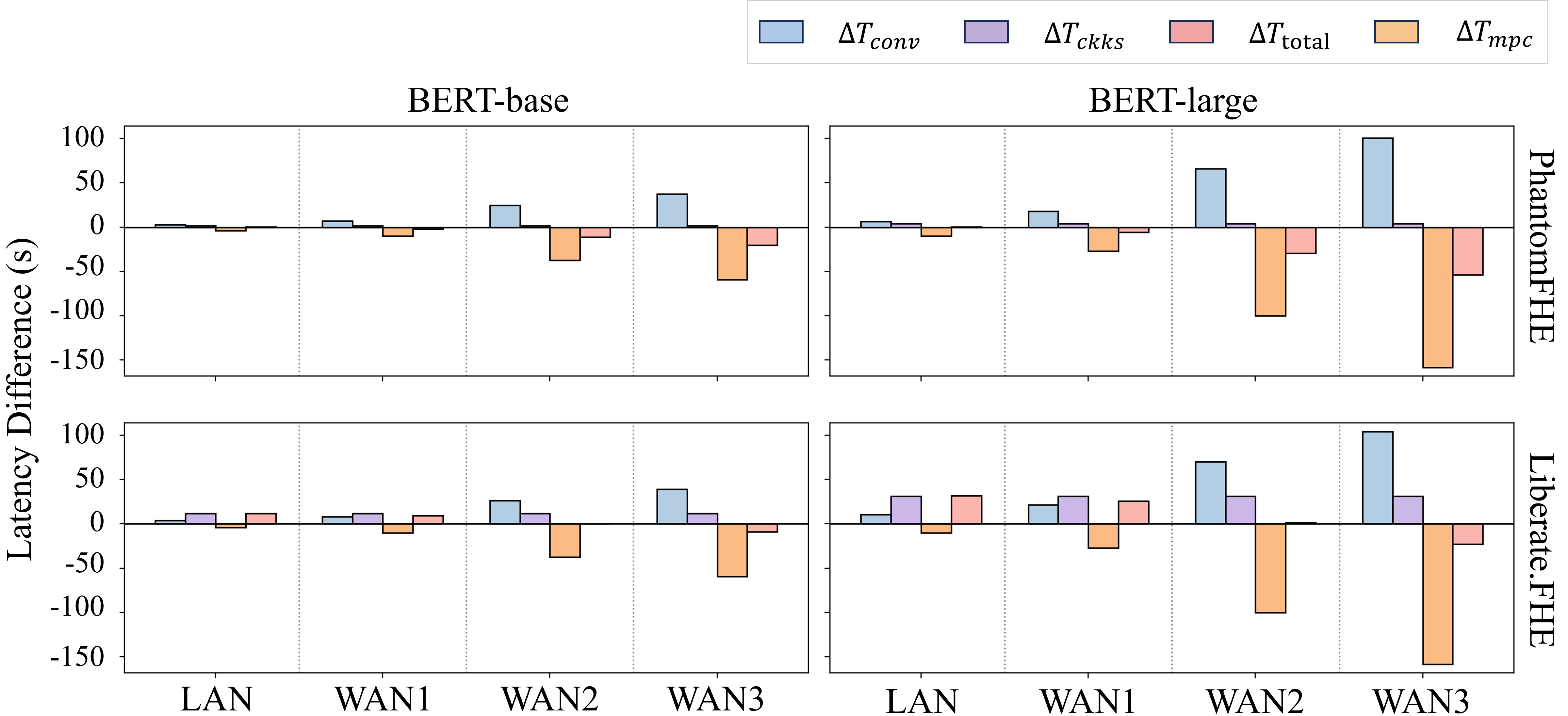}
  \caption{GELU latency deltas between \system{} and the minimal baseline, decomposed into computation savings and conversion overhead.}
  \label{fig:gelu_cost}
\end{figure}

\subsubsection{Contribution of Each Optimization}
Table~\ref{tab:latency_comm_blb_baseline_ablation} compares \system{} against BLB and two internal variants. In \texttt{w/o CC}, we disable complex conversion and remove GELU pre-evaluation. In \texttt{w/o SCP}, we keep the same kernels but force additional repacking between FHE stages. The comparison shows that \system{}'s gains come from lower online MPC traffic, fewer interactive rounds, and the elimination of cross-stage FHE remaps.

Without SCP, ciphertext repacking is needed at three direct FHE--FHE edges per layer: QKV$\to$Score, QKV$\to$Value, and Value$\to$OUT. Each edge remaps $B$ ciphertext blocks via the Halevi--Shoup slot-permutation decomposition~\cite{halevi2014algorithms}, requiring $\log_2 m$ rotations per block; see \hyperref[app:repack-algo]{Appendix~G.1} for details. For BERT-base with $m{=}128$ and $B{=}24$ total blocks across the three edges, this gives $24 \times 7 = 168$ extra rotations per layer, or $168 \times 2.32$\,ms $= 0.39$\,s per layer, totalling 4.7\,s over 12 layers. The observed \texttt{w/o SCP} gap is about 12\,s, which is larger because the realized remap also includes masking multiplications and recombination additions. The nearly constant slowdown across network profiles confirms that SCP primarily reduces FHE compute rather than communication.

\begin{table}[t]
\centering
\scriptsize
\setlength{\tabcolsep}{5pt}
\renewcommand{\arraystretch}{1.12}
\caption{Ablation of \system{} components against BLB.}
\label{tab:latency_comm_blb_baseline_ablation}
\resizebox{\columnwidth}{!}{%
\begin{tabular}{@{}l|l|cccc|c@{}}
\toprule
Model & Framework & LAN & WAN1 & WAN2 & WAN3 & Comm\ (GB) \\
\midrule
\multirow{4}{*}{\makecell[l]{GPT2-base\\$m=64$}}
& Baseline  & 2.0 & 2.5 & 3.9 & 8.1 & 1.5 \\
& w/o CC    & 1.8 & 2.3 & 3.6 & 7.4 & 1.3 \\
& w/o SCP   & 1.8 & 2.2 & 3.4 & 6.2 & 1.0 \\
& \system   & \textbf{1.6} & \textbf{2.0} & \textbf{3.2} & \textbf{6.0} & \textbf{1.0} \\
\midrule
\multirow{4}{*}{\makecell[l]{BERT-base\\$m=128$}}
& Baseline  & 2.5 & 3.8 & 6.6 & 13.2 & 3.0 \\
& w/o CC    & 2.3 & 3.5 & 6.0 & 11.4 & 2.7 \\
& w/o SCP   & 2.3 & 3.1 & 4.9 & 9.2 & 2.2 \\
& \system   & \textbf{2.1} & \textbf{2.9} & \textbf{4.6} & \textbf{8.9} & \textbf{2.2} \\
\midrule
\multirow{4}{*}{\makecell[l]{BERT-large\\$m=128$}}
& Baseline  & 6.6 & 9.8 & 16.2 & 24.9 & 7.8 \\
& w/o CC    & 6.3 & 9.2 & 15.4 & 23.6 & 7.3 \\
& w/o SCP   & 6.1 & 8.2 & 13.6 & 21.4 & 5.8 \\
& \system   & \textbf{5.3} & \textbf{7.4} & \textbf{12.8} & \textbf{20.6} & \textbf{5.8} \\
\bottomrule
\end{tabular}
}
\end{table}

\section{Related Work}
\label{sec:related}

We group prior secure machine learning inference systems into three categories: FHE-only, MPC-only, and hybrid FHE--MPC pipelines.

\subsection{FHE-only Secure Inference}
CryptoNets pioneered end-to-end leveled-FHE inference by replacing nonlinearities with low-degree polynomials~\cite{gilad-bachrach2016cryptonets}. CHET~\cite{dathathri2019chet} automates layout selection for CNN inference under FHE. For Transformers, THE-X systematizes polynomial approximations for softmax, layer normalization, and GELU to enable non-interactive BERT inference~\cite{chen2022thex}. CipherFormer~\cite{wang2024cipherformer} extends this to GPT-2 with CKKS packing optimizations. NEXUS~\cite{zhang2025nexus}, THOR~\cite{moon2025thor}, and Powerformer~\cite{park2025powerformer} further reduce overhead via CKKS-friendly packing, compression, and rotation-efficient attention schedules. A recent survey~\cite{li2025ptiSurvey} provides a comprehensive overview of private Transformer inference techniques.

FHE-only inference is non-interactive, but it is constrained by expensive ciphertext key-switching operations such as rotations and relinearization. It also often requires deep multiplicative depth with long modulus chains or heavy bootstrapping.

\subsection{MPC-only Secure Inference}
MPC-only systems evaluate the full model on secret shares, using arithmetic protocols for linear layers and specialized sub-protocols for nonlinear operations. SecureML~\cite{mohassel2017secureml}, MiniONN~\cite{liu2017minionn}, ABY~\cite{Demmler2015ABY}, and CryptFlow~\cite{kumar2020cryptflow} established practical CNN inference with fixed-point arithmetic and mixed-protocol execution. Falcon~\cite{wagh2021falcon} and XONN~\cite{riazi2019xonn} explore three-party and binary-circuit alternatives.

For Transformers and LLMs, MPCFormer~\cite{li2023mpcformer} and follow-on two-party systems co-design efficient protocols for attention, including softmax, GELU, and layer normalization, to scale to large hidden sizes and vocabularies~\cite{hou2023ciphergpt,dong2025puma,zeng2024securegpt,luo2024secformer,kei2025shaft,gupta2024sigma,huang2024secbert5}. MPC-only inference requires frequent interaction, so latency and bandwidth can become the bottleneck for deep models and wide activations.

\subsection{Hybrid FHE--MPC Secure Inference}
Hybrid systems typically execute wide linear algebra under FHE and delegate nonlinearities and selected reductions to MPC.
Gazelle~\cite{juvekar2018gazelle} introduced this split for CNNs, while Delphi~\cite{mishra2020delphi} and Cheetah~\cite{huang2022cheetah} improved latency through automated partitioning and optimized FHE and MPC building blocks.
For Transformers, Iron~\cite{hao2022iron} applies the split to BERT, BOLT~\cite{pang2024bolt} improves packing and conversion to avoid bootstrapping while preserving high-accuracy nonlinearities, BumbleBee~\cite{lu2025bumblebee} further optimizes hybrid execution with communication-efficient protocols and GPU-accelerated kernels, and BLB~\cite{xu2025blb} reduces boundary overhead via finer-grained operator fusion and fewer FHE--MPC conversions.
Hybrid pipelines balance ciphertext computation with interactive protocols, but they can inherit the overhead of both and are often dominated by additional FHE--MPC conversion costs at the boundaries.

\section{Conclusion}
We presented \system{}, a two-party hybrid FHE--MPC framework for private Transformer inference that frames packing, conversion, and nonlinear execution as a single co-design problem. \system{} uses Stage Compatible Patterns to keep adjacent FHE kernels layout-compatible, a secure complex CKKS--MPC conversion to reduce boundary payload, and a calibrated PhantomFHE decision rule to choose beneficial expanded boundaries.

This design reduces online communication and end-to-end latency across GPT2-base, BERT-base, and BERT-large relative to prior hybrid systems, with average gains of 30.4$\times$/9.8$\times$ against BOLT, 2.6$\times$/2.1$\times$ against BumbleBee, and 1.4$\times$/1.3$\times$ against BLB (communication/latency), while also outperforming FHE-only pipelines on BERT-base under a matched backend. On BERT-base, encrypted inference maintains near-plaintext accuracy on selected GLUE tasks. The broader lesson is that hybrid private inference should be designed around stage compatibility and boundary economics, not primitive choice alone.

\subsection{Limitations and Future Work}
Like prior systems that rely on surrogate nonlinearities~\cite{moon2025thor,park2025powerformer,xu2025blb}, \system{} still requires task-specific distillation. Extending the framework to settings with less task-specific retraining, stronger malicious-security guarantees, or richer backend-aware boundary models remains important future work.

\bibliographystyle{IEEEtran}
\bibliography{references}

\appendices
\clearpage
\section{Additional Details on FHE Kernels}
\label{app:he-kernels}

\subsection{Realization of Ciphertext Shifts}
\label{app:he-shifts}

We use the ciphertext shift idea of Jiang et al.~\cite{jiang2018secure} and instantiate it under our segment-stacked packing. Under this packing, logical shifts translate into CKKS rotations together with masks, so their cost depends on both rotation distance and the number of masked terms. Algorithms~\ref{alg:he-segshift} and~\ref{alg:he-intrashift} provide the FHE realizations of the inter-segment shift $\Phi^\Delta$ and the intra-segment shift $\Psi^t$, respectively.

\begin{algorithm}
  \caption{FHE realization of $\Phi^\Delta$}
  \label{alg:he-segshift}
  \begin{algorithmic}[1]\small
    \Require $\langle \mathbf{x}\rangle$ encrypts $\mathbf{x}\in\mathbb{C}^n$ with $n=m\cdot N_{\mathrm{seg}}$, shift $\Delta$
    \Ensure ciphertext encrypting $\Phi^\Delta(\mathbf{x})$
    \State \textbf{Output} $\textsf{rot}(\langle \mathbf{x}\rangle;\Delta m)$
  \end{algorithmic}
\end{algorithm}

\begin{algorithm}
  \caption{FHE realization of $\Psi^t$}
  \label{alg:he-intrashift}
  \begin{algorithmic}[1]\small
    \Require $\langle \mathbf{x}\rangle$ encrypts $\mathbf{x}\in\mathbb{C}^n$ with $n=m\cdot N_{\mathrm{seg}}$, shift $t$
    \Ensure ciphertext encrypting $\Psi^t(\mathbf{x})$
    \State $t \gets t \bmod m$
    \State $(s_1,s_2) \gets \bigl(t,(t-m)\bmod n\bigr)$
    \State Define $\mathbf{H},\mathbf{U}\in\{0,1\}^{m\times N_{\mathrm{seg}}}$ by
    \Statex \hspace{\algorithmicindent}
    $\mathbf{H}_{r,s}=\mathbf{1}\{0\le r<m-t\}\,;\quad
    \mathbf{U}_{r,s}=\mathbf{1}\{m-t\le r<m\}$
    \State $(\mathbf{h},\mathbf{u}) \gets \bigl(\operatorname{vec}(\mathbf{H}),\operatorname{vec}(\mathbf{U})\bigr)$
    \State \textbf{Output} $\bigl(\textsf{rot}(\langle \mathbf{x}\rangle;s_1)\odot \mathbf{h}\bigr) \;+\; \bigl(\textsf{rot}(\langle \mathbf{x}\rangle;s_2)\odot \mathbf{u}\bigr)$
  \end{algorithmic}
\end{algorithm}

Several kernels also require cyclic wrap-around within only the first $L$ slots, while keeping slots $L,\dots,n-1$ unchanged at zero. For a shift amount $\tau\in\{0,\dots,L-1\}$, we use two binary masks $\mathbf{a}_{L,\tau}$ and $\mathbf{b}_{L,\tau}$ to split the first $L$ slots into the portion that remains in place and the portion that wraps around. Concretely, we set
\[
  \begin{aligned}
    (\mathbf{a}_{L,\tau})_i &= \mathbf{1}\{0\le i < L-\tau\},\\
    (\mathbf{b}_{L,\tau})_i &= \mathbf{1}\{L-\tau\le i < L\}.
  \end{aligned}
\]
Algorithm~\ref{alg:he-rotfirst} realizes this restricted rotation using two CKKS rotations followed by masking and addition.

\begin{algorithm}
  \caption{FHE realization of $\mathsf{RotFirst}_{L}(\cdot;\tau)$}
  \label{alg:he-rotfirst}
  \begin{algorithmic}[1]\small
    \Require $\langle \mathbf{x}\rangle$ encrypts $\mathbf{x}\in\mathbb{C}^n$, integers $L\le n$, $\tau$
    \Ensure ciphertext encrypting the cyclic left shift of $\mathbf{x}$ by $\tau$ within slots $\{0,\dots,L-1\}$, leaving slots $L,\dots,n-1$ unchanged
    \State $\tau \gets \tau \bmod L$
    \State Define masks $\mathbf{a}_{L,\tau},\mathbf{b}_{L,\tau}\in\{0,1\}^n$ by
    \Statex \hspace{\algorithmicindent}
    $(\mathbf{a}_{L,\tau})_i=\mathbf{1}\{0\le i < L-\tau\}$,\quad
    $(\mathbf{b}_{L,\tau})_i=\mathbf{1}\{L-\tau\le i < L\}$
    \State $y_1 \gets \textsf{rot}(\langle \mathbf{x}\rangle;\tau)\odot \mathbf{a}_{L,\tau}$
    \State $y_2 \gets \textsf{rot}(\langle \mathbf{x}\rangle;(\tau-L)\bmod n)\odot \mathbf{b}_{L,\tau}$
    \State \textbf{Output} $y_1 + y_2$
  \end{algorithmic}
\end{algorithm}

When only the first $C=c_{\mathrm{used}}$ segments are active, meaning $L=Cm$ slots, we use the same restricted-rotation primitive to implement wrap-around modulo $C$ segments. Algorithm~\ref{alg:he-restrictsegshift} defines $\Phi_C^\Delta$ by applying $\mathsf{RotFirst}_{Cm}$ with shift $\Delta m$.

\begin{algorithm}
  \caption{FHE realization of $\Phi_C^\Delta$}
  \label{alg:he-restrictsegshift}
  \begin{algorithmic}[1]\small
    \Require $\langle \mathbf{x}\rangle$ encrypts $\mathbf{x}\in\mathbb{C}^n$ with $n=m\cdot N_{\mathrm{seg}}$, segment size $m$, active segments $C$, shift $\Delta$
    \Ensure ciphertext encrypting $\Phi_C^\Delta(\mathbf{x})$, i.e., wrap-around modulo $C$ active segments
    \State $(C,\Delta) \gets \bigl(\min(C, N_{\mathrm{seg}}),\ \Delta \bmod C\bigr)$
    \State $(L,\tau) \gets (Cm,\ \Delta m)$
    \State \textbf{Output} $\mathsf{RotFirst}_{L}\!\left(\langle \mathbf{x}\rangle;\tau\right)$
  \end{algorithmic}
\end{algorithm}

\subsection{Plaintext--Ciphertext Projection Kernel}
\label{app:he-kernels:proj}

We follow Section~3.2 for packing, dimensions, and the BSGS accumulation and fold.

\subsubsection{Plaintext Weights}
Under complexification, we use $U=\lceil G/2\rceil$ inputs
$\langle \tilde{\mathbf{x}}^{(u)}\rangle=\langle \mathbf{x}^{(2u)}+i\,\mathbf{x}^{(2u+1)}\rangle$ and we encode complex weights.
Let $\bar W$ denote the pre-permuted weight matrix used by this projection kernel. Its columns are ordered to match the target packing of the output.
The row index addresses input channels and the column index addresses output channels.
For output block $b$, giant index $p_g$, complex-pair index $u$, and baby index $q$, the value stored in active segment $c\in\{0,\dots,C-1\}$ is
\[
  \alpha(c,q)=(c+q)\bmod C,
  \qquad
  \beta(c,p_g)=(c-p_gN_1)\bmod C.
\]
\noindent Then
\[
  \begin{aligned}
    \tilde w^{(b)}_{u,p_g,q}(c)
    &=
    \bar W_{(2u)C + \alpha(c,q),\; bC + \beta(c,p_g)} \\
    &\quad - i\,\bar W_{(2u+1)C + \alpha(c,q),\; bC + \beta(c,p_g)}.
  \end{aligned}
\]
All segments $c\ge C$ store $0$.

Weights are stored as plaintext and can be cheaply pre-permuted to match downstream packing, which only changes plaintext constant construction and adds no FHE cost at inference time.
Under segment-column packing, segment $c$ of $\langle \mathbf{y}^{(b)}\rangle$ stores the length $m$ column
$Y_{:,\,bC+c}$.
A column permutation can therefore be realized by permuting plaintext weight columns so the projection emits $Y$ directly in the target order.

Let $d_2=Hd_h$ and index the attention columns by head $h\in\{0,\dots,H-1\}$ and within-head channel $u\in\{0,\dots,d_h-1\}$.
Write the canonical head-major column index as $k(h,u)=h d_h+u$.
For the score kernel, folded-diagonal packing assigns one segment per head, so it is convenient for $Q$ and $K$ to place the $H$ heads next to each other for each fixed $u$.
We use the score-friendly column order
\[
  \pi_{\mathrm{S}}(h,u)=uH+h.
\]
We pre-permute the plaintext weights by permuting columns as
\[
  \bigl(W_Q^{\pi_{\mathrm{S}}}\bigr)_{r,\,\pi_{\mathrm{S}}(h,u)}=(W_Q)_{r,\,k(h,u)},
\]
\[
  \bigl(W_K^{\pi_{\mathrm{S}}}\bigr)_{r,\,\pi_{\mathrm{S}}(h,u)}=(W_K)_{r,\,k(h,u)}.
\]
Then the projection emits $Q=AW_Q^{\pi_{\mathrm{S}}}$ and $K=AW_K^{\pi_{\mathrm{S}}}$ in the score-friendly segment order consumed by the score kernel.

For the value kernel, we use the value-friendly head-major order that keeps channels of the same head together. This aligns them with the head-stacked segments used in the folded-diagonal weights. We set $\pi_{\mathrm{V}}(h,u)=k(h,u)$ and use $W_V^{\pi_{\mathrm{V}}}=W_V$, so the projection emits $V=AW_V^{\pi_{\mathrm{V}}}$ in head-major order.
In all weight-encoding expressions above, the symbol $W$ refers to the chosen pre-permuted matrix, so no ciphertext repacking is needed between kernels.

\subsubsection{Key-Switching Complexity} Assuming a precomputed bank of $\Phi_C^\Delta$ with cost
$\#\textsf{rot}_{\mathrm{bank}}=O(UC)$ where $U=\lceil G/2\rceil$ as in Section~3.2,
each $\Phi_C^\Delta$ is implemented by $\mathsf{RotFirst}_{Cm}$ and costs two rotations.
The total key switching counts are
\[
  \#\textsf{rot}=O(B_{\mathrm{out}}N_2),
  \qquad
  \#\textsf{conj}=O(B_{\mathrm{out}}).
\]

\subsubsection{Special Fused $QK$ Projection} When two projections share the same ciphertext input, most notably $Q=AW_Q$ and $K=AW_K$, we fuse them into one plaintext--ciphertext projection under CKKS complex packing.
We construct complex plaintext weights $\tilde W$ by combining the two real weight matrices after applying the score-friendly column permutation.
Formally, we form $W_Q^{\pi_{\mathrm{S}}}$ and $W_K^{\pi_{\mathrm{S}}}$ and encode
\[
  \tilde W = W_Q^{\pi_{\mathrm{S}}} + i\,W_K^{\pi_{\mathrm{S}}}.
\]
Running one projection gives $\tilde Y=A\tilde W$ whose real part equals $Q$ and whose imaginary part equals $K$ in the score-friendly segment order consumed by the score kernel.
Fusion is implemented entirely in plaintext preprocessing, and the ciphertext-side procedure is identical to a single projection.
Figure~\ref{fig:cpmm1} shows a toy example of fused projection and weight pre-permutation.

\begin{figure}
  \centering
  \includegraphics[width=\linewidth]{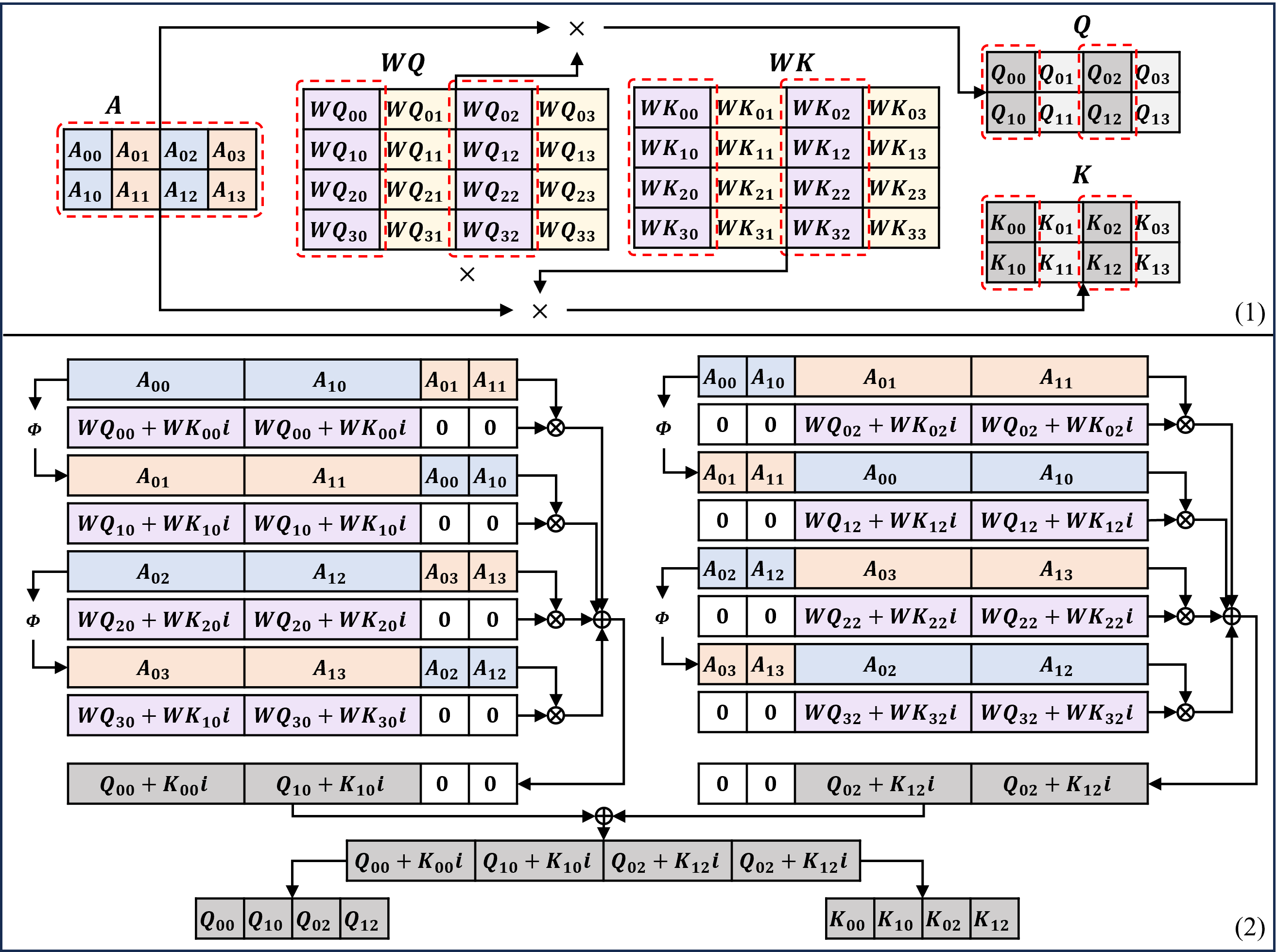}
  \caption{
    (1) Plain matrix multiplication for $Q=AW_Q$ and $K=AW_K$ with $A\in\mathbb{R}^{2\times 4}$ and $W_Q,W_K\in\mathbb{R}^{4\times 4}$.
    (2) Packed CKKS procedure with $n=4$ slots using fused plaintext weights pre-permuted $W_Q+iW_K$, producing $Q$ and $K$ in one projection.
  }
  \label{fig:cpmm1}
\end{figure}

\subsubsection{How the SCP Rules Are Satisfied}
For $V=AW_V$, the projection emits $V$ in the head-major packing expected by the value kernel.
Concretely, for each output block $b$, segment $c\in\{0,\dots,C-1\}$ of $\langle \mathbf{y}^{(b)}\rangle$ stores
$V_{:,\,bC+c}$, so the value kernel consumes the $V$ ciphertext list directly without any ciphertext-side repacking, satisfying Rule~1.
The value kernel consumes the MPC softmax output in minimal folded-diagonal packing, so its MPC-produced input attains
$K_{\min}(P)=\left\lceil \frac{H m^2}{2n}\right\rceil$, satisfying Rule~2.

The value kernel outputs the head-major ciphertext list $\langle O_{\mathrm{hm}}\rangle$. The output projection absorbs the head-major-to-segment-column conversion by pre-permuting its plaintext weight columns (no ciphertext repacking), and therefore consumes $\langle O_{\mathrm{hm}}\rangle$ directly, satisfying Rule~1.
The output projection is followed by an MPC block, so it exports its output in a minimal packing at the FHE--MPC boundary, satisfying Rule~3.

For the feed-forward network, let $\tilde{X}\in\mathbb{R}^{m\times d_2}$ denote the activation imported from MPC after LN1, and let $H\in\mathbb{R}^{m\times d_{\mathrm{ff}}}$ denote the activation output by MPC GELU.
Both FF1 and FF2 consume MPC-produced activations and we import them into CKKS in minimal packing, satisfying Rule~2, so the inputs attain
$K_{\min}(\tilde{X})=\left\lceil \frac{m d_2}{2n}\right\rceil$ and $K_{\min}(H)=\left\lceil \frac{m d_{\mathrm{ff}}}{2n}\right\rceil$.
FF1 computes the pre-activation $G=\tilde{X}W_1\in\mathbb{R}^{m\times d_{\mathrm{ff}}}$ and is followed by MPC GELU, so it exports $G$ in minimal packing, satisfying Rule~3, with
$K_{\min}(G)=\left\lceil \frac{m d_{\mathrm{ff}}}{2n}\right\rceil$.
FF2 computes $\tilde{X}'=HW_2\in\mathbb{R}^{m\times d_2}$ and is followed by an MPC block, so it exports $\tilde{X}'$ in minimal packing, satisfying Rule~3, with
$K_{\min}(\tilde{X}')=\left\lceil \frac{m d_2}{2n}\right\rceil$.

\subsection{Ciphertext--Ciphertext Attention Kernels}
\label{app:he-kernels:qk}

We follow Section~3.3 for the kernel structure and notation.

\subsubsection{Score Kernel Packing}
For each $t\in\{0,\dots,m/2-1\}$, the score kernel outputs a folded-diagonal ciphertext $\langle \mathbf{S}_t\rangle$.
Only the first $Hm$ slots are informative after head reduction, phase alignment, and undoing the baby shift.
Let $\mathrm{cut}(\langle \mathbf{x}\rangle)$ denote the first $Hm$ slots of $\langle \mathbf{x}\rangle$ and define the exported stream
$\mathbf{s}=\mathrm{cut}(\langle \mathbf{S}_0\rangle)\Vert\cdots\Vert \mathrm{cut}(\langle \mathbf{S}_{m/2-1}\rangle)$.
We pack $\mathbf{s}$ sequentially into the minimal number of ciphertexts
$K_{\min}(S)=\left\lceil \frac{Hm^2}{2n}\right\rceil$
(here $N(S)=Hm^2$ because the score matrix has $H$ heads each containing $m^2$ entries, and complex packing stores two real values per slot),
satisfying Rule~3 of SCP.

\subsubsection{Value Kernel Packing}
MPC returns the post-softmax weights in the same minimal folded-diagonal export order as above, namely as the stream
$\mathbf{p}=\mathrm{cut}(\langle \mathbf{P}_0\rangle)\Vert\cdots\Vert \mathrm{cut}(\langle \mathbf{P}_{m/2-1}\rangle)$
packed into $K_{\min}(P)=\left\lceil \frac{Hm^2}{2n}\right\rceil$ ciphertexts.
Let $H_{\mathrm{blk}}$ be the number of heads packed per value ciphertext and let $B_V=\lceil H/H_{\mathrm{blk}}\rceil$.
Under the standard setting $d_h=m/2=64$ used in our implementation, this minimal stream reshapes directly into a head-major folded-diagonal ciphertext list
\[
  \langle P_{\mathrm{fd}}\rangle=\{\langle \mathbf{p}^{(\ell)}_{\mathrm{fd}}\rangle\}_{\ell=0}^{B_V-1}.
\]
Within block $\ell$, segment
\[
  k(\tilde h,t)=\tilde h\,(m/2)+t
\]
with $\tilde h\in\{0,\dots,H_{\mathrm{blk}}-1\}$ and $t\in\{0,\dots,m/2-1\}$
stores the diagonal pair
\[
  p_{t}^{(\ell,\tilde h)} + i\,p_{t+m/2}^{(\ell,\tilde h)}.
\]
This is exactly the representation consumed by the value kernel, so the MPC--FHE boundary remains minimal and satisfies Rule~2 of SCP.

\subsubsection{Score Kernel Block Phase Correction}
Block $\ell$ has head phase $r_\ell=(\ell C)\bmod H$.
When $C\not\equiv 0 \pmod H$, different blocks start at different head offsets.
We sum score contributions by phase and align them inside the first $Hm$ slots.
For $r\in\{0,\dots,H-1\}$ define
\[
  \mathsf{Align}_r\!\left(\langle \mathbf{x}\rangle\right)
  =
  \mathsf{RotFirst}_{Hm}\!\left(\langle \mathbf{x}\rangle,(H-r)m\right).
\]
We apply $\mathsf{Align}_r$ before combining phases to form $\langle \mathbf{S}_t\rangle$.

\subsubsection{Mask Construction}
Let \(\mathbf{e}_s\) be the binary plaintext mask that is one on the \(m\) slots of segment \(s\) and zero elsewhere.
For the score kernel routing step, we use disjoint active-segment masks \(\{\mathbf{m}_c\}_{c=0}^{C-1}\).
When the first \(C\) segments are active, we set \(\mathbf{m}_c=\mathbf{e}_c\).

In the head-major layout, each ciphertext block packs \(H_{\mathrm{blk}}\) heads contiguously.
Segment \(k(\tilde h,u)=\tilde h\,d_h+u\) stores channel \(u\in\{0,\dots,d_h-1\}\) of local head \(\tilde h\in\{0,\dots,H_{\mathrm{blk}}-1\}\).
For the value-kernel broadcast step, define the channel mask
\[
  \mathbf{n}_u
  =
  \sum_{\tilde h=0}^{H_{\mathrm{blk}}-1}
  \mathbf{e}_{\tilde h\,d_h+u},
  \qquad
  u=0,\dots,d_h-1.
\]
This mask keeps exactly the segments with within-head channel offset \(u\) across all local heads in the block.
The value kernel output remains in head-major layout; the subsequent output projection absorbs the layout conversion via weight pre-permutation (no explicit FHE remap stage).

\subsubsection{Key Switching Complexity}
Let $B=\lceil d/C\rceil$ and $B_V=\lceil H/H_{\mathrm{blk}}\rceil$, where $C$ is the channel capacity per ciphertext and $H_{\mathrm{blk}}$ is the number of heads packed per ciphertext block.
Both kernels use precomputed $\Phi$ and $\Psi$ banks.

For the score kernel, each of the $m/2$ diagonals multiplies $B$ channel blocks.
Rotations are dominated by $\Psi$-bank construction and the per-diagonal postprocessing.
\[
  \begin{aligned}
    \#\textsf{ctmul}_S=\#\textsf{relin}_S &= O(mB),\\
    \#\textsf{rot}_S &= O\bigl(B(\beta+g)+m\log C\bigr).
  \end{aligned}
\]
For BERT-base with $B=7$, $\beta=16$, $g=8$, $m=128$, $C=120$, and caching of repeated shifts, the implementation measures 630 rotations and 448 ct-ct multiplications (Table~2).

For the value kernel, each of the $m/2$ offsets uses $B_V$ head blocks.
Rotations are dominated by the token-shift $\Psi$ bank over $m$ offsets and the $\Phi$ broadcasts over $O(d_h)$ segment shifts per block.
\[
  \begin{aligned}
    \#\textsf{ctmul}_V=\#\textsf{relin}_V &= O(mB_V),\\
    \#\textsf{rot}_V &= O\bigl(B_V(m+d_h)\bigr).
  \end{aligned}
\]
For BERT-base with $B_V=6$, $m=128$, $d_h=64$, the implementation measures 1524 rotations and 384 ct-ct multiplications.

\subsubsection{Why the KS Counts Differ from Powerformer}
Powerformer~\cite{park2025powerformer} accelerates attention by applying complex packing to a generic blockwise CCMM under modified column packing. Its blockwise transforms $\tilde\sigma,\tilde\tau,\tilde\phi,\tilde\psi$ and auxiliary routines such as \textsc{BlockTrans}, \textsc{BlockTau}, \textsc{BlockSigma}, \textsc{Extract}, and \textsc{SplitPaste} remain part of the algorithmic skeleton; complex arithmetic mainly reduces the number of calls to these routines by evaluating paired operands together. In that sense, the gain is a constant-factor reduction within the generic CCMM framework rather than a change to the object being computed.

By contrast, \system{} rewrites attention around the structure of the desired outputs. For the score matrix, we do not first form a generic packed product and then extract the needed diagonals. Instead, for $t=j\beta+s$ we directly produce the folded pair $(t,t+m/2)$ by
\[
\langle \mathbf{u}_t\rangle
=
\sum_{\ell}
\langle \mathbf{q}^{(\ell)}_{-s}\rangle
\odot
\bigl(
\langle \mathbf{k}^{(\ell)}_{j\beta}\rangle
+ i\,\langle \mathbf{k}^{(\ell)}_{m/2+j\beta}\rangle
\bigr).
\]
One complex $\textsf{ctmul}$ therefore produces two diagonals at once, and the required shifts come from reusable $\Psi$-banks of sizes $\beta$ and $g=m/\beta$ rather than from a full blockwise permutation-and-extraction pipeline. This is why the score kernel scales as $\#\textsf{ctmul}_S=O(mB)$ and $\#\textsf{rot}_S=O(B(\beta+g)+m\log C)$.

For the value kernel, we again avoid a generic CCMM by using the token-shift identity
\[
\langle \mathbf{u}^{(\ell)}\rangle
=
\langle \mathbf{v}^{(\ell)}\rangle
- i\,\Psi^{m/2}\!\bigl(\langle \mathbf{v}^{(\ell)}\rangle\bigr),
\]
so that the offset pair $(t,t+m/2)$ is handled by one complex operand
\(\langle \mathbf{u}^{(\ell)}_t\rangle=\Psi^{-t}(\langle \mathbf{u}^{(\ell)}\rangle)\).
This halves the token-shift bank from $m$ offsets to $m/2$ effective complex pairs, and the folded-diagonal weights are broadcast with simple head-major masks instead of generic block transforms. The resulting complexity is $\#\textsf{ctmul}_V=O(mB_V)$ and $\#\textsf{rot}_V=O(B_V(m+d_h))$.

The key point is therefore structural: Powerformer obtains a constant-factor compression of a generic matrix-multiplication pipeline, whereas \system{} rewrites attention into diagonal-pair and token-shift-pair primitives that match the geometry of $QK^\top$ and $PV$. Together with SCP eliminating post-kernel remaps, this is the main reason Table~2 shows substantially lower weighted primitive cost for \system{} than for Powerformer.

\section{Protocols for GELU}
\label{app:gelu_protocols}

We employ the GELU approximation and secure evaluation protocol from BOLT~\cite{pang2024bolt}. The approximation is a three-branch piecewise function that saturates outside a bounded region and uses a low-degree polynomial inside. Intuitively, values far in the positive tail are passed through unchanged, values far in the negative tail are clamped to zero, and only the mid-range requires a nontrivial computation. This structure is well-suited to MPC because it reduces the expensive arithmetic to the interval where GELU is most nonlinear, while the outer branches are implemented by inexpensive conditional selection.

For $x\in\mathbb{R}$,
\begin{equation}
  \label{eq:bolt_approx_gelu_piecewise}
  \mathrm{ApproxGELU}(x)=
  \begin{cases}
    x, & x>2.7,\\
    \begin{aligned}
      a|x|^4 + b|x|^3 + c|x|^2 \\
      {}+ d|x| + e + 0.5x,
    \end{aligned}
    & |x|\le 2.7,\\
    0, & x<-2.7.
  \end{cases}
\end{equation}

We follow the same MPC notation and primitives as in Section~4.1. In particular, values are represented as additive shares over $\mathbb{Z}_{2^\ell}$ and XOR-shared bits, and the protocol uses the standard comparison and multiplexing primitives to implement slotwise branching. Let $\mathbf{x}^{(0)},\mathbf{x}^{(1)}\in\mathbb{R}^n$ and pack $\mathbf{x}^{\mathbb{C}}=\mathbf{x}^{(0)}+i\mathbf{x}^{(1)}$.

Our complex conversion in Section~5 exports the real and imaginary channels in one invocation of $\Pi_{\mathrm{C2M}}^{\mathbb{C}}$ and packs the two MPC outputs back with $\Pi_{\mathrm{M2C}}^{\mathbb{C}}$.

Within MPC, the key idea is to evaluate the polynomial branch using ordinary arithmetic on shared values, while using comparisons and multiplexing to select the correct branch for each slot. The absolute value in the mid-range polynomial is handled by first determining the sign of each slot and then choosing between two sign-specialized polynomials that are identical up to the sign of the odd-degree terms. This avoids an explicit absolute-value circuit and reduces the branching logic to a small number of comparisons and multiplexers.

For $j\in\{0,1\}$, define candidate vectors
\begin{equation}
  \label{eq:bolt_gelu_candidates_vec}
  \begin{aligned}
    \mathbf{F}_0^{(j)} &= a(\mathbf{x}^{(j)})^4 - b(\mathbf{x}^{(j)})^3 + c(\mathbf{x}^{(j)})^2 + (0.5-d)\mathbf{x}^{(j)} + e,\\
    \mathbf{F}_1^{(j)} &= a(\mathbf{x}^{(j)})^4 + b(\mathbf{x}^{(j)})^3 + c(\mathbf{x}^{(j)})^2 + (0.5+d)\mathbf{x}^{(j)} + e.
  \end{aligned}
\end{equation}
We use the same polynomial coefficients $(a,b,c,d,e)$ as reported in BOLT~\cite{pang2024bolt}.

\begin{algorithm}[t]
  \caption{Secure GELU with CKKS pre-evaluation}
  \label{alg:secure-gelu-bolt-complex}
  \begin{algorithmic}[1]\raggedright
    \Require $P_1$ holds a complex CKKS ciphertext $\langle \mathbf{x}^{\mathbb{C}}\rangle$ at scale $\Delta$ encrypting
    $\mathbf{x}^{\mathbb{C}}=\mathbf{x}^{(0)}+i\mathbf{x}^{(1)}$ with $\mathbf{x}^{(0)},\mathbf{x}^{(1)}\in\mathbb{R}^n$
    \Require Public $a,b,c,d,e$ and thresholds $-2.7,0,2.7$ in MPC fixed-point
    \Ensure $P_1$ outputs $\langle \mathbf{y}^{\mathbb{C}}\rangle$ with $\mathbf{y}^{\mathbb{C}}\approx \mathbf{y}^{(0)}+i\mathbf{y}^{(1)}$,
    $\mathbf{y}^{(j)}\approx \mathrm{ApproxGELU}(\mathbf{x}^{(j)})$

    \State \textbf{CKKS:} extract real/imag ciphertexts
    \Statex \hspace{\algorithmicindent}
    $\langle \mathbf{x}^{(0)}\rangle \leftarrow \tfrac{1}{2}\bigl(\langle \mathbf{x}^{\mathbb{C}}\rangle+\textsf{conj}(\langle \mathbf{x}^{\mathbb{C}}\rangle)\bigr)$
    \Statex \hspace{\algorithmicindent}
    $\langle \mathbf{x}^{(1)}\rangle \leftarrow \tfrac{1}{2i}\bigl(\langle \mathbf{x}^{\mathbb{C}}\rangle-\textsf{conj}(\langle \mathbf{x}^{\mathbb{C}}\rangle)\bigr)$
    \State \textbf{CKKS:} for $j\in\{0,1\}$, using $\textsf{add}$, $\textsf{ptmul}$, and $\textsf{ctmul}$, compute
    \Statex \hspace{\algorithmicindent}
    $\langle (\mathbf{x}^{(j)})^2\rangle,\langle (\mathbf{x}^{(j)})^3\rangle,\langle (\mathbf{x}^{(j)})^4\rangle$ and form
    \Statex \hspace{\algorithmicindent}
    $\langle \mathbf{F}_0^{(j)}\rangle,\langle \mathbf{F}_1^{(j)}\rangle$ by Eq.~\eqref{eq:bolt_gelu_candidates_vec}

    \State \textbf{CKKS:} let $\mathbf{i}\in\mathbb{C}^n$ be the all-$i$ plaintext vector; pack
    \Statex \hspace{\algorithmicindent}
    $\langle \mathbf{F}_0^{\mathbb{C}}\rangle \leftarrow \langle \mathbf{F}_0^{(0)}\rangle + (\langle \mathbf{F}_0^{(1)}\rangle \odot \mathbf{i})$
    \Statex \hspace{\algorithmicindent}
    $\langle \mathbf{F}_1^{\mathbb{C}}\rangle \leftarrow \langle \mathbf{F}_1^{(0)}\rangle + (\langle \mathbf{F}_1^{(1)}\rangle \odot \mathbf{i})$

    \State \textbf{Convert:}
    $([[\mathbf{x}^{(0)}]]_{2^\ell},[[\mathbf{x}^{(1)}]]_{2^\ell})
    \leftarrow \Pi_{\mathrm{C2M}}^{\mathbb{C}}(\langle \mathbf{x}^{\mathbb{C}}\rangle)$
    \State \textbf{Convert:}
    $([[\mathbf{f}_0^{(0)}]]_{2^\ell},[[\mathbf{f}_0^{(1)}]]_{2^\ell})
    \leftarrow \Pi_{\mathrm{C2M}}^{\mathbb{C}}(\langle \mathbf{F}_0^{\mathbb{C}}\rangle)$
    \State \textbf{Convert:}
    $([[\mathbf{f}_1^{(0)}]]_{2^\ell},[[\mathbf{f}_1^{(1)}]]_{2^\ell})
    \leftarrow \Pi_{\mathrm{C2M}}^{\mathbb{C}}(\langle \mathbf{F}_1^{\mathbb{C}}\rangle)$

    \State \textbf{MPC:} for $j\in\{0,1\}$ set
    $[[\mathbf{b}_{-}^{(j)}]]\!\gets\!\Pi_{\mathrm{cmp}}([[\mathbf{x}^{(j)}]]_{2^\ell},-2.7)$,
    $[[\mathbf{b}_{0}^{(j)}]]\!\gets\!\Pi_{\mathrm{cmp}}([[\mathbf{x}^{(j)}]]_{2^\ell},0)$,
    $[[\mathbf{b}_{+}^{(j)}]]\!\gets\!\Pi_{\mathrm{cmp}}([[\mathbf{x}^{(j)}]]_{2^\ell},2.7)$
    \State \textbf{MPC:} for $j\in\{0,1\}$ derive indicators
    $[[\mathbf{z}_0^{(j)}]] \gets [[\mathbf{b}_{-}^{(j)}]]\oplus[[\mathbf{b}_{0}^{(j)}]]$,
    $[[\mathbf{z}_1^{(j)}]] \gets [[\mathbf{b}_{0}^{(j)}]]\oplus[[\mathbf{b}_{+}^{(j)}]]$,
    $[[\mathbf{z}_2^{(j)}]] \gets 1 \oplus [[\mathbf{b}_{+}^{(j)}]]$
    \State \textbf{MPC:} for $j\in\{0,1\}$ output
    $[[\mathbf{y}^{(j)}]] \gets
    \Pi_{\mathrm{mux}}([[\mathbf{f}_0^{(j)}]],[[\mathbf{z}_0^{(j)}]])
    +\Pi_{\mathrm{mux}}([[\mathbf{f}_1^{(j)}]],[[\mathbf{z}_1^{(j)}]])
    +\Pi_{\mathrm{mux}}([[\mathbf{x}^{(j)}]],[[\mathbf{z}_2^{(j)}]])$

    \State \textbf{Convert:} $\langle \mathbf{y}^{\mathbb{C}}\rangle \leftarrow \Pi_{\mathrm{M2C}}^{\mathbb{C}}([[\mathbf{y}^{(0)}]]_{2^\ell},[[\mathbf{y}^{(1)}]]_{2^\ell})$
    \State \textbf{Output} $\langle \mathbf{y}^{\mathbb{C}}\rangle$
  \end{algorithmic}
\end{algorithm}

In an alternative pipeline, we evaluate the mid-range polynomial branch entirely in MPC after a single complex CKKS-to-MPC conversion and then convert the final outputs back to CKKS, as shown in Algorithm~\ref{alg:secure-gelu-bolt-mpcpoly-complex}.

\begin{algorithm}[t]
  \caption{Secure GELU with no pre-evaluation}
  \label{alg:secure-gelu-bolt-mpcpoly-complex}
  \begin{algorithmic}[1]\raggedright
    \Require $P_1$ holds a complex CKKS ciphertext $\langle \mathbf{x}^{\mathbb{C}}\rangle$ at scale $\Delta$
    encrypting $\mathbf{x}^{\mathbb{C}}=\mathbf{x}^{(0)}+i\mathbf{x}^{(1)}$, with $\mathbf{x}^{(0)},\mathbf{x}^{(1)}\in\mathbb{R}^n$
    \Require Public $a,b,c,d,e$ and thresholds $-2.7,0,2.7$ in MPC fixed-point over $\mathbb{Z}_{2^\ell}$
    \Ensure $P_1$ outputs $\langle \mathbf{y}^{\mathbb{C}}\rangle$ with $\mathbf{y}^{\mathbb{C}}\approx \mathbf{y}^{(0)}+i\mathbf{y}^{(1)}$,
    $\mathbf{y}^{(j)}\approx \mathrm{ApproxGELU}(\mathbf{x}^{(j)})$

    \State \textbf{Convert:} $([[\mathbf{x}^{(0)}]]_{2^\ell},[[\mathbf{x}^{(1)}]]_{2^\ell}) \leftarrow
    \Pi_{\mathrm{C2M}}^{\mathbb{C}}(\langle \mathbf{x}^{\mathbb{C}}\rangle)$

    \State \textbf{MPC:} for $j\in\{0,1\}$ compute powers and candidates
    \For{$j\in\{0,1\}$}
    \State $[[\mathbf{x}_2^{(j)}]] \gets \Pi_{\times}([[\mathbf{x}^{(j)}]]_{2^\ell},[[\mathbf{x}^{(j)}]]_{2^\ell})$
    \State $[[\mathbf{x}_3^{(j)}]] \gets \Pi_{\times}([[\mathbf{x}_2^{(j)}]],[[\mathbf{x}^{(j)}]]_{2^\ell})$
    \State $[[\mathbf{x}_4^{(j)}]] \gets \Pi_{\times}([[\mathbf{x}_2^{(j)}]],[[\mathbf{x}_2^{(j)}]])$
    \State $[[\mathbf{f}_0^{(j)}]] \gets
    a[[\mathbf{x}_4^{(j)}]] - b[[\mathbf{x}_3^{(j)}]] + c[[\mathbf{x}_2^{(j)}]] + (0.5-d)[[\mathbf{x}^{(j)}]]_{2^\ell} + e$
    \State $[[\mathbf{f}_1^{(j)}]] \gets
    a[[\mathbf{x}_4^{(j)}]] + b[[\mathbf{x}_3^{(j)}]] + c[[\mathbf{x}_2^{(j)}]] + (0.5+d)[[\mathbf{x}^{(j)}]]_{2^\ell} + e$
    \EndFor

    \State \textbf{MPC:} for $j\in\{0,1\}$ set
    $[[\mathbf{b}_{-}^{(j)}]]\!\gets\!\Pi_{\mathrm{cmp}}([[\mathbf{x}^{(j)}]]_{2^\ell},-2.7)$,
    $[[\mathbf{b}_{0}^{(j)}]]\!\gets\!\Pi_{\mathrm{cmp}}([[\mathbf{x}^{(j)}]]_{2^\ell},0)$,
    $[[\mathbf{b}_{+}^{(j)}]]\!\gets\!\Pi_{\mathrm{cmp}}([[\mathbf{x}^{(j)}]]_{2^\ell},2.7)$
    \State \textbf{MPC:} for $j\in\{0,1\}$ derive indicators
    $[[\mathbf{z}_0^{(j)}]] \gets [[\mathbf{b}_{-}^{(j)}]]\oplus[[\mathbf{b}_{0}^{(j)}]]$,
    $[[\mathbf{z}_1^{(j)}]] \gets [[\mathbf{b}_{0}^{(j)}]]\oplus[[\mathbf{b}_{+}^{(j)}]]$,
    $[[\mathbf{z}_2^{(j)}]] \gets 1 \oplus [[\mathbf{b}_{+}^{(j)}]]$
    \State \textbf{MPC:} for $j\in\{0,1\}$ output
    $[[\mathbf{y}^{(j)}]] \gets
    \Pi_{\mathrm{mux}}([[\mathbf{f}_0^{(j)}]],[[\mathbf{z}_0^{(j)}]])
    +\Pi_{\mathrm{mux}}([[\mathbf{f}_1^{(j)}]],[[\mathbf{z}_1^{(j)}]])
    +\Pi_{\mathrm{mux}}([[\mathbf{x}^{(j)}]]_{2^\ell},[[\mathbf{z}_2^{(j)}]])$

    \State \textbf{Convert:} $\langle \mathbf{y}^{\mathbb{C}}\rangle \leftarrow
    \Pi_{\mathrm{M2C}}^{\mathbb{C}}([[\mathbf{y}^{(0)}]]_{2^\ell},[[\mathbf{y}^{(1)}]]_{2^\ell})$
    \State \textbf{Output} $\langle \mathbf{y}^{\mathbb{C}}\rangle$
  \end{algorithmic}
\end{algorithm}

\section{Ring and Field Conversion}
\label{app:ring-field}

We describe two-party primitives for converting additive shares between an odd-modulus ring $\mathbb{Z}_q$ and a power-of-two ring $\mathbb{Z}_{2^\ell}$, following the extension-based approach of BOLT, SiRNN, and BLB~\cite{pang2024bolt,rathee2021sirnn,xu2025blb}.
In \system{}, $q$ is the CKKS ciphertext modulus at the conversion boundary, i.e., $q=Q_{L_{\mathrm{conv}}}$.
For additive sharing over $\mathbb{Z}_M$, we write $[[u]]_M$ where $u_0+u_1\equiv u\pmod M$.
We interpret ring elements with signed fixed-point semantics via the centered representative
\(\mathrm{cl}_M(u)\in(-M/2,M/2]\), defined as the unique integer congruent to \(u \bmod M\) that lies in this interval.
We use a signed extension protocol $\Pi_{\mathrm{Ext}}$ that lifts shares to a larger power-of-two ring while preserving $\mathrm{cl}_M(\cdot)$ except with probability at most $2^{-\sigma}$.

\subsection{Field-to-ring conversion}
$\Pi_{\mathrm{Field2Ring}}$ converts $[[x]]_q$ to $[[x]]_{2^\ell}$.
Invoke $\Pi^{q,2^{\ell'}}_{\mathrm{Ext}}$ with $\ell' \ge \max(\lceil \log_2 q\rceil,\ell)$ to obtain shares in $\mathbb{Z}_{2^{\ell'}}$ that represent $\mathrm{cl}_q(x)$ (with error $\le 2^{-\sigma}$), then each party locally reduces its share modulo $2^\ell$.
We choose $\ell$ so that all valid values satisfy $|\mathrm{cl}_q(x)|<2^{\ell-1}$.

\subsection{Ring-to-field conversion}
$\Pi_{\mathrm{Ring2Field}}$ converts $[[m]]_{2^\ell}$ to $[[m]]_q$.
Invoke $\Pi_{\mathrm{Ext}}$ to lift to $\mathbb{Z}_{2^{\ell+\sigma}}$, obtaining representatives
$m'_0,m'_1\in[0,2^{\ell+\sigma})$ such that, except with probability $2^{-\sigma}$, the \emph{integer} sum satisfies
\[
  m'_0+m'_1 = 2^{\ell+\sigma}+\mathrm{cl}_{2^\ell}(m).
\]
(The $2^{\ell+\sigma}$ offset encodes the signed value as a non-negative representative in $\mathbb{Z}_{2^{\ell+\sigma}}$.)
Each party then maps locally into $\mathbb{Z}_q$ by
\[
  [[m]]^{(q)}_0 = m'_0 \bmod q,\qquad
  [[m]]^{(q)}_1 = (m'_1-2^{\ell+\sigma}) \bmod q,
\]
so that $[[m]]^{(q)}_0+[[m]]^{(q)}_1 \equiv \mathrm{cl}_{2^\ell}(m)\pmod q$ except with probability $2^{-\sigma}$.

\subsection{Security}
Both conversions consist of one invocation of $\Pi_{\mathrm{Ext}}$ followed by local computations.
If $\Pi_{\mathrm{Ext}}$ is simulation-secure, then the conversion view can be simulated from the input/output shares, leaking nothing beyond public parameters, up to statistical error $2^{-\sigma}$.

\section{Cost Analysis Model Discussion}
\label{app:costmodel}

We use the latency decomposition in Eq.~(3) and the decision rule in Eq.~(4) to compare boundary choices via $\Delta T_{\text{conv}}$ and $\Delta T_{\text{comp}}$. A boundary change can trade off two effects: it can reduce computation time by moving operators to the cheaper domain, and it can increase or decrease the conversion overhead at domain boundaries. These effects are captured by $\Delta T_{\text{comp}}$ and $\Delta T_{\text{conv}}$, respectively.

The minimal-conversion baseline $B$ defined in Section~5.1 serves as a normalization reference for boundary payload under a fixed tensor shape. It does not claim to minimize end-to-end latency and is not an optimizer by itself. We use Eq.~(4) as a screening tool to estimate the sign of the latency change when moving from $B$ to a candidate design $D$.

To illustrate how the cost model can guide boundary placement, we apply it to the GELU boundary decision under PhantomFHE across 12 instances: 3 models $\times$ 4 network profiles. The model estimates ``Expand'' for all BERT-large profiles and for WAN2/WAN3 on BERT-base and GPT2-base, and ``Minimal'' for LAN/WAN1 on the smaller models --- a non-trivial 8/4 split that illustrates the model can surface trade-offs varying across deployment settings. We note that $\Delta T_{\text{ckks}}$ and $C_{\text{round}}$ are measured under specific backend and network conditions, so the model's estimates should be treated as directional rather than precise.

In practice, $T_{\text{mpc}}$ can also vary across network profiles because interactive MPC primitives are communication-bound. As bandwidth decreases or round-trip latency increases, the pipeline can shift from compute-dominated to message-dominated behavior, which changes the relative weight of $\Delta T_{\text{conv}}$ and $\Delta T_{\text{comp}}$. Since the deployment network may be unknown at design time, we recommend evaluating candidate boundary placements across multiple network profiles before committing to a design.

\section{MPC Round Count Comparison}
\label{app:mpc_round_comparison}

Table~\ref{tab:mpc_round_comparison} compares interactive MPC rounds per layer. FHE-only systems (THOR, Powerformer) use zero MPC rounds but require bootstrapping. The BOLT and BLB counts are analytically derived from their published protocol descriptions. BOLT's softmax uses 6 rounds (tree max-finding for $L{=}64$). BLB adopts BumbleBee's softmax protocol~\cite{lu2025bumblebee}, which requires 18 rounds: $\lceil\log_2 L\rceil$ rounds for tree max-finding ($L{=}128$), 6 sequential squarings for negExp, plus comparison, mux, reciprocal, and scaling. \system{} uses MBMax (3 rounds) and BOLT-style GELU (4 rounds); MBLN requires zero interactive rounds because rowsum, broadcast, and affine scaling are all local operations on additive shares.

\begin{table}[t]
\centering
\footnotesize
\setlength{\tabcolsep}{4.5pt}
\renewcommand{\arraystretch}{1.15}
\caption{Interactive MPC rounds per layer. BOLT and BLB counts are analytically derived from published protocols. BS = bootstrapping invocations.}
\begin{tabular}{@{}l|cccc|cc@{}}
\toprule
 & Softmax & LN1 & GELU & LN2 & Total & BS \\
\midrule
THOR        & 0 & 0 & 0 & 0 & 0 & 8 \\
Powerformer & 0 & 0 & 0 & 0 & 0 & 3 \\
BOLT        & 6 & 4 & 4 & 4 & 18 & 0 \\
BLB         & 18 & 4 & 4 & 4 & 30 & 0 \\
\system{}   & 3 & 0 & 4 & 0 & 7 & 0 \\
\bottomrule
\end{tabular}
\label{tab:mpc_round_comparison}
\end{table}

\section{Proof of Proposition~1}
\label{app:scp-proof}

\begin{proof}
\textbf{(i) Zero remap.}
SCP matches all three direct FHE--FHE edges (QKV$\to$Score, QKV$\to$Value, Value$\to$OUT) via weight pre-permutation, and the three MPC-boundary edges satisfy Rules~2 and~3 with minimal packing.
Any alternative assignment that changes a producer--consumer matching on an FHE--FHE edge necessarily introduces $\mathrm{out}(K_i)\neq\mathrm{in}(K_j)$, since no other packing in~$\mathcal{F}$ satisfies the consumer kernel without repacking.

\textbf{$\Omega(m\log m)$ rotation lower bound.}
A remap between two distinct packings in $\mathcal{F}$ requires a general slot permutation on each ciphertext block. In the segment-column layout with $m$ rows, the permutation acts on $m$ slot groups. By the Halevi--Shoup decomposition~\cite{halevi2014algorithms}, any such permutation requires at least $\log_2 m$ Galois automorphisms (rotations), each involving a key-switching operation. With $B$ ciphertext blocks affected per edge, the total cost is $B \cdot \log_2 m$ rotations per edge, giving $\Omega(m\log m)$ across the $\Theta(m)$ affected slots per block.

\textbf{(ii) Minimal boundary count.}
At each FHE--MPC boundary, Rules~2 and~3 require that ciphertexts use minimal packing with $K_{\min}$ ciphertexts. Any expanded packing uses $K > K_{\min}$ ciphertexts, each of which must pass through the conversion protocol, increasing boundary communication proportionally.

\textbf{Quantitative example.}
For BERT-base ($m{=}128$, $C{=}128$, $\log_2 m{=}7$), disabling SCP adds 168 extra rotations per layer across the three FHE--FHE edges ($24$ blocks $\times$ $7$ rotations), or ${\sim}0.39$\,s per layer (${\sim}9\%$ of per-layer FHE time).
\end{proof}

\section{SCP Edge Analysis for All Models}
\label{app:scp-edges}

Table~\ref{tab:scp_edge_all_models} extends the BERT-base SCP analysis from the main text to all three evaluated models. Without SCP, each direct FHE--FHE edge requires $\log_2 m$ rotations per ciphertext block for slot permutation. MPC-boundary edges incur no FHE remapping because MPC resets the data format.

\begin{table}[t]
\centering
\scriptsize
\setlength{\tabcolsep}{4pt}
\renewcommand{\arraystretch}{1.15}
\caption{SCP remap analysis for all models ($t_{\mathrm{rot}}=2.32$\,ms).}
\label{tab:scp_edge_all_models}
\begin{tabular}{@{}l|rrr@{}}
\toprule
 & BERT-b & BERT-l & GPT2-b \\
\midrule
QKV(Q,K)$\to$Score & 84 & 56 & 36 \\
QKV(V)$\to$Value   & 42 & 28 & 18 \\
Value$\to$OUT      & 42 & 28 & 18 \\
\midrule
\#rot/layer        & 168 & 112 & 72 \\
\#rot $\times$ $t_{\mathrm{rot}}$ (ms) & 390 & 260 & 167 \\
Layers             & 12 & 24 & 12 \\
Predicted (s)      & 4.68 & 6.24 & 2.00 \\
Observed (s)       & ${\sim}$12 & ${\sim}$48 & ${\sim}$12 \\
Obs./Pred.         & 2.6$\times$ & 7.7$\times$ & 6.0$\times$ \\
\bottomrule
\end{tabular}
\end{table}

The observed overhead is 2.6--7.7$\times$ the rotation-only prediction because the actual slot permutation also involves plaintext-multiply masking and ciphertext addition for recombination, which are $O(1)$ per rotation but not counted in the rotation proxy.

\subsection{Repacking Algorithm in the \texttt{w/o~SCP} Ablation}
\label{app:repack-algo}

\noindent\textbf{Halevi--Shoup slot-permutation decomposition.}
Applying an arbitrary permutation to the slots of a CKKS ciphertext cannot be done in a single operation.
Halevi and Shoup~\cite{halevi2014algorithms} show that any target slot permutation $\pi$ can be expressed as a linear combination of rotations:
\[
\mathrm{Repack}(\langle\mathbf{v}\rangle)
= \sum_{k=0}^{\log_2 m - 1}
  \mathrm{Rot}\!\left(\langle\mathbf{v}\rangle,\, 2^k\right)
  \cdot \mathbf{m}_k,
\]
where each $\mathbf{m}_k$ is a plaintext binary mask that selects the slots whose target position is reached by a rotation of $2^k$.
This decomposes the permutation into $\log_2 m$ rotation--mask--accumulate (RMA) steps.
Each RMA step requires one key-switched rotation, one plaintext multiplication, and one ciphertext addition, giving $\log_2 m$ rotations, $\log_2 m$ plaintext multiplications, and $\log_2 m$ additions per ciphertext block.

In the \texttt{w/o SCP} ablation (Table~11), we insert a generic slot-permutation repack at every direct FHE--FHE edge. The repacking converts one packing format to another by cycling through $\log_2 m$ RMA steps as described above.

Concretely, the ablation inserts repacking at the three direct FHE--FHE edges per layer identified in Section~3.1.1: (1)~QKV$\to$Score remaps Q and K from segment-column to the score-friendly layout ($2 \times \lceil d/C\rceil$ blocks, $\log_2 m$ rotations each); (2)~QKV$\to$Value remaps V from segment-column to head-major order ($\lceil d/C\rceil$ blocks); and (3)~Value$\to$OUT remaps the value output from head-major to segment-column ($\lceil d/C\rceil$ blocks). Edges that cross an MPC boundary (e.g.\ Score$\to$Softmax, OUT$\to$LN) are not affected because MPC resets the data format via decryption and re-encryption.
The total rotation count per layer is the sum reported in Table~\ref{tab:scp_edge_all_models}. The observed latency gap exceeds the rotation-only estimate because each rotation step is accompanied by a masking multiplication and an accumulation addition, and the large number of sequential CKKS operations incurs additional GPU kernel-launch and memory-transfer overhead.

\section{Cost Model Validation Details}
\label{app:costmodel-validation}

\subsection{GELU Boundary Predictions}

Table~\ref{tab:cost_model_summary} summarizes the cost model's boundary decision across both backends. PhantomFHE produces a non-trivial 8/4 split, while Liberate.FHE predicts Minimal for all 12 cases due to its $18\times$ higher CKKS compute time. Table~\ref{tab:gelu_validation_full} provides the full per-instance breakdown for PhantomFHE.

\begin{table}[t]
\centering
\footnotesize
\caption{Cost model boundary decision summary.}
\label{tab:cost_model_summary}
\resizebox{\columnwidth}{!}{%
\begin{tabular}{@{}l|cccc|cccc@{}}
\toprule
 & \multicolumn{4}{c|}{PhantomFHE} & \multicolumn{4}{c}{Liberate.FHE} \\
Model & LAN & WAN1 & WAN2 & WAN3 & LAN & WAN1 & WAN2 & WAN3 \\
\midrule
BERT-base  & Min & Min & Exp & Exp & Min & Min & Min & Min \\
BERT-large & Exp & Exp & Exp & Exp & Min & Min & Min & Min \\
GPT2-base  & Min & Min & Exp & Exp & Min & Min & Min & Min \\
\bottomrule
\end{tabular}%
}
\end{table}

Table~\ref{tab:gelu_validation_full} reports the full prediction-vs-observation validation for the GELU boundary decision under PhantomFHE. The model estimates ``Expand'' for 8 instances and ``Minimal'' for 4, namely LAN/WAN1 on BERT-base and GPT2-base, and the estimates agree with the observed latency direction in all cases. The tightest positive margin is on WAN2 GPT2-base ($\Delta T = -42$\,ms), while the tightest negative margin is on WAN1 GPT2-base ($\Delta T = +52$\,ms).

\begin{table}[t]
\centering
\footnotesize
\setlength{\tabcolsep}{3.5pt}
\renewcommand{\arraystretch}{1.1}
\caption{GELU boundary prediction under PhantomFHE ($t_{\mathrm{rot}}=2.32$\,ms, $t_{\mathrm{ctmul}}=2.65$\,ms). 8 instances predict ``Expand,'' 4 predict ``Minimal.'' $K_{\text{extra}}$ is the additional complex ciphertexts for $F_0,F_1$.}
\label{tab:gelu_validation_full}
\begin{tabular}{@{}l|l|r|rrr|l@{}}
\toprule
Model & Profile & $K_{\text{extra}}$ & $\Delta T_{\text{conv}}$ & $\Delta T_{\text{comp}}$ & $\Delta T$ & Decision \\
\midrule
BERT-base & LAN  & 24 & $+$0.20\,s & $-$0.04\,s & $+$0.16\,s & Minimal \\
          & WAN1 & 24 & $+$0.50\,s & $-$0.39\,s & $+$0.12\,s & Minimal \\
          & WAN2 & 24 & $+$2.01\,s & $-$2.09\,s & $-$0.07\,s & Expand \\
          & WAN3 & 24 & $+$2.01\,s & $-$2.31\,s & $-$0.30\,s & Expand \\
\midrule
BERT-large & LAN  & 16 & $+$0.13\,s & $-$0.18\,s & $-$0.04\,s & Expand \\
           & WAN1 & 16 & $+$0.34\,s & $-$0.64\,s & $-$0.30\,s & Expand \\
           & WAN2 & 16 & $+$1.34\,s & $-$2.90\,s & $-$1.56\,s & Expand \\
           & WAN3 & 16 & $+$1.34\,s & $-$3.13\,s & $-$1.79\,s & Expand \\
\midrule
GPT2-base & LAN  & 12 & $+$0.10\,s & $-$0.02\,s & $+$0.08\,s & Minimal \\
          & WAN1 & 12 & $+$0.25\,s & $-$0.20\,s & $+$0.05\,s & Minimal \\
          & WAN2 & 12 & $+$1.01\,s & $-$1.05\,s & $-$0.04\,s & Expand \\
          & WAN3 & 12 & $+$1.01\,s & $-$1.28\,s & $-$0.27\,s & Expand \\
\bottomrule
\end{tabular}
\end{table}

\subsection{Liberate.FHE Decision Boundary}

The cost model includes the CKKS pre-evaluation cost of the GELU polynomial candidates $F_0$ and $F_1$ as a negative $\Delta_{\mathrm{ctmul}}$ term in Eq.~Eq.~(6). Each degree-5 polynomial evaluation requires 3 ctmuls per complex ciphertext block, giving $6 \lceil B_{\mathrm{ff}}/2\rceil$ extra ctmuls total (e.g., 72 for BERT-base). Under PhantomFHE ($t_{\mathrm{ctmul}}=2.65$\,ms), this adds only $\sim$0.19\,s and is outweighed by MPC round savings on higher-latency profiles. Under Liberate.FHE ($t_{\mathrm{ctmul}}=48.66$\,ms, $18\times$ PhantomFHE), the same 72 ctmuls cost $\sim$3.5\,s, far exceeding MPC savings, so all 12 Liberate instances correctly predict ``Minimal.'' This backend-dependent discrimination validates that the model captures a real trade-off rather than a degenerate one.

\section{Total Communication Including Conversion Payload}
\label{app:total_comm}

\noindent\textbf{Cross-paper comparison note.}
We follow the cross-paper comparison methodology commonly used in this area: the baseline metrics reported throughout this paper are taken from the corresponding publications under their stated experimental settings, and our claims against prior hybrid systems should be interpreted in that sense rather than as a controlled re-implementation on a single platform. Because each prior system uses its own CKKS library, GPU hardware, and MPC implementation, a fully matched head-to-head rerun is often infeasible. Likewise, our WAN latency figures are obtained by applying an analytical network overlay to measured communication volume and MPC round counts, rather than by cross-site deployment; this is the same evaluation methodology adopted by prior hybrid Transformer systems such as BOLT, BumbleBee, and BLB. Section~7.3 therefore complements the main comparison with a matched-backend study that isolates library effects.

Table~\ref{tab:total_comm} estimates total communication (online MPC traffic plus FHE--MPC conversion payload) for BERT-base. The conversion payload per layer is computed from the ciphertext counts in Table~4 at 256\,KB per ciphertext ($n=16384$, $\sim$15 RNS primes). For all systems, conversion payload is small relative to online MPC traffic (0.2--0.5\,GB vs.\ 2--64\,GB), so the relative ranking is preserved under total communication.

\begin{table}[ht]
\centering
\footnotesize
\setlength{\tabcolsep}{5pt}
\caption{Estimated total communication for BERT-base (12 layers).}
\label{tab:total_comm}
\begin{tabular}{@{}l|rrr@{}}
\toprule
Framework & Online (GB) & Conv.\ payload (GB) & Total (GB) \\
\midrule
BOLT       & 63.6 & 0.28 & 63.9 \\
BumbleBee  &  5.8 & 0.28 &  6.1 \\
BLB        &  3.0 & 0.49 &  3.5 \\
\system{}  &  2.2 & 0.21 &  \textbf{2.4} \\
\bottomrule
\end{tabular}
\end{table}

\section{Additional Generality Examples}
\label{app:generality}

The SCP packing rules and the boundary cost model are stated in terms of kernel composition rather than Transformer-specific semantics. To illustrate this, we apply the same reasoning to two analytical examples.

\subsection{2-Layer CNN}

Consider a CNN with Conv1 ($3\times3$, 16 filters) $\to$ ReLU $\to$ Conv2 ($3\times3$, 32 filters) $\to$ ReLU $\to$ FC, operating on $28\times28$ input images with $n=8192$ CKKS slots. Conv1 uses 144 rotations ($9 \times 16$), Conv2 uses 288 rotations ($9 \times 32$), and the FC layer uses 10 rotations. Without co-designed packing, the Conv1$\to$ReLU and Conv2$\to$ReLU boundaries each require repacking (6 ciphertext blocks total), costing $\sim$14\,ms under PhantomFHE primitive latencies.

\subsection{3-Layer MLP}

Consider an MLP with FC1 ($768\to3072$) $\to$ GELU $\to$ FC2 ($3072\to3072$) $\to$ GELU $\to$ FC3 ($3072\to768$), using $n=16384$ slots and $m=128$, matching the BERT-base hidden and FFN dimensions. With $C=128$ segments per ciphertext, FC1 and FC2 each produce $\lceil 3072/128\rceil=24$ output blocks. Without co-designed packing, FC1$\to$GELU and FC2$\to$GELU require repacking 48 blocks at $\log_2(128)=7$ rotations each, totaling 336 extra rotations ($\sim$111\,ms). With the GELU boundary cost model, each GELU block also admits an ``Expand'' option analogous to the Transformer GELU pre-evaluation, saving 3 MPC rounds per block at the cost of $2\times8=16$ extra complex ciphertexts per GELU.

These examples are analytical illustrations; full empirical evaluation on non-Transformer workloads is left to future work.

\section{Powerformer-Style Distillation}
\label{app:powerformer_training}

For each GLUE task, we fine-tune a plaintext BERT-base teacher with standard nonlinearities following Powerformer~\cite{park2025powerformer}. We then distill an HE-compatible student that keeps the same macro-architecture but replaces all nonlinear components with the approximations used in our protocol. Distillation helps the student adapt to the changed activation and normalization behavior introduced by these approximations while preserving task accuracy.

Let the teacher be $f_t(x;W_t)$ and the student be $f_s(x;W_s,\theta)$. For input~$x$ with label~$y$ and logits $z_t,z_s$, we minimize
\begin{align*}
\mathcal{L}
&=
\lambda_{\mathrm{sup}}\,\mathcal{L}_{\mathrm{CE}}(z_s,y)
+\lambda_{\mathrm{kd}}\,\mathcal{L}_{\mathrm{KD}}(z_s,z_t)\\
&\quad+\lambda_{\mathrm{int}}
\sum_{\ell\in\mathcal{S}}
\left\lVert h^{(\ell)}_s - h^{(\ell)}_t \right\rVert_2^2 .
\end{align*}
where \(h^{(\ell)}_s,h^{(\ell)}_t\) are matched intermediate tensors and \(\mathcal{S}\) is the set of layers used. The logit distillation term uses temperature \(T\)
\begin{align*}
  \mathcal{L}_{\mathrm{KD}}(z_s,z_t)
  =
  T^2\ \mathrm{KL}\!\left(
    \mathrm{softmax}\!\left(\frac{z_t}{T}\right)
    \ \bigg\|\
    \mathrm{softmax}\!\left(\frac{z_s}{T}\right)
  \right).
  \label{eq:distill_kd}
\end{align*}

In \system{}, we learn the Softmax and LayerNorm parameters \(\theta\) jointly with student weights \(W_s\) during distillation, and deploy \((W_s,\theta)\) unchanged for encrypted inference. We follow Powerformer-style hyperparameters with batch size 64 and learning rate \(5\times10^{-5}\). We also use early stopping with an extended distillation budget and seeds \(\{0,42,777\}\).

\end{document}